\pretocmd{\blx@head@bibintoc}{\phantomsection}{}{\ddt}
\def\blx@bblfile@bibtex{%
 \blx@secinit
 \begingroup
 \blx@bblstart
%
%
\begingroup
\makeatletter
\@ifundefined{ver@biblatex.sty}
  {\@latex@error
     {Missing 'biblatex' package}
     {The bibliography requires the 'biblatex' package.}
      \aftergroup }
  {}
\endgroup

\entry{tomamichel_quantum_2016}{book}{}
  \name{author}{1}{}{%
    {{}%
     {Tomamichel}{T.}%
     {Marco}{M.}%
     {}{}%
     {}{}}%
  }
  \list{publisher}{1}{%
    {{Springer International Publishing}}%
  }
  \strng{namehash}{TM1}
  \strng{fullhash}{TM1}
  \verb{eprint}
  \verb 1504.00233
  \endverb
  \field{isbn}{978-3-319-21890-8 978-3-319-21891-5}
  \field{series}{SpringerBriefs in Mathematical Physics}
  \field{title}{Quantum Information Processing with Finite Resources}
  \verb{url}
  \verb http://link.springer.com/10.1007/978-3-319-21891-5
  \endverb
  \field{volume}{5}
  \list{location}{1}{%
    {{Cham}}%
  }
  \field{eprinttype}{arxiv}
  \field{eprintclass}{quant-ph}
  \field{year}{2016}
  \field{urlday}{25}
  \field{urlmonth}{11}
  \field{urlyear}{2015}
  \warn{\item Can't use 'eprinttype' + 'archiveprefix'}
\endentry

\entry{petz_quasi-entropies_1986}{article}{}
  \name{author}{1}{}{%
    {{}%
     {Petz}{P.}%
     {D{\'e}nes}{D.}%
     {}{}%
     {}{}}%
  }
  \strng{namehash}{PD1}
  \strng{fullhash}{PD1}
  \verb{doi}
  \verb 10.1016/0034-4877(86)90067-4
  \endverb
  \field{issn}{0034-4877}
  \field{number}{1}
  \field{pages}{57\bibrangedash 65}
  \field{title}{Quasi-entropies for finite quantum systems}
  \verb{url}
  \verb http://www.sciencedirect.com/science/article/pii/0034487786900674
  \endverb
  \field{volume}{23}
  \field{journaltitle}{Reports on Mathematical Physics}
  \field{month}{02}
  \field{year}{1986}
\endentry

\entry{muller-lennert_quantum_2013}{article}{}
  \name{author}{5}{}{%
    {{}%
     {M{\"u}ller-Lennert}{M.-L.}%
     {Martin}{M.}%
     {}{}%
     {}{}}%
    {{}%
     {Dupuis}{D.}%
     {Fr{\'e}d{\'e}ric}{F.}%
     {}{}%
     {}{}}%
    {{}%
     {Szehr}{S.}%
     {Oleg}{O.}%
     {}{}%
     {}{}}%
    {{}%
     {Fehr}{F.}%
     {Serge}{S.}%
     {}{}%
     {}{}}%
    {{}%
     {Tomamichel}{T.}%
     {Marco}{M.}%
     {}{}%
     {}{}}%
  }
  \strng{namehash}{MLMDFSOFSTM1}
  \strng{fullhash}{MLMDFSOFSTM1}
  \verb{doi}
  \verb 10.1063/1.4838856
  \endverb
  \verb{eprint}
  \verb 1306.3142
  \endverb
  \field{issn}{0022-2488, 1089-7658}
  \field{number}{12}
  \field{pages}{21}
  \field{shorttitle}{On quantum {{R{\'e}nyi}} entropies}
  \field{title}{On quantum {{R{\'e}nyi}} entropies: A new generalization and
  some properties}
  \verb{url}
  \verb http://scitation.aip.org/content/aip/journal/jmp/54/12/10.1063/1.483885
  \verb 6
  \endverb
  \field{volume}{54}
  \field{journaltitle}{Journal of Mathematical Physics}
  \field{eprinttype}{arxiv}
  \field{eprintclass}{quant-ph}
  \field{day}{01}
  \field{month}{12}
  \field{year}{2013}
  \field{urlday}{12}
  \field{urlmonth}{03}
  \field{urlyear}{2015}
  \warn{\item Can't use 'eprinttype' + 'archiveprefix'}
\endentry

\entry{wilde_strong_2014}{article}{}
  \name{author}{3}{}{%
    {{}%
     {Wilde}{W.}%
     {Mark~M.}{M.~M.}%
     {}{}%
     {}{}}%
    {{}%
     {Winter}{W.}%
     {Andreas}{A.}%
     {}{}%
     {}{}}%
    {{}%
     {Yang}{Y.}%
     {Dong}{D.}%
     {}{}%
     {}{}}%
  }
  \strng{namehash}{WMMWAYD1}
  \strng{fullhash}{WMMWAYD1}
  \verb{doi}
  \verb 10.1007/s00220-014-2122-x
  \endverb
  \verb{eprint}
  \verb 1306.1586
  \endverb
  \field{issn}{0010-3616, 1432-0916}
  \field{number}{2}
  \field{pages}{593\bibrangedash 622}
  \field{shortjournal}{Commun. Math. Phys.}
  \field{title}{Strong Converse for the Classical Capacity of
  Entanglement-Breaking and {{Hadamard}} Channels via a Sandwiched
  {{R{\'e}nyi}} Relative Entropy}
  \verb{url}
  \verb http://link.springer.com/article/10.1007/s00220-014-2122-x
  \endverb
  \field{volume}{331}
  \field{langid}{english}
  \field{journaltitle}{Communications in Mathematical Physics}
  \field{eprinttype}{arxiv}
  \field{eprintclass}{quant-ph}
  \field{day}{15}
  \field{month}{07}
  \field{year}{2014}
  \field{urlday}{12}
  \field{urlmonth}{03}
  \field{urlyear}{2015}
  \warn{\item Can't use 'eprinttype' + 'archiveprefix'}
\endentry

\entry{lieb_inequalities_1976}{incollection}{}
  \name{author}{2}{}{%
    {{}%
     {Lieb}{L.}%
     {E.~H.}{E.~H.}%
     {}{}%
     {}{}}%
    {{}%
     {Thirring}{T.}%
     {Walter}{W.}%
     {}{}%
     {}{}}%
  }
  \list{publisher}{1}{%
    {{Princeton University Press}}%
  }
  \strng{namehash}{LEHTW1}
  \strng{fullhash}{LEHTW1}
  \field{booktitle}{Studies in {{Mathematical Physics}}: {{Essays}} in
  {{Honor}} of {{Valentine Bargmann}}}
  \field{pages}{269\bibrangedash 304}
  \field{series}{Princeton Series in Physics}
  \field{title}{Inequalities for the Moments of the Eigenvalues of the
  {{Schr{\"o}dinger}} {{Hamiltonian}} and Their Relation to {{Sobolev}}
  Inequalities}
  \verb{url}
  \verb http://www.jstor.org/stable/j.ctt13x134j.16
  \endverb
  \field{year}{1976}
\endentry

\entry{araki_inequality_1990}{article}{}
  \name{author}{1}{}{%
    {{}%
     {Araki}{A.}%
     {Huzihiro}{H.}%
     {}{}%
     {}{}}%
  }
  \strng{namehash}{AH1}
  \strng{fullhash}{AH1}
  \verb{doi}
  \verb 10.1007/BF01045887
  \endverb
  \field{issn}{0377-9017, 1573-0530}
  \field{number}{2}
  \field{pages}{167\bibrangedash 170}
  \field{shortjournal}{Lett Math Phys}
  \field{title}{On an inequality of {{Lieb}} and {{Thirring}}}
  \verb{url}
  \verb http://link.springer.com/article/10.1007/BF01045887
  \endverb
  \field{volume}{19}
  \field{langid}{english}
  \field{journaltitle}{Letters in Mathematical Physics}
  \field{month}{02}
  \field{year}{1990}
  \field{urlday}{05}
  \field{urlmonth}{07}
  \field{urlyear}{2016}
\endentry

\entry{barnum_reversing_2002}{article}{}
  \name{author}{2}{}{%
    {{}%
     {Barnum}{B.}%
     {H.}{H.}%
     {}{}%
     {}{}}%
    {{}%
     {Knill}{K.}%
     {E.}{E.}%
     {}{}%
     {}{}}%
  }
  \strng{namehash}{BHKE1}
  \strng{fullhash}{BHKE1}
  \verb{eprint}
  \verb quant-ph/0004088
  \endverb
  \field{number}{5}
  \field{pages}{2097\bibrangedash 2106}
  \field{shortjournal}{J. Math. Phys.}
  \field{title}{Reversing quantum dynamics with near-optimal quantum and
  classical fidelity}
  \verb{url}
  \verb http://link.aip.org/link/?JMP/43/2097/1
  \endverb
  \field{volume}{43}
  \field{journaltitle}{Journal of Mathematical Physics}
  \field{eprinttype}{arxiv}
  \field{month}{05}
  \field{year}{2002}
  \field{urlday}{14}
  \field{urlmonth}{02}
  \field{urlyear}{2008}
  \warn{\item Can't use 'eprinttype' + 'archiveprefix'}
\endentry

\entry{dupuis_entanglement_2015}{article}{}
  \name{author}{3}{}{%
    {{}%
     {Dupuis}{D.}%
     {F.}{F.}%
     {}{}%
     {}{}}%
    {{}%
     {Fawzi}{F.}%
     {O.}{O.}%
     {}{}%
     {}{}}%
    {{}%
     {Wehner}{W.}%
     {S.}{S.}%
     {}{}%
     {}{}}%
  }
  \strng{namehash}{DFFOWS1}
  \strng{fullhash}{DFFOWS1}
  \verb{doi}
  \verb 10.1109/TIT.2014.2371464
  \endverb
  \verb{eprint}
  \verb 1305.1316
  \endverb
  \field{issn}{0018-9448}
  \field{number}{2}
  \field{pages}{1093\bibrangedash 1112}
  \field{title}{Entanglement Sampling and Applications}
  \field{volume}{61}
  \field{journaltitle}{IEEE Transactions on Information Theory}
  \field{eprinttype}{arxiv}
  \field{eprintclass}{quant-ph}
  \field{month}{02}
  \field{year}{2015}
  \warn{\item Can't use 'eprinttype' + 'archiveprefix'}
\endentry

\entry{bhatia_matrix_1997}{book}{}
  \name{author}{1}{}{%
    {{}%
     {Bhatia}{B.}%
     {Rajendra}{R.}%
     {}{}%
     {}{}}%
  }
  \list{publisher}{1}{%
    {{Springer}}%
  }
  \strng{namehash}{BR1}
  \strng{fullhash}{BR1}
  \field{isbn}{978-1-4612-6857-4 978-1-4612-0653-8}
  \field{series}{Graduate Texts in Mathematics}
  \field{title}{Matrix {{Analysis}}}
  \verb{url}
  \verb http://link.springer.com/10.1007/978-1-4612-0653-8
  \endverb
  \field{volume}{169}
  \list{location}{1}{%
    {{New York}}%
  }
  \field{year}{1997}
  \field{urlday}{28}
  \field{urlmonth}{07}
  \field{urlyear}{2016}
\endentry

\entry{audenaert_araki-lieb-thirring_2008}{article}{}
  \name{author}{1}{}{%
    {{}%
     {Audenaert}{A.}%
     {Koenraad M.~R.}{K.~M.~R.}%
     {}{}%
     {}{}}%
  }
  \strng{namehash}{AKMR1}
  \strng{fullhash}{AKMR1}
  \verb{eprint}
  \verb math/0701129
  \endverb
  \field{number}{1}
  \field{pages}{78\bibrangedash 83}
  \field{title}{On the {{Araki-Lieb-Thirring}} inequality}
  \verb{url}
  \verb http://www.math.ualberta.ca/ijiss/SS-Volume-4-2008/No-1-08/SS-08-01-08.
  \verb pdf
  \endverb
  \field{volume}{4}
  \field{journaltitle}{International Journal of Information and Systems
  Sciences}
  \field{eprinttype}{arxiv}
  \field{year}{2008}
  \field{urlday}{15}
  \field{urlmonth}{05}
  \field{urlyear}{2016}
  \warn{\item Can't use 'eprinttype' + 'archiveprefix'}
\endentry

\entry{ando94}{article}{}
  \name{author}{1}{}{%
    {{}%
     {Ando}{A.}%
     {T.}{T.}%
     {}{}%
     {}{}}%
  }
  \strng{namehash}{AT1}
  \strng{fullhash}{AT1}
  \verb{doi}
  \verb 10.1016/0024-3795(94)90341-7
  \endverb
  \field{issn}{0024-3795}
  \field{pages}{17 \bibrangedash  67}
  \field{title}{Majorizations and inequalities in matrix theory}
  \verb{url}
  \verb http://www.sciencedirect.com/science/article/pii/0024379594903417
  \endverb
  \field{volume}{199}
  \field{journaltitle}{Linear Algebra and its Applications}
  \field{year}{1994}
\endentry

\entry{hiai_equality_1994}{article}{}
  \name{author}{1}{}{%
    {{}%
     {Hiai}{H.}%
     {Fumio}{F.}%
     {}{}%
     {}{}}%
  }
  \strng{namehash}{HF1}
  \strng{fullhash}{HF1}
  \verb{doi}
  \verb 10.1080/03081089408818297
  \endverb
  \field{issn}{0308-1087}
  \field{number}{4}
  \field{pages}{239\bibrangedash 249}
  \field{title}{Equality cases in matrix norm inequalities of
  {{Golden-Thompson}} type}
  \verb{url}
  \verb http://dx.doi.org/10.1080/03081089408818297
  \endverb
  \field{volume}{36}
  \field{journaltitle}{Linear and Multilinear Algebra}
  \field{day}{01}
  \field{month}{04}
  \field{year}{1994}
  \field{urlday}{29}
  \field{urlmonth}{07}
  \field{urlyear}{2016}
\endentry

\entry{mosonyi_coding_2015}{article}{}
  \name{author}{1}{}{%
    {{}%
     {Mosonyi}{M.}%
     {Milan}{M.}%
     {}{}%
     {}{}}%
  }
  \strng{namehash}{MM1}
  \strng{fullhash}{MM1}
  \verb{doi}
  \verb 10.1109/TIT.2015.2417877
  \endverb
  \verb{eprint}
  \verb 1310.7525
  \endverb
  \field{issn}{0018-9448, 1557-9654}
  \field{number}{6}
  \field{pages}{2997\bibrangedash 3012}
  \field{title}{Coding Theorems for Compound Problems via Quantum {{R{\'e}nyi}}
  Divergences}
  \verb{url}
  \verb http://ieeexplore.ieee.org/lpdocs/epic03/wrapper.htm?arnumber=7086060
  \endverb
  \field{volume}{61}
  \field{journaltitle}{IEEE Transactions on Information Theory}
  \field{eprinttype}{arxiv}
  \field{eprintclass}{quant-ph}
  \field{month}{06}
  \field{year}{2015}
  \field{urlday}{05}
  \field{urlmonth}{07}
  \field{urlyear}{2016}
  \warn{\item Can't use 'eprinttype' + 'archiveprefix'}
\endentry

\entry{renner_smooth_2004}{inproceedings}{}
  \name{author}{2}{}{%
    {{}%
     {Renner}{R.}%
     {R.}{R.}%
     {}{}%
     {}{}}%
    {{}%
     {Wolf}{W.}%
     {S.}{S.}%
     {}{}%
     {}{}}%
  }
  \strng{namehash}{RRWS1}
  \strng{fullhash}{RRWS1}
  \field{booktitle}{Proceedings of the 2004 {{International Symposium on
  Information Theory}} ({{ISIT}})}
  \verb{doi}
  \verb 10.1109/ISIT.2004.1365269
  \endverb
  \field{isbn}{0-7803-8280-3}
  \field{pages}{233}
  \field{title}{Smooth {{R\'enyi}} entropy and applications}
  \verb{url}
  \verb http://ieeexplore.ieee.org/xpl/freeabs_all.jsp?arnumber=1365269
  \endverb
  \field{year}{2004}
\endentry

\entry{tomamichel_fully_2009}{article}{}
  \name{author}{3}{}{%
    {{}%
     {Tomamichel}{T.}%
     {Marco}{M.}%
     {}{}%
     {}{}}%
    {{}%
     {Colbeck}{C.}%
     {Roger}{R.}%
     {}{}%
     {}{}}%
    {{}%
     {Renner}{R.}%
     {Renato}{R.}%
     {}{}%
     {}{}}%
  }
  \strng{namehash}{TMCRRR1}
  \strng{fullhash}{TMCRRR1}
  \verb{doi}
  \verb 10.1109/TIT.2009.2032797
  \endverb
  \verb{eprint}
  \verb 0811.1221
  \endverb
  \field{number}{12}
  \field{pages}{5840\bibrangedash 5847}
  \field{title}{A Fully Quantum Asymptotic Equipartition Property}
  \field{volume}{55}
  \field{journaltitle}{IEEE Transactions on Information Theory}
  \field{eprinttype}{arxiv}
  \field{eprintclass}{quant-ph}
  \field{day}{17}
  \field{month}{11}
  \field{year}{2009}
  \warn{\item Can't use 'eprinttype' + 'archiveprefix'}
\endentry

\entry{tomamichel_relating_2014}{article}{}
  \name{author}{3}{}{%
    {{}%
     {Tomamichel}{T.}%
     {Marco}{M.}%
     {}{}%
     {}{}}%
    {{}%
     {Berta}{B.}%
     {Mario}{M.}%
     {}{}%
     {}{}}%
    {{}%
     {Hayashi}{H.}%
     {Masahito}{M.}%
     {}{}%
     {}{}}%
  }
  \strng{namehash}{TMBMHM1}
  \strng{fullhash}{TMBMHM1}
  \verb{doi}
  \verb 10.1063/1.4892761
  \endverb
  \verb{eprint}
  \verb 1311.3887
  \endverb
  \field{issn}{0022-2488, 1089-7658}
  \field{number}{8}
  \field{pages}{082206}
  \field{title}{Relating different quantum generalizations of the conditional
  {{R{\'e}nyi}} entropy}
  \verb{url}
  \verb http://scitation.aip.org/content/aip/journal/jmp/55/8/10.1063/1.4892761
  \endverb
  \field{volume}{55}
  \field{journaltitle}{Journal of Mathematical Physics}
  \field{eprinttype}{arxiv}
  \field{eprintclass}{quant-ph}
  \field{day}{01}
  \field{month}{08}
  \field{year}{2014}
  \field{urlday}{22}
  \field{urlmonth}{09}
  \field{urlyear}{2014}
  \warn{\item Can't use 'eprinttype' + 'archiveprefix'}
\endentry

\entry{beigi_sandwiched_2013}{article}{}
  \name{author}{1}{}{%
    {{}%
     {Beigi}{B.}%
     {Salman}{S.}%
     {}{}%
     {}{}}%
  }
  \strng{namehash}{BS1}
  \strng{fullhash}{BS1}
  \verb{doi}
  \verb 10.1063/1.4838855
  \endverb
  \verb{eprint}
  \verb 1306.5920
  \endverb
  \field{issn}{0022-2488, 1089-7658}
  \field{number}{12}
  \field{pages}{122202}
  \field{title}{Sandwiched {{R{\'e}nyi}} divergence satisfies data processing
  inequality}
  \verb{url}
  \verb http://scitation.aip.org/content/aip/journal/jmp/54/12/10.1063/1.483885
  \verb 5
  \endverb
  \field{volume}{54}
  \field{journaltitle}{Journal of Mathematical Physics}
  \field{eprinttype}{arxiv}
  \field{eprintclass}{quant-ph}
  \field{day}{01}
  \field{month}{12}
  \field{year}{2013}
  \field{urlday}{12}
  \field{urlmonth}{03}
  \field{urlyear}{2015}
  \warn{\item Can't use 'eprinttype' + 'archiveprefix'}
\endentry

\entry{konig_operational_2009}{article}{}
  \name{author}{3}{}{%
    {{}%
     {K{\"o}nig}{K.}%
     {Robert}{R.}%
     {}{}%
     {}{}}%
    {{}%
     {Renner}{R.}%
     {Renato}{R.}%
     {}{}%
     {}{}}%
    {{}%
     {Schaffner}{S.}%
     {Christian}{C.}%
     {}{}%
     {}{}}%
  }
  \strng{namehash}{KRRRSC1}
  \strng{fullhash}{KRRRSC1}
  \verb{doi}
  \verb 10.1109/TIT.2009.2025545
  \endverb
  \verb{eprint}
  \verb 0807.1338
  \endverb
  \field{issn}{0018-9448}
  \field{number}{9}
  \field{pages}{4337\bibrangedash 4347}
  \field{shortjournal}{Information Theory, IEEE Transactions on}
  \field{title}{The Operational Meaning of Min- and Max-Entropy}
  \field{volume}{55}
  \field{journaltitle}{Information Theory, IEEE Transactions on}
  \field{eprinttype}{arxiv}
  \field{eprintclass}{quant-ph}
  \field{year}{2009}
  \warn{\item Can't use 'eprinttype' + 'archiveprefix'}
\endentry

\entry{berta_single-shot_2008}{thesis}{}
  \name{author}{1}{}{%
    {{}%
     {Berta}{B.}%
     {Mario}{M.}%
     {}{}%
     {}{}}%
  }
  \strng{namehash}{BM1}
  \strng{fullhash}{BM1}
  \verb{eprint}
  \verb 0912.4495
  \endverb
  \field{title}{Single-Shot Quantum State Merging}
  \verb{url}
  \verb http://arxiv.org/abs/0912.4495v1
  \endverb
  \list{institution}{1}{%
    {{ETH Z{\"u}rich}}%
  }
  \field{type}{Diplom {{Thesis}}}
  \field{eprinttype}{arxiv}
  \field{eprintclass}{quant-ph}
  \field{year}{2008}
\endentry

\entry{luo_informational_2004}{article}{}
  \name{author}{2}{}{%
    {{}%
     {Luo}{L.}%
     {Shunlong}{S.}%
     {}{}%
     {}{}}%
    {{}%
     {Zhang}{Z.}%
     {Qiang}{Q.}%
     {}{}%
     {}{}}%
  }
  \strng{namehash}{LSZQ1}
  \strng{fullhash}{LSZQ1}
  \verb{doi}
  \verb 10.1103/PhysRevA.69.032106
  \endverb
  \field{number}{3}
  \field{pages}{032106}
  \field{shortjournal}{Phys. Rev. A}
  \field{title}{Informational Distance on Quantum-State Space}
  \verb{url}
  \verb http://link.aps.org/doi/10.1103/PhysRevA.69.032106
  \endverb
  \field{volume}{69}
  \field{journaltitle}{Physical Review A}
  \field{day}{16}
  \field{month}{03}
  \field{year}{2004}
  \field{urlday}{22}
  \field{urlmonth}{03}
  \field{urlyear}{2016}
\endentry

\entry{winter_extrinsic_2004}{article}{}
  \name{author}{1}{}{%
    {{}%
     {Winter}{W.}%
     {Andreas}{A.}%
     {}{}%
     {}{}}%
  }
  \strng{namehash}{WA1}
  \strng{fullhash}{WA1}
  \verb{doi}
  \verb 10.1007/s00220-003-0989-z
  \endverb
  \verb{eprint}
  \verb quant-ph/0109050
  \endverb
  \field{issn}{0010-3616, 1432-0916}
  \field{number}{1}
  \field{pages}{157\bibrangedash 185}
  \field{title}{"{{Extrinsic}}" and "Intrinsic" Data in Quantum Measurements:
  Asymptotic Convex Decomposition of Positive Operator Valued Measures}
  \verb{url}
  \verb http://www.springerlink.com/content/4yeebnxm2y1qnckl/
  \endverb
  \field{volume}{244}
  \field{journaltitle}{Communications in Mathematical Physics}
  \field{eprinttype}{arxiv}
  \field{day}{01}
  \field{month}{01}
  \field{year}{2004}
  \field{urlday}{13}
  \field{urlmonth}{03}
  \field{urlyear}{2012}
  \warn{\item Can't use 'eprinttype' + 'archiveprefix'}
\endentry

\entry{uhlmann_transition_1976}{article}{}
  \name{author}{1}{}{%
    {{}%
     {Uhlmann}{U.}%
     {A.}{A.}%
     {}{}%
     {}{}}%
  }
  \strng{namehash}{UA1}
  \strng{fullhash}{UA1}
  \verb{doi}
  \verb 10.1016/0034-4877(76)90060-4
  \endverb
  \field{number}{2}
  \field{pages}{273\bibrangedash 279}
  \field{title}{The "transition probability" in the state space of a *-algebra}
  \verb{url}
  \verb http://www.sciencedirect.com/science/article/B6VN0-45FCR8M-1H/2/a63a0ee
  \verb 13819a2a70bf4b4eb8b704235
  \endverb
  \field{volume}{9}
  \field{journaltitle}{Reports on Mathematical Physics}
  \field{month}{04}
  \field{year}{1976}
  \field{urlday}{17}
  \field{urlmonth}{03}
  \field{urlyear}{2008}
\endentry

\entry{audenaert_comparisons_2014}{article}{}
  \name{author}{1}{}{%
    {{}%
     {Audenaert}{A.}%
     {Koenraad M.~R.}{K.~M.~R.}%
     {}{}%
     {}{}}%
  }
  \strng{namehash}{AKMR1}
  \strng{fullhash}{AKMR1}
  \verb{eprint}
  \verb 1207.1197
  \endverb
  \field{issue}{1\&2}
  \field{pages}{31\bibrangedash 38}
  \field{title}{Comparisons between Quantum State Distinguishability Measures}
  \verb{url}
  \verb http://www.rintonpress.com/journals/qicabstracts/qicabstracts14-12.html
  \endverb
  \field{volume}{14}
  \field{journaltitle}{Quantum Information and Computation}
  \field{eprinttype}{arxiv}
  \field{eprintclass}{quant-ph}
  \field{day}{01}
  \field{month}{01}
  \field{year}{2014}
  \warn{\item Can't use 'eprinttype' + 'archiveprefix'}
\endentry

\entry{fuchs_cryptographic_1999}{article}{useprefix}
  \name{author}{2}{}{%
    {{}%
     {Fuchs}{F.}%
     {C.A.}{C.}%
     {}{}%
     {}{}}%
    {{}%
     {Graaf}{G.}%
     {J.}{J.}%
     {van~de}{v.~d.}%
     {}{}}%
  }
  \strng{namehash}{FCvdGJ1}
  \strng{fullhash}{FCvdGJ1}
  \verb{doi}
  \verb 10.1109/18.761271
  \endverb
  \verb{eprint}
  \verb quant-ph/9712042
  \endverb
  \field{issn}{0018-9448}
  \field{number}{4}
  \field{pages}{1216\bibrangedash 1227}
  \field{title}{Cryptographic distinguishability measures for
  quantum-mechanical states}
  \field{volume}{45}
  \field{journaltitle}{IEEE Transactions on Information Theory}
  \field{eprinttype}{arxiv}
  \field{year}{1999}
  \warn{\item Can't use 'eprinttype' + 'archiveprefix'}
\endentry

\entry{powers_free_1970}{article}{}
  \name{author}{2}{}{%
    {{}%
     {Powers}{P.}%
     {Robert~T.}{R.~T.}%
     {}{}%
     {}{}}%
    {{}%
     {St\o{}rmer}{S.}%
     {Erling}{E.}%
     {}{}%
     {}{}}%
  }
  \strng{namehash}{PRTSE1}
  \strng{fullhash}{PRTSE1}
  \verb{doi}
  \verb 10.1007/BF01645492
  \endverb
  \field{issn}{0010-3616, 1432-0916}
  \field{number}{1}
  \field{pages}{1\bibrangedash 33}
  \field{shortjournal}{Comm. Math. Phys.}
  \field{title}{Free states of the canonical anticommutation relations}
  \verb{url}
  \verb http://projecteuclid.org/euclid.cmp/1103842028
  \endverb
  \field{volume}{16}
  \field{journaltitle}{Communications in Mathematical Physics}
  \field{year}{1970}
  \field{urlday}{15}
  \field{urlmonth}{05}
  \field{urlyear}{2016}
\endentry

\entry{berta_entanglement-assisted_2014}{article}{}
  \name{author}{3}{}{%
    {{}%
     {Berta}{B.}%
     {Mario}{M.}%
     {}{}%
     {}{}}%
    {{}%
     {Coles}{C.}%
     {Patrick~J.}{P.~J.}%
     {}{}%
     {}{}}%
    {{}%
     {Wehner}{W.}%
     {Stephanie}{S.}%
     {}{}%
     {}{}}%
  }
  \strng{namehash}{BMCPJWS1}
  \strng{fullhash}{BMCPJWS1}
  \verb{doi}
  \verb 10.1103/PhysRevA.90.062127
  \endverb
  \verb{eprint}
  \verb 1302.5902
  \endverb
  \field{number}{6}
  \field{pages}{062127}
  \field{shortjournal}{Phys. Rev. A}
  \field{title}{Entanglement-assisted guessing of complementary measurement
  outcomes}
  \verb{url}
  \verb http://link.aps.org/doi/10.1103/PhysRevA.90.062127
  \endverb
  \field{volume}{90}
  \field{journaltitle}{Physical Review A}
  \field{eprinttype}{arxiv}
  \field{eprintclass}{quant-ph}
  \field{day}{22}
  \field{month}{12}
  \field{year}{2014}
  \field{urlday}{11}
  \field{urlmonth}{06}
  \field{urlyear}{2016}
  \warn{\item Can't use 'eprinttype' + 'archiveprefix'}
\endentry

\entry{buhrman_possibility_2008}{article}{}
  \name{author}{5}{}{%
    {{}%
     {Buhrman}{B.}%
     {Harry}{H.}%
     {}{}%
     {}{}}%
    {{}%
     {Christandl}{C.}%
     {Matthias}{M.}%
     {}{}%
     {}{}}%
    {{}%
     {Hayden}{H.}%
     {Patrick}{P.}%
     {}{}%
     {}{}}%
    {{}%
     {Lo}{L.}%
     {Hoi-Kwong}{H.-K.}%
     {}{}%
     {}{}}%
    {{}%
     {Wehner}{W.}%
     {Stephanie}{S.}%
     {}{}%
     {}{}}%
  }
  \strng{namehash}{BHCMHPLHKWS1}
  \strng{fullhash}{BHCMHPLHKWS1}
  \verb{doi}
  \verb 10.1103/PhysRevA.78.022316
  \endverb
  \verb{eprint}
  \verb 1302.5902
  \endverb
  \field{number}{2}
  \field{pages}{022316}
  \field{shortjournal}{Phys. Rev. A}
  \field{title}{Possibility, impossibility, and cheat sensitivity of
  quantum-bit string commitment}
  \verb{url}
  \verb http://link.aps.org/doi/10.1103/PhysRevA.78.022316
  \endverb
  \field{volume}{78}
  \field{journaltitle}{Physical Review A}
  \field{eprinttype}{arxiv}
  \field{eprintclass}{quant-ph}
  \field{day}{11}
  \field{month}{08}
  \field{year}{2008}
  \field{urlday}{11}
  \field{urlmonth}{06}
  \field{urlyear}{2016}
  \warn{\item Can't use 'eprinttype' + 'archiveprefix'}
\endentry

\entry{belavkin_optimal_1975}{article}{}
  \name{author}{1}{}{%
    {{}%
     {Belavkin}{B.}%
     {V.~P.}{V.~P.}%
     {}{}%
     {}{}}%
  }
  \strng{namehash}{BVP1}
  \strng{fullhash}{BVP1}
  \verb{doi}
  \verb 10.1080/17442507508833114
  \endverb
  \field{issn}{0090-9491}
  \field{number}{1}
  \field{pages}{315}
  \field{title}{Optimal multiple quantum statistical hypothesis testing}
  \verb{url}
  \verb http://www.informaworld.com/10.1080/17442507508833114
  \endverb
  \field{volume}{1}
  \field{journaltitle}{Stochastics}
  \field{year}{1975}
  \field{urlday}{04}
  \field{urlmonth}{03}
  \field{urlyear}{2010}
\endentry

\entry{hausladen_`pretty_1994}{article}{}
  \name{author}{2}{}{%
    {{}%
     {Hausladen}{H.}%
     {Paul}{P.}%
     {}{}%
     {}{}}%
    {{}%
     {Wootters}{W.}%
     {William~K.}{W.~K.}%
     {}{}%
     {}{}}%
  }
  \strng{namehash}{HPWWK1}
  \strng{fullhash}{HPWWK1}
  \verb{doi}
  \verb 10.1080/09500349414552221
  \endverb
  \field{issn}{0950-0340}
  \field{number}{12}
  \field{pages}{2385}
  \field{title}{A `Pretty Good' Measurement for Distinguishing Quantum States}
  \verb{url}
  \verb http://www.informaworld.com/10.1080/09500349414552221
  \endverb
  \field{volume}{41}
  \field{journaltitle}{Journal of Modern Optics}
  \field{year}{1994}
  \field{urlday}{17}
  \field{urlmonth}{03}
  \field{urlyear}{2008}
\endentry

\entry{dalla_pozza_optimality_2015}{article}{}
  \name{author}{2}{}{%
    {{}%
     {Dalla~Pozza}{D.~P.}%
     {Nicola}{N.}%
     {}{}%
     {}{}}%
    {{}%
     {Pierobon}{P.}%
     {Gianfranco}{G.}%
     {}{}%
     {}{}}%
  }
  \strng{namehash}{DPNPG1}
  \strng{fullhash}{DPNPG1}
  \verb{doi}
  \verb 10.1103/PhysRevA.91.042334
  \endverb
  \verb{eprint}
  \verb 1504.04908
  \endverb
  \field{number}{4}
  \field{pages}{042334}
  \field{shortjournal}{Phys. Rev. A}
  \field{title}{Optimality of square-root measurements in quantum state
  discrimination}
  \verb{url}
  \verb http://link.aps.org/doi/10.1103/PhysRevA.91.042334
  \endverb
  \field{volume}{91}
  \field{journaltitle}{Physical Review A}
  \field{eprinttype}{arxiv}
  \field{eprintclass}{quant-ph}
  \field{day}{27}
  \field{month}{04}
  \field{year}{2015}
  \field{urlday}{08}
  \field{urlmonth}{07}
  \field{urlyear}{2016}
  \warn{\item Can't use 'eprinttype' + 'archiveprefix'}
\endentry

\entry{helstrom_bayes-cost_1982}{article}{}
  \name{author}{1}{}{%
    {{}%
     {Helstrom}{H.}%
     {C.}{C.}%
     {}{}%
     {}{}}%
  }
  \keyw{Authentication, Computational Intelligence Society, Cost function,
  Cryptography, Electrons, Equations, Feedback, Hilbert space, Quantum
  detection, Sufficient conditions, Testing}
  \strng{namehash}{HC1}
  \strng{fullhash}{HC1}
  \field{abstract}{%
  An iterative procedure is described for reducing the Bayes cost in decisions
  among quantum hypotheses by minimizing the average cost in binary decisions
  between all possible pairs of hypotheses: the resulting decision strategy is
  a projection-valued measure and yields an upper bound to the minimum
  attainable Bayes cost. From it is derived an algorithm for finding the
  optimum measurement states for choosing among linearly independent pure
  states with minimum probability of error. The method is also applied to
  decisions among unimodal coherent quantum signals in thermal noise.%
  }
  \verb{doi}
  \verb 10.1109/TIT.1982.1056470
  \endverb
  \field{issn}{0018-9448}
  \field{number}{2}
  \field{pages}{359\bibrangedash 366}
  \field{title}{Bayes-cost reduction algorithm in quantum hypothesis testing
  {{(Corresp.)}}}
  \field{volume}{28}
  \verb{file}
  \verb IEEE Xplore Abstract Record:/Users/raban/Library/Application Support/Zo
  \verb tero/Profiles/i6ug0eiv.default/zotero/storage/5BMFSIAZ/articleDetails.h
  \verb tml:text/html;IEEE Xplore Full Text PDF:/Users/raban/Library/Applicatio
  \verb n Support/Zotero/Profiles/i6ug0eiv.default/zotero/storage/4P3KJ8U3/Hels
  \verb trom - 1982 - Bayes-cost reduction algorithm in quantum hypothes.pdf:ap
  \verb plication/pdf
  \endverb
  \field{journaltitle}{{IEEE} Transactions on Information Theory}
  \field{month}{03}
  \field{year}{1982}
\endentry

\entry{tomamichel_leftover_2011}{article}{}
  \name{author}{4}{}{%
    {{}%
     {Tomamichel}{T.}%
     {M.}{M.}%
     {}{}%
     {}{}}%
    {{}%
     {Schaffner}{S.}%
     {C.}{C.}%
     {}{}%
     {}{}}%
    {{}%
     {Smith}{S.}%
     {A.}{A.}%
     {}{}%
     {}{}}%
    {{}%
     {Renner}{R.}%
     {R.}{R.}%
     {}{}%
     {}{}}%
  }
  \strng{namehash}{TMSCSARR1}
  \strng{fullhash}{TMSCSARR1}
  \verb{doi}
  \verb 10.1109/TIT.2011.2158473
  \endverb
  \verb{eprint}
  \verb 1002.2436
  \endverb
  \field{issn}{0018-9448}
  \field{number}{8}
  \field{pages}{5524\bibrangedash 5535}
  \field{title}{Leftover Hashing Against Quantum Side Information}
  \field{volume}{57}
  \field{langid}{english}
  \field{journaltitle}{IEEE Transactions on Information Theory}
  \field{eprinttype}{arxiv}
  \field{eprintclass}{quant-ph}
  \field{month}{08}
  \field{year}{2011}
  \warn{\item Can't use 'eprinttype' + 'archiveprefix'}
\endentry

\entry{renes_one-shot_2012}{article}{}
  \name{author}{2}{}{%
    {{}%
     {Renes}{R.}%
     {J.~M}{J.~M}%
     {}{}%
     {}{}}%
    {{}%
     {Renner}{R.}%
     {R.}{R.}%
     {}{}%
     {}{}}%
  }
  \strng{namehash}{RJMRR1}
  \strng{fullhash}{RJMRR1}
  \verb{doi}
  \verb 10.1109/TIT.2011.2177589
  \endverb
  \verb{eprint}
  \verb 1008.0452
  \endverb
  \field{issn}{0018-9448}
  \field{number}{3}
  \field{pages}{1985\bibrangedash 1991}
  \field{title}{One-Shot Classical Data Compression With Quantum Side
  Information and the Distillation of Common Randomness or Secret Keys}
  \field{volume}{58}
  \field{langid}{english}
  \field{journaltitle}{IEEE Transactions on Information Theory}
  \field{eprinttype}{arxiv}
  \field{eprintclass}{quant-ph}
  \field{month}{03}
  \field{year}{2012}
  \warn{\item Can't use 'eprinttype' + 'archiveprefix'}
\endentry

\entry{golden65}{article}{}
  \name{author}{1}{}{%
    {{}%
     {Golden}{G.}%
     {Sidney}{S.}%
     {}{}%
     {}{}}%
  }
  \list{publisher}{1}{%
    {American Physical Society}%
  }
  \strng{namehash}{GS1}
  \strng{fullhash}{GS1}
  \verb{doi}
  \verb 10.1103/PhysRev.137.B1127
  \endverb
  \field{issue}{4B}
  \field{pages}{B1127\bibrangedash B1128}
  \field{title}{Lower Bounds for the {H}elmholtz Function}
  \verb{url}
  \verb http://link.aps.org/doi/10.1103/PhysRev.137.B1127
  \endverb
  \field{volume}{137}
  \field{journaltitle}{Physical Review}
  \field{year}{1965}
  \warn{\item Invalid format of field 'month'}
\endentry

\entry{thompson65}{article}{}
  \name{author}{1}{}{%
    {{}%
     {Thompson}{T.}%
     {Colin~J.}{C.~J.}%
     {}{}%
     {}{}}%
  }
  \strng{namehash}{TCJ1}
  \strng{fullhash}{TCJ1}
  \verb{doi}
  \verb 10.1063/1.1704727
  \endverb
  \field{number}{11}
  \field{pages}{1812\bibrangedash 1813}
  \field{title}{Inequality with Applications in Statistical Mechanics}
  \field{volume}{6}
  \field{journaltitle}{Journal of Mathematical Physics}
  \field{year}{1965}
\endentry

\entry{HP93}{article}{}
  \name{author}{2}{}{%
    {{}%
     {Hiai}{H.}%
     {Fumio}{F.}%
     {}{}%
     {}{}}%
    {{}%
     {Petz}{P.}%
     {D\'enes}{D.}%
     {}{}%
     {}{}}%
  }
  \strng{namehash}{HFPD1}
  \strng{fullhash}{HFPD1}
  \verb{doi}
  \verb 10.1016/0024-3795(93)90029-N
  \endverb
  \field{pages}{153 \bibrangedash  185}
  \field{title}{The {G}olden-{T}hompson trace inequality is complemented}
  \field{volume}{181}
  \field{journaltitle}{Linear Algebra and its Applications}
  \field{year}{1993}
\endentry

\entry{SBT15}{article}{}
  \name{author}{3}{}{%
    {{}%
     {Sutter}{S.}%
     {David}{D.}%
     {}{}%
     {}{}}%
    {{}%
     {Berta}{B.}%
     {Mario}{M.}%
     {}{}%
     {}{}}%
    {{}%
     {Tomamichel}{T.}%
     {Marco}{M.}%
     {}{}%
     {}{}}%
  }
  \strng{namehash}{SDBMTM1}
  \strng{fullhash}{SDBMTM1}
  \verb{eprint}
  \verb 1604.03023
  \endverb
  \field{note}{to appear in Communications in Mathematical Physics}
  \field{title}{Multivariate trace inequalities}
  \field{eprinttype}{arxiv}
  \field{eprintclass}{quant-ph}
  \field{month}{04}
  \field{year}{2016}
  \warn{\item Can't use 'eprinttype' + 'archiveprefix'}
\endentry

\entry{Lieb73}{article}{}
  \name{author}{1}{}{%
    {{}%
     {Lieb}{L.}%
     {Elliott~H}{E.~H.}%
     {}{}%
     {}{}}%
  }
  \strng{namehash}{LEH1}
  \strng{fullhash}{LEH1}
  \verb{doi}
  \verb 10.1016/0001-8708(73)90011-X
  \endverb
  \field{number}{3}
  \field{pages}{267 \bibrangedash  288}
  \field{title}{Convex trace functions and the {W}igner-{Y}anase-{D}yson
  conjecture}
  \field{volume}{11}
  \field{journaltitle}{Advances in Mathematics}
  \field{year}{1973}
\endentry

\entry{frank_monotonicity_2013}{article}{}
  \name{author}{2}{}{%
    {{}%
     {Frank}{F.}%
     {Rupert~L.}{R.~L.}%
     {}{}%
     {}{}}%
    {{}%
     {Lieb}{L.}%
     {Elliott~H.}{E.~H.}%
     {}{}%
     {}{}}%
  }
  \strng{namehash}{FRLLEH1}
  \strng{fullhash}{FRLLEH1}
  \verb{doi}
  \verb 10.1063/1.4838835
  \endverb
  \verb{eprint}
  \verb 1306.5358
  \endverb
  \field{issn}{0022-2488, 1089-7658}
  \field{number}{12}
  \field{pages}{122201}
  \field{title}{Monotonicity of a relative {{R{\'e}nyi}} entropy}
  \verb{url}
  \verb http://scitation.aip.org/content/aip/journal/jmp/54/12/10.1063/1.483883
  \verb 5
  \endverb
  \field{volume}{54}
  \field{journaltitle}{Journal of Mathematical Physics}
  \field{eprinttype}{arxiv}
  \field{eprintclass}{math-ph}
  \field{day}{01}
  \field{month}{12}
  \field{year}{2013}
  \field{urlday}{12}
  \field{urlmonth}{03}
  \field{urlyear}{2015}
  \warn{\item Can't use 'eprinttype' + 'archiveprefix'}
\endentry

\entry{dupuis_decoupling_2014}{article}{}
  \name{author}{3}{}{%
    {{}%
     {Dupuis}{D.}%
     {F.}{F.}%
     {}{}%
     {}{}}%
    {{}%
     {Szehr}{S.}%
     {O.}{O.}%
     {}{}%
     {}{}}%
    {{}%
     {Tomamichel}{T.}%
     {M.}{M.}%
     {}{}%
     {}{}}%
  }
  \strng{namehash}{DFSOTM1}
  \strng{fullhash}{DFSOTM1}
  \verb{doi}
  \verb 10.1109/TIT.2013.2295330
  \endverb
  \verb{eprint}
  \verb 1207.0067
  \endverb
  \field{issn}{0018-9448}
  \field{number}{3}
  \field{pages}{1562\bibrangedash 1572}
  \field{title}{A {{Decoupling Approach}} to {{Classical Data Transmission Over
  Quantum Channels}}}
  \field{volume}{60}
  \field{journaltitle}{IEEE Transactions on Information Theory}
  \field{eprinttype}{arxiv}
  \field{eprintclass}{quant-ph}
  \field{month}{03}
  \field{year}{2014}
  \warn{\item Can't use 'eprinttype' + 'archiveprefix'}
\endentry

\entry{sebastiani_derivatives_1996-1}{article}{}
  \name{author}{1}{}{%
    {{}%
     {Sebastiani}{S.}%
     {P.}{P.}%
     {}{}%
     {}{}}%
  }
  \strng{namehash}{SP1}
  \strng{fullhash}{SP1}
  \verb{doi}
  \verb 10.1137/S089547989528274X
  \endverb
  \field{issn}{0895-4798}
  \field{number}{3}
  \field{pages}{640\bibrangedash 648}
  \field{shortjournal}{SIAM. J. Matrix Anal. \& Appl.}
  \field{title}{On the Derivatives of Matrix Powers}
  \verb{url}
  \verb http://epubs.siam.org/doi/abs/10.1137/S089547989528274X
  \endverb
  \field{volume}{17}
  \field{journaltitle}{SIAM Journal on Matrix Analysis and Applications}
  \field{day}{01}
  \field{month}{07}
  \field{year}{1996}
  \field{urlday}{28}
  \field{urlmonth}{07}
  \field{urlyear}{2016}
\endentry

\entry{Lowner1934}{article}{}
  \name{author}{1}{}{%
    {{}%
     {L{\"o}wner}{L.}%
     {K.}{K.}%
     {}{}%
     {}{}}%
  }
  \strng{namehash}{LK1}
  \strng{fullhash}{LK1}
  \field{pages}{177\bibrangedash 216}
  \field{title}{{{\"U}}ber monotone {{M}}atrixfunktionen}
  \verb{url}
  \verb http://eudml.org/doc/168495
  \endverb
  \field{volume}{38}
  \field{journaltitle}{Mathematische Zeitschrift}
  \field{year}{1934}
\endentry

\entry{watrous_semidefinite_2009}{article}{}
  \name{author}{1}{}{%
    {{}%
     {Watrous}{W.}%
     {John}{J.}%
     {}{}%
     {}{}}%
  }
  \strng{namehash}{WJ1}
  \strng{fullhash}{WJ1}
  \verb{doi}
  \verb 10.4086/toc.2009.v005a011
  \endverb
  \verb{eprint}
  \verb 0901.4709
  \endverb
  \field{pages}{217\bibrangedash 238}
  \field{title}{Semidefinite {{Programs}} for {{Completely Bounded Norms}}}
  \field{volume}{5}
  \field{journaltitle}{Theory of Computing}
  \field{eprinttype}{arxiv}
  \field{eprintclass}{quant-ph}
  \field{year}{2009}
  \warn{\item Can't use 'eprinttype' + 'archiveprefix'}
\endentry

\lossort
\endlossort

%
 \blx@bblend
 \endgroup
 \csnumgdef{blx@labelnumber@\the\c@refsection}{0}}
\titleformat*{\section}{\bfseries}
\titleformat*{\subsection}{\normalsize\bfseries}
\titleformat*{\subsubsection}{\bfseries}
\titleformat*{\paragraph}{\large\bfseries}
\titleformat*{\subparagraph}{\large\bfseries}
\titlespacing\section{0pt}{12pt plus 4pt minus 2pt}{2pt plus 2pt minus 2pt}
\definecolor{dullmagenta}{rgb}{0.4,0,0.4}   
\definecolor{darkblue}{rgb}{0,0,0.4}
\newcommand{\norm}[1]{\left\lVert#1\right\rVert}
\newcommand{\ketbra}[1]{|#1\rangle\langle #1|}
\def\tr{{\rm tr}}
\def\Re{{\rm Re}}
\def\cl{{\text{cl}}}
\def\ker{{\text{ker}}}
\def\cD{\mathcal D}
\def\cE{\mathcal E}
\renewcommand{\rho}{\varrho}
\renewcommand{\phi}{\varphi}
\def\id{\mathbbm 1}
\def\fpg{F_{{\rm pg}}}
\newtheorem{theorem}{Theorem}[section]
\newtheorem{corollary}[theorem]{Corollary}
\newtheorem{prop}[theorem]{Proposition}
\newtheorem{lemma}[theorem]{Lemma}
\theoremstyle{definition}
\newtheorem{rmk}[theorem]{Remark}
\newcommand \myabstract[2][.8]{%
  \renewcommand\maketitlehookd{%
    \centering
    \begin{minipage}{#1\textwidth}
      {\begin{spacing}{1.0} \small #2\end{spacing}}
    \end{minipage}}}
\DeclareFontFamily{U}{mathx}{\hyphenchar\font45}
\DeclareFontShape{U}{mathx}{m}{n}{<-> mathx10}{}
\DeclareSymbolFont{mathx}{U}{mathx}{m}{n}
\DeclareMathAccent{\widebar}{0}{mathx}{"73}
\begin{document}

\title{\large {\bf Pretty good measures in quantum information theory}}

\author{
{\normalsize 
Raban Iten, Joseph M.\ Renes, and David Sutter}\\
\emph{\small 
Institute for Theoretical Physics, ETH Z\"urich, Switzerland}
}

\date{}
\myabstract{
Quantum generalizations of R\'enyi's entropies are a useful tool to describe a variety of operational tasks in quantum information processing. 
Two families of such generalizations turn out to be particularly useful: the Petz quantum R\'enyi divergence~$\widebar{D}_{\alpha}$ and the minimal quantum R\'enyi divergence~$\widetilde{D}_{\alpha}$. 
In this paper, we prove a reverse Araki-Lieb-Thirring inequality that implies a new relation between these two families of divergences, namely that $\alpha \widebar{D}_{\alpha}(\rho \| \sigma) \leq  \widetilde{D}_{\alpha}(\rho \| \sigma)$ for $\alpha \in [0,1]$ and where $\rho$ and $\sigma$ are density operators. 
This bound suggests defining a ``pretty good fidelity'', whose relation to the usual fidelity implies the known relations between the optimal and pretty good measurement as well as the optimal and pretty good singlet fraction.
We also find a new necessary and sufficient condition for optimality of the pretty good measurement and singlet fraction.

\vspace{4mm}

}

\maketitle

\section{Introduction}

As with their classical counterparts, quantum generalizations of R\'enyi entropies and divergences are powerful tools in information theory.
Two families of quantum  R\'enyi divergences have proven particularly useful, finding application to achievability, strong converses, and refined asymptotic analysis of a variety of coding and hypothesis testing problems (for a recent overview, see~\cite{tomamichel_quantum_2016}): the \emph{Petz quantum R\'enyi divergence}~\cite{petz_quasi-entropies_1986} and the \emph{minimal quantum R\'enyi divergence}~\cite{muller-lennert_quantum_2013,wilde_strong_2014} (also known as \emph{sandwiched quantum R\'enyi divergence}). 
A natural and important issue is the relation between these two families. 
In this work we prove a novel two-sided bound that relates the two families and discuss its implications. 	

For two non-negative operators $\rho\neq 0$ and $\sigma$ and $\alpha \in (0,1)\cup (1,\infty)$, the Petz quantum R\'enyi divergence is defined as
\begin{equation} \label{eq:petz_entropy}
\widebar{D}_{\alpha}(\rho \| \sigma):= \begin{cases}
\frac{1}{\alpha-1} \log \frac{1}{\tr \rho} \widebar{Q}_{\alpha}(\rho \| \sigma) &\text{if $ \sigma \gg \rho \lor \alpha<1 $}\\
\infty &\text{otherwise}\, ,
\end{cases}
\end{equation}
where $\widebar{Q}_{\alpha}(\rho||\sigma):=   \tr \rho^\alpha \sigma^{1-\alpha}$ and we use the common convention that  $-\log 0=\infty$. Moreover, negative matrix powers should be considered as generalized inverses. The notation $\sigma \gg \rho$ denotes that the kernel of $\sigma$ is a subset of the kernel of $\rho$. The minimal quantum R\'enyi divergence on the other hand is defined by 
\begin{equation}  \label{eq:min_entropy} 
\widetilde{D}_{\alpha}(\rho \| \sigma):= \begin{cases}
\frac{1}{\alpha-1} \log  \frac{1}{\tr \rho}  \widetilde{Q}_{\alpha}(\rho \| \sigma) &\text{if $ \sigma \gg \rho \lor \alpha<1 $}\\
\infty &\text{otherwise} \, ,
\end{cases}
\end{equation}
where $\widetilde{Q}_{\alpha}(\rho||\sigma):=\tr \left(\sigma^{\frac{1-\alpha}{2\alpha}}\rho\sigma^{\frac{1-\alpha}{2\alpha}}\right)^\alpha$. Moreover, we define $D_{0}$, $D_{1}$ and $D_{\infty}$  as limits of $D_{\alpha}$ for $\alpha \rightarrow 0$, $\alpha \rightarrow 1$ and $\alpha \rightarrow \infty$, respectively. 
Throughout this paper we use the convention that statements without either bar or tilde symbols are true for both cases.

The Araki-Lieb-Thirring (ALT) inequality~\cite{lieb_inequalities_1976,araki_inequality_1990} implies that the Petz divergence is larger than or equal to the minimal divergence, i.e., $\widebar{D}_{\alpha}(\rho \| \sigma) \geq \widetilde{D}_{\alpha}(\rho \| \sigma)$. 
But what remains unanswered is how much bigger than the minimal divergence the Petz divergence can be. 
We settle this question for $\alpha \leq1$ by showing that $\widebar{D}_{\alpha}(\rho \| \sigma) \leq \frac{1}{\alpha} \widetilde{D}_{\alpha}(\rho \| \sigma)$ if $\rho$ and $\sigma$ are normalized. 
This result follows from a new reversed ALT inequality. (We refer to Theorem~\ref{thm:Inv_ALT} and Corollary~\ref{cor:Upper_bound} for precise statements.)

This result has several applications. In Section~\ref{sec:pgf}, we define the  ``pretty good fidelity'' as $\fpg(\rho ,\sigma):=\tr \sqrt{\rho}\sqrt{\sigma}$. The result above then implies that the pretty good fidelity is indeed pretty good in that $\fpg \leq F\leq \sqrt{\fpg}$, where $F$ denotes the usual fidelity defined by $F(\rho,\sigma):=\tr (\sqrt{\rho} \sigma \sqrt{\rho})^{\nicefrac{1}{2}}$.  Analogous bounds are also known between the pretty good guessing probability and the optimal guessing probability~\cite{barnum_reversing_2002} as well as between the pretty good and the optimal achievable singlet fraction~\cite{dupuis_entanglement_2015}.\footnote{Note that ``singlet'' refers to a maximally entangled state (and not necessarily to the maximally entangled two-qubit state)~\cite{dupuis_entanglement_2015}. }  We show that both of these relations follow by the inequality relating the pretty good fidelity and the fidelity. We thus present a unified picture of the relationship between pretty good quantities and their optimal versions.
Additionally, we show that equality conditions for the ALT inequality lead to a new necessary and sufficient condition on the optimality of both pretty good measurement and singlet fraction.

\vspace{3mm}

 In this paper we consider finite-dimensional Hilbert spaces only, though most of our results can be extended to separable Hilbert spaces. 
We label Hilbert spaces with capital letters $A$, $B$, etc.\ and denote their dimension by $|A|$, $|B|$, etc..
The set of density operators on $A$, i.e., non-negative operators $\rho_A$ with $\tr\rho_A=1$, is denoted $\cD(A)$. 
We shall also make use of the convention $\frac{1}{0}=\infty$.
The \emph{Schatten} $p$\emph{-norm} of any linear operator $L$ is given by 
\begin{align}
\norm{L}_p:=\big(\tr |L|^p\big)^{\frac{1}{p}} \quad \text{for} \quad p\geq 1 \ ,
\end{align}
where $|L|:=\sqrt{L^{*} L}$. 
We may extend this definition to all $p>0$, but note that $\norm{L}_p$ is not a norm for $p \in (0,1)$ since it does not satisfy the triangle inequality.
In the limit $p\to \infty$ we recover the \emph{operator norm} and for $p=1$ we obtain the \emph{trace norm}. 
Schatten norms are functions of the singular values and thus unitarily invariant. Moreover, they satisfy $\|L\|_p = \|L^{*}\|_p$ and $\|L\|_{2p}^2 = \|LL^{*}\|_p= \|L^{*} L\|_p$.

\section{Results}

\subsection{Reverse ALT inequality}

The ALT inequality states that for any non-negative operators $A$ and $B$, $q\geq 0$ and $r \in [0,1]$, 
\begin{align}
\label{eq:alt}
\tr\, (B^{\frac{r}{2}} A^r B^{\frac{r}{2}})^q \leq \tr \, (B^{\frac{1}{2}} A B^{\frac{1}{2}})^{rq},
\end{align}
and the inequality holds in the opposite direction for $r \geq 1$~\cite{lieb_inequalities_1976,araki_inequality_1990}. Our main result is a reversed version of the ALT inequality. 

\begin{theorem}[Reverse ALT inequality]  \label{thm:Inv_ALT}
Let $A$ and $B$ be non-negative operators and $q > 0$. Then, for $r \in (0,1]$ and $a,b\in (0,\infty]$ such that $\frac{1}{2rq}=\frac{1}{2q}+\frac{1}{a}+\frac{1}{b}$, we have
\begin{align}
\label{eq:Inv_ALT}
\tr\,\big(B^{\frac{1}{2}}A B^{\frac{1}{2}}\big)^{rq} \leq \Big(\tr\,\big(B^{\frac{r}{2}}A^r B^{\frac{r}{2}}\big)^q\Big)^r \norm{A^{\frac{1-r}{2}}}_a^{2rq}\norm{B^{\frac{1-r}{2}} }_b^{2rq} \, .
\end{align}
Meanwhile, for $r \in [1,\infty)$ and  $a,b\in (0,\infty]$ such that $\frac{1}{2q}=\frac{1}{2rq}+\frac{1}{a}+\frac{1}{b}$, we have 
\begin{align}
\label{eq:Inv_ALT_r_bigger_one}
\tr\,\big(B^{\frac{1}{2}}A B^{\frac{1}{2}}\big)^{rq} \geq \Big(\tr\,\big(B^{\frac{r}{2}}A^r B^{\frac{r}{2}}\big)^q\Big)^r \norm{A^{\frac{r-1}{2}}}_a^{-2rq}\norm{B^{\frac{r-1}{2}} }_b^{-2rq} \, .
\end{align}
\end{theorem}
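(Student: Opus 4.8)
\emph{Proof proposal.} The plan is to reduce both inequalities to a single application of the generalized Hölder inequality for Schatten (quasi-)norms. The first step is to rewrite the two trace quantities as norms. Setting $L:=A^{\frac12}B^{\frac12}$ we have $L^*L=B^{\frac12}AB^{\frac12}$, so that $\tr\,(B^{\frac12}AB^{\frac12})^{rq}=\norm{L}_{2rq}^{2rq}$; likewise, with $M:=A^{\frac r2}B^{\frac r2}$ we get $\tr\,(B^{\frac r2}A^rB^{\frac r2})^{q}=\norm{M}_{2q}^{2q}$, hence $\big(\tr\,(B^{\frac r2}A^rB^{\frac r2})^{q}\big)^{r}=\norm{M}_{2q}^{2rq}$. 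After taking a $2rq$-th root, inequality~\eqref{eq:Inv_ALT} is therefore equivalent to
\[
\norm{A^{\frac12}B^{\frac12}}_{2rq}\;\le\;\norm{A^{\frac r2}B^{\frac r2}}_{2q}\,\norm{A^{\frac{1-r}{2}}}_{a}\,\norm{B^{\frac{1-r}{2}}}_{b}.
\]

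The key algebraic observation is the factorization
\[
A^{\frac12}B^{\frac12}=A^{\frac{1-r}{2}}\,\big(A^{\frac r2}B^{\frac r2}\big)\,B^{\frac{1-r}{2}},
\]
which is valid since all exponents are non-negative (so no generalized-inverse subtleties arise) and powers of a fixed non-negative operator add. For $r\in(0,1]$ I would then apply the three-factor Hölder inequality with exponents $a,2q,b$: because the hypothesis $\frac{1}{2rq}=\frac{1}{2q}+\frac1a+\frac1b$ is exactly the Hölder balance condition, this yields the displayed bound at once, and raising to the $2rq$-th power recovers~\eqref{eq:Inv_ALT} (when $a$ or $b=\infty$ the corresponding factor is just the operator norm, consistent with $\frac1a=0$).

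The case $r\in[1,\infty)$ follows from the very same tool with the roles of the two products reversed. Now $A^{\frac12}B^{\frac12}$ is the \emph{inner} factor: one uses $A^{\frac r2}B^{\frac r2}=A^{\frac{r-1}{2}}\big(A^{\frac12}B^{\frac12}\big)B^{\frac{r-1}{2}}$ and applies Hölder with exponents $a,2rq,b$, whose balance condition is precisely $\frac{1}{2q}=\frac{1}{2rq}+\frac1a+\frac1b$. This bounds $\norm{A^{\frac r2}B^{\frac r2}}_{2q}$ from above; rearranging and raising to the $2rq$-th power then gives the reversed inequality~\eqref{eq:Inv_ALT_r_bigger_one}.

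The one point requiring care---and the main obstacle---is that $q$ is only assumed positive, so the exponents $2q,2rq$ may lie in $(0,1)$, where $\norm{\cdot}_p$ is merely a quasi-norm and the textbook Hölder inequality does not directly apply. I would therefore establish the generalized Hölder inequality $\norm{XY}_s\le\norm{X}_p\norm{Y}_q$ for \emph{all} $p,q,s>0$ with $\frac1s=\frac1p+\frac1q$ (and its three-factor iterate) at the level of singular values: Horn's log-majorization $\prod_{i\le k}s_i(XY)\le\prod_{i\le k}s_i(X)\,s_i(Y)$ passes to weak majorization under the map $t\mapsto t^{s}$ (which is convex and increasing in $\log t$), and combining this with the ordinary Hölder inequality for sequences with conjugate exponents $p/s$ and $q/s$ yields the claim. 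These majorization facts are available in the cited references of Bhatia and Ando.
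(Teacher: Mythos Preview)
Your proof is correct and follows essentially the same approach as the paper: rewrite the traces as Schatten (quasi-)norms, factor $A^{\frac12}B^{\frac12}=A^{\frac{1-r}{2}}\big(A^{\frac r2}B^{\frac r2}\big)B^{\frac{1-r}{2}}$, and apply the three-factor generalized H\"older inequality with the given balance condition. The only cosmetic differences are that the paper handles $r\ge 1$ by the substitution $A\to A^r$, $B\to B^r$, $r\to 1/r$, $q\to qr$ in the already-proven case rather than by your direct factorization, and it simply cites Bhatia for H\"older in the quasi-norm regime instead of sketching the log-majorization argument you outline.
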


\begin{proof}
For $r=1$ the statement is trivial.
Let  $r \in (0,1)$ and $q > 0$. Recall the generalized H\"older inequality for matrices (see e.g.,~\cite[Exercise~IV.2.7]{bhatia_matrix_1997} for a proof): For $s$, $s_1,\dots,s_n$ positive real numbers and $\{A_k\}_{k=1}^n$ a collection of square matrices, it holds that 
\begin{align}
\label{eq:hoelder}
\left\|\prod_{k=1}^n A_k\right\|_s\leq\prod_{k=1}^n\norm{A_k}_{s_k} \qquad \text{for}\qquad \sum_{k=1}^n\frac1{s_k}=\frac1s \, .
\end{align}
Furthermore, we can rewrite the trace-terms in~\eqref{eq:Inv_ALT} as Schatten (quasi-)norms  
\begin{align}
\tr\,\big(B^{\frac{1}{2}}A B^{\frac{1}{2}}\big)^{rq}=\norm{ B^{\frac{1}{2}}  A^{\frac{1}{2}}}_{2rq}^{2rq} \qquad \text{and} \qquad
\tr\,\big(B^{\frac{r}{2}}A^r B^{\frac{r}{2}}\big)^{q}=\norm{ B^{\frac{r}{2}}  A^{\frac{r}{2}}}_{2q}^{2q} \, .
\end{align}
Inequality~\eqref{eq:Inv_ALT} then follows by an application of the generalized H\"older inequality with $n=3$. Choosing $s=2rq$, and $s_1=b$, $s_2=2q$, and $s_3=a$ for some $a,b \in (0,\infty]$ with $\frac{1}{2rq}=\frac{1}{2q}+\frac{1}{a}+\frac{1}{b}$, we find
\begin{align}
\tr\,\big(B^{\frac{1}{2}}A B^{\frac{1}{2}}\big)^{rq}=\norm{  B^{\frac{1-r}{2}}  B^{\frac{r}{2}}  A^{\frac{r}{2}} A^{\frac{1-r}{2}} }_{2rq}^{2rq} \leq \norm{B^{\frac{1-r}{2}} }_b^{2rq} \norm{B^{\frac{r}{2}}  A^{\frac{r}{2}}}_{2q}^{2rq} \norm{A^{\frac{1-r}{2}} }_a^{2rq} \, .
\end{align}
Inequality~\eqref{eq:Inv_ALT_r_bigger_one} now follows  from~\eqref{eq:Inv_ALT} by substituting  $A \rightarrow A^r$, $B \rightarrow B^r$, $r \rightarrow \frac{1}{r}$, and $q\rightarrow qr$. 
 \end{proof}

 \begin{rmk} \label{rmk:Inv_ALT}
Another reverse ALT inequality was given in~\cite{audenaert_araki-lieb-thirring_2008}, where it was shown that for $r \in (0,1)$ and $q > 0$ we have
\begin{align}
\label{eq:adenauer}
\tr(B^{\frac{1}{2}}A B^{\frac{1}{2}})^{rq} \leq \big(\tr(B^{\frac{r}{2}}A^r B^{\frac{r}{2}})^q\big)^r \big(\tr \, A^{rq}\norm{B}_{\infty}^{rq}\big)^{1-r} \, ,
\end{align}
while for $r>1$ the inequality holds in the opposite direction. We recover these inequalities as a corollary of Theorem~\ref{thm:Inv_ALT} by setting $b=\infty$ and $a=\frac{2rq}{1-r}$ in~\eqref{eq:Inv_ALT}, and $b=\infty$ and $a=\frac{2rq}{r-1}$ in~\eqref{eq:Inv_ALT_r_bigger_one}. We note that there also exists a reverse ALT inequality in terms of matrix means (see e.g.~\cite{ando94}) that however is different to Theorem~\ref{thm:Inv_ALT}. 
\end{rmk}

\subsection{Relation between the Petz and the minimal divergence}

It is known that the minimal  quantum R\'enyi divergence provides a lower bound for all other quantum R\'enyi divergences satisfying a small number of axiomatic properties (see e.g.,~\cite[\S4.2.2]{tomamichel_quantum_2016} for a precise statement). Hence, in particular, we have $\widetilde{D}_{\alpha}(\rho \| \sigma)\leq \widebar{D}_{\alpha}(\rho \| \sigma)$ for all $\alpha \in [0,\infty]$.\footnote{Alternatively, this follows directly from the ALT inequality.} Theorem~\ref{thm:Inv_ALT} leads to reversed relations between these two divergences. In the case where $\alpha \in [0,1]$, we find a particularly  useful relation of a simple form.  
\begin{corollary}  \label{cor:Upper_bound}
Let $\rho \neq 0$ and $\sigma$ be two non-negative operators and $\alpha \in [0,1]$. Then
\begin{align}
\label{eq:Upper_bound}
\alpha \widebar{D}_{\alpha}(\rho||\sigma)+(1-\alpha) (\log \tr \rho - \log \tr \sigma) \leq \widetilde{D}_{\alpha}(\rho||\sigma) \leq \widebar {D}_{\alpha}(\rho||\sigma) \, .
\end{align}
\end{corollary}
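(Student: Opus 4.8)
The plan is to obtain the right-hand inequality directly from the ALT inequality~\eqref{eq:alt} (as already remarked in the text, where it is noted that $\widetilde{D}_{\alpha} \leq \widebar{D}_{\alpha}$), and to derive the new left-hand inequality as a direct specialization of Theorem~\ref{thm:Inv_ALT}. Throughout I would restrict to $\alpha \in (0,1)$; the endpoints $\alpha=0$ and $\alpha=1$ then follow by taking limits, using that for $\alpha \in (0,1)$ the finite branch of both divergences always applies.

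First I would recast the two $Q$-quantities in the form appearing in Theorem~\ref{thm:Inv_ALT}. Setting $A=\rho$ and $B=\sigma^{(1-\alpha)/\alpha}$, so that $B^{1/2}=\sigma^{(1-\alpha)/(2\alpha)}$, one has $\widetilde{Q}_{\alpha}(\rho\|\sigma)=\tr\,(B^{1/2}AB^{1/2})^{\alpha}$. The key observation is that applying \eqref{eq:Inv_ALT} with $q=1$ and $r=\alpha\in(0,1]$ makes the left-hand side of that inequality equal to $\widetilde{Q}_{\alpha}$, while the main factor on the right becomes $\tr\,(B^{\alpha/2}A^{\alpha}B^{\alpha/2})=\tr\,(\sigma^{(1-\alpha)/2}\rho^{\alpha}\sigma^{(1-\alpha)/2})=\tr\,\rho^{\alpha}\sigma^{1-\alpha}=\widebar{Q}_{\alpha}$, raised to the power $r=\alpha$.

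The crux is to choose the H\"older exponents $a$ and $b$ so that the two remaining Schatten-(quasi)norm factors collapse to powers of $\tr\rho$ and $\tr\sigma$. I would take $a=\frac{2}{1-\alpha}$ and $b=\frac{2\alpha}{(1-\alpha)^2}$; a short computation confirms these satisfy the required constraint $\frac{1}{2\alpha}=\frac12+\frac1a+\frac1b$, and that $\|A^{(1-\alpha)/2}\|_a^{2\alpha}=(\tr\rho)^{\alpha(1-\alpha)}$ and $\|B^{(1-\alpha)/2}\|_b^{2\alpha}=(\tr\sigma)^{(1-\alpha)^2}$. Inequality~\eqref{eq:Inv_ALT} then reads $\widetilde{Q}_{\alpha}(\rho\|\sigma)\leq\big(\widebar{Q}_{\alpha}(\rho\|\sigma)\big)^{\alpha}(\tr\rho)^{\alpha(1-\alpha)}(\tr\sigma)^{(1-\alpha)^2}$.

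Finally I would take logarithms and divide by $\alpha-1<0$, which reverses the inequality, and substitute the definitions $\widebar{D}_{\alpha}=\frac{1}{\alpha-1}(\log\widebar{Q}_{\alpha}-\log\tr\rho)$ and $\widetilde{D}_{\alpha}=\frac{1}{\alpha-1}(\log\widetilde{Q}_{\alpha}-\log\tr\rho)$. The $\log\tr\rho$ and $\log\tr\sigma$ contributions then recombine exactly into $(1-\alpha)(\log\tr\rho-\log\tr\sigma)$, yielding the asserted left bound. I expect the only real obstacle to be bookkeeping: tracking the various trace-normalization factors through the logarithm and handling the sign flip from $\frac{1}{\alpha-1}$ correctly. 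The support and generalized-inverse subtleties are immaterial here, since for $\alpha<1$ only non-negative powers of $\rho$ and $\sigma$ appear.
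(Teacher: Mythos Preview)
Your proposal is correct and follows essentially the same approach as the paper: both apply Theorem~\ref{thm:Inv_ALT} with $q=1$, $r=\alpha$, $A=\rho$, $B=\sigma^{(1-\alpha)/\alpha}$, and the identical choice of H\"older exponents $a=\frac{2}{1-\alpha}$, $b=\frac{2\alpha}{(1-\alpha)^2}$, reducing the left inequality to the $Q$-bound $\widetilde{Q}_{\alpha}\leq\widebar{Q}_{\alpha}^{\alpha}(\tr\rho)^{\alpha(1-\alpha)}(\tr\sigma)^{(1-\alpha)^2}$, with the endpoints handled by continuity.
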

\begin{proof}
The second inequality is a direct consequence of the ALT inequality. 
It thus remains to show the first inequality. We note that it suffices to consider the case $\alpha \in (0,1)$, as $\alpha \in \{0,1\}$ then follows by continuity. By definition, we can reformulate the first inequality of~\eqref{eq:Upper_bound} as 
\begin{align} \label{eq:reformulated_thm}
\widetilde{Q}_{\alpha}(\rho||\sigma) \leq \widebar{Q}_{\alpha}(\rho||\sigma)^{\alpha}  (\tr \rho )^{\alpha(1-\alpha)}  (\tr \sigma )^{(1-\alpha)^2} \, . 
\end{align}
This follows from Theorem~\ref{thm:Inv_ALT} with $q=1$, $r=\alpha$, $A=\rho$,  $B=\sigma^{\frac{1-\alpha}{\alpha}}$, $a=\frac{2}{1-\alpha}$, and $b=\frac{2\alpha}{(1-\alpha)^2}$.
 \end{proof}

There is a well known equality condition for the ALT inequality, which leads to an equality condition for the second inequality of~\eqref{eq:Upper_bound}.
 \begin{lemma} \label{lem:eq_cond_ALT}
For $\alpha \in (0,1)$, we have  $\widetilde{D}_{\alpha}(\rho||\sigma) = \widebar {D}_{\alpha}(\rho||\sigma)$ if and only if $\rho$ and $\sigma$ commute. 
\end{lemma}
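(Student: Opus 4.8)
The plan is to reduce the claimed equality of divergences to an equality of the underlying $\widebar{Q}_\alpha$ and $\widetilde{Q}_\alpha$ quantities, to recognise the latter as the equality case of the ALT inequality~\eqref{eq:alt}, and then to invoke the known equality condition for ALT. First, since $\alpha\in(0,1)$ the prefactor $\frac{1}{\alpha-1}$ appearing in both~\eqref{eq:petz_entropy} and~\eqref{eq:min_entropy} is strictly negative, and the two definitions carry the same normalising factor $\frac{1}{\tr\rho}$. As $\log$ is strictly monotone, $\widetilde{D}_{\alpha}(\rho\|\sigma)=\widebar{D}_{\alpha}(\rho\|\sigma)$ holds if and only if $\widetilde{Q}_{\alpha}(\rho\|\sigma)=\widebar{Q}_{\alpha}(\rho\|\sigma)$ (including the degenerate case where both quantities vanish and both divergences equal $+\infty$). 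This turns the lemma into a statement purely about the two trace quantities.

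The direction ``$\Leftarrow$'' is immediate: if $\rho$ and $\sigma$ commute they can be diagonalised simultaneously, and both $\widebar{Q}_{\alpha}$ and $\widetilde{Q}_{\alpha}$ collapse to $\sum_i p_i^{\alpha} q_i^{1-\alpha}$, where $p_i$ and $q_i$ are the joint eigenvalues of $\rho$ and $\sigma$; hence equality.

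For ``$\Rightarrow$'' I would reuse the substitution already employed in the proof of Corollary~\ref{cor:Upper_bound}, namely $A=\rho$, $B=\sigma^{\frac{1-\alpha}{\alpha}}$, $r=\alpha$, and $q=1$. With this choice the ALT inequality~\eqref{eq:alt} becomes exactly $\widebar{Q}_{\alpha}(\rho\|\sigma)\leq\widetilde{Q}_{\alpha}(\rho\|\sigma)$, since $\tr\,(B^{\frac{r}{2}}A^{r}B^{\frac{r}{2}})^{q}=\tr\,\rho^{\alpha}\sigma^{1-\alpha}=\widebar{Q}_{\alpha}(\rho\|\sigma)$ by cyclicity and $\tr\,(B^{\frac12}AB^{\frac12})^{rq}=\widetilde{Q}_{\alpha}(\rho\|\sigma)$. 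Therefore $\widetilde{Q}_{\alpha}=\widebar{Q}_{\alpha}$ is precisely the equality case of the ALT inequality for this pair of operators. I would then invoke the known equality condition for ALT (see~\cite{hiai_equality_1994}), which forces $A$ and $B$ to commute, i.e.\ $\rho$ commutes with $\sigma^{\frac{1-\alpha}{\alpha}}$. Since the exponent $\frac{1-\alpha}{\alpha}$ is positive, $\sigma^{\frac{1-\alpha}{\alpha}}$ has the same spectral projections as $\sigma$, so $\rho$ and $\sigma$ commute, as required.

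The main obstacle is the equality condition for ALT itself, which carries essentially the whole content of ``$\Rightarrow$''. Taking it from the literature is clean, but one must check that its hypotheses match our setting, in particular the behaviour on $\ker\sigma$ and $\ker\rho$, where the generalised inverses enter. I would therefore first reduce to the support of $\sigma$, on which $\sigma$ is invertible and both $\widebar{Q}_{\alpha}$ and $\widetilde{Q}_{\alpha}$ only see the compression of $\rho$, before applying the cited condition; a fully self-contained alternative would be to establish strictness of the ALT inequality for non-commuting operators directly, which amounts to reproving that equality condition.
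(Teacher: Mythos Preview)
Your overall strategy coincides with the paper's: reduce $\widetilde{D}_\alpha=\widebar{D}_\alpha$ to equality in the ALT inequality and then cite Hiai's equality criterion~\cite{hiai_equality_1994}. The difference lies in the substitution. You take $A=\rho$, $B=\sigma^{(1-\alpha)/\alpha}$, $r=\alpha\in(0,1)$, $q=1$, so that $rq=\alpha<1$. The paper instead sets $A=\rho^{\alpha}$, $B=\sigma^{1-\alpha}$, $r=1/\alpha>1$, $q=\alpha$, so that $rq=1$. The two are related by the swap $A\mapsto A^{r}$, $B\mapsto B^{r}$, $r\mapsto 1/r$, $q\mapsto rq$, and the conclusion $[A,B]=0$ translates the same way: $[\rho,\sigma^{(1-\alpha)/\alpha}]=0\Leftrightarrow[\rho^{\alpha},\sigma^{1-\alpha}]=0\Leftrightarrow[\rho,\sigma]=0$.

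The reason the paper's parametrisation is preferable is the hypothesis of Hiai's Theorem~2.1. That result asserts that $r\mapsto\|(B^{r/2}A^{r}B^{r/2})^{1/r}\|$ is strictly increasing unless $AB=BA$, for a \emph{strictly increasing unitarily invariant norm}. Writing ALT in norm form with $q'=rq$ requires $\|\cdot\|_{q'}$ to be such a norm, which needs $q'\geq 1$; the paper explicitly notes this in a footnote. With your choice $q'=rq=\alpha<1$, $\|\cdot\|_{\alpha}$ is only a quasi-norm, and Hiai's theorem as stated does not directly apply. So your step ``invoke the known equality condition for ALT'' is not yet justified in the regime you land in; you would need either to argue separately that the equality condition extends to $q'<1$, or---simpler---to make the substitution above and arrive at the paper's $r>1$, $rq=1$ setting, where the trace norm does the job. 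Apart from this technical point (and the support issue you already flag, which the paper does not treat explicitly either), your argument is the same as the paper's.
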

\begin{proof}
To see this, note that for  $r \in (1,\infty)$ and $rq \geq 1$, we have equality in the ALT inequality~\eqref{eq:alt} if and only if $A$ and $B$ commute. Equality for commuting states is obvious; for the other direction, note that we can rewrite~\eqref{eq:alt} using the substitution $rq = q'$ as
 \begin{align}
 \label{eq:alt_norm_version}
\norm{ (B^{\frac{r}{2}} A^r B^{\frac{r}{2}})^{\frac{1}{r}}}_{q'}\ \geq \norm{ (B^{\frac{1}{2}} A B^{\frac{1}{2}})}_{q'}\, .
\end{align}
 Equality in the inequality~\eqref{eq:alt_norm_version} for some  $r \in (1,\infty)$ (and noting that we have also equality for $r=1$) implies that the function $r \mapsto \| (B^{\frac{r}{2}} A^r B^{\frac{r}{2}})^{\frac{1}{r}}\|_{q'} $ is not strictly increasing.  Therefore, by \cite[Theorem 2.1]{hiai_equality_1994}, it follows\footnote{Here we use our assumption that  $q' \geq 1$, since in this case $\norm{\cdot}_{q'}$ is a strictly increasing norm.} that $[A,B]=0$. Let  $\rho,\sigma$ be non negative. Setting $r=\nicefrac{1}{\alpha},q=\alpha$ and $A=\rho^{\alpha}$, $B=\sigma^{1-\alpha}$ in~\eqref{eq:alt}, we conclude that for $\alpha \in (0,1)$ we have that $\widetilde{D}_{\alpha}(\rho||\sigma) = \widebar {D}_{\alpha}(\rho||\sigma)$ if and only if $[\rho,\sigma]=0$.
 \end{proof}

For density operators $\rho$ and $\sigma$ the first inequality of Corollary~\ref{cor:Upper_bound} simplifies to 
\begin{align}
\alpha \widebar{D}_{\alpha}(\rho \| \sigma) \leq \widetilde{D}_{\alpha}(\rho \| \sigma) \quad \text{for} \quad \alpha \in [0,1]\, .
\end{align}
This bound is simpler than an alternative bound given in~\cite{mosonyi_coding_2015}, which is based on the earlier reversed ALT inequality in \eqref{eq:adenauer} and states that 
$ \alpha \widebar{D}_{\alpha}(\rho \| \sigma) -\log \tr \rho^{\alpha} +(\alpha-1) \log \norm{\sigma}_{\infty} \leq \widetilde{D}_{\alpha}(\rho \| \sigma)$ for density operators $\rho$ and $\sigma$.

\subsection{Relations between quantum conditional R\'enyi entropies}
Divergences can be used to define conditional entropies. For any density operator $\rho_{AB}$ on $A\otimes B$ we define the \emph{quantum conditional R\'enyi entropy} of $A$ given $B$ as
\begin{align} \label{eq:cond_entropies}
H^{\downarrow}_{\alpha}(A|B)_{\rho}:=-D_{\alpha}(\rho_{AB}\| \id_A \otimes \rho_B) \quad \text{and} \quad 
H^{\uparrow}_{\alpha}(A|B)_{\rho}:=\sup \limits_{\sigma_B \in \cD(B) }-D_{\alpha}(\rho_{AB}\| \id_A \otimes \sigma_B) \, . 
\end{align}
Note that the special cases $\alpha\in \{0,1,\infty\}$ are defined by taking the limits inside the supremum.\footnote{We are following the notation in~\cite{tomamichel_quantum_2016}. Note that $H_{\text{min}}(A|B)_{\rho| \rho}=\widetilde{H}^{\downarrow}_{\infty}(A|B)_{\rho}$, $H_{\text{min}}(A|B)_{\rho}=\widetilde{H}^{\uparrow}_{\infty}(A|B)_{\rho}$ and $H_{\text{max}}(A|B)_{\rho}=\widetilde{H}^{\uparrow}_{\frac{1}{2}}(A|B)_{\rho}$ are also often used notations.
} 
We call the set of all conditional entropies with $\alpha \in (0,1)$ ``max-like'' and those with $\alpha \in (1,\infty)$ ``min-like'', owing to the fact that under small changes to the state the entropies in either class are approximately equal~\cite{renner_smooth_2004,tomamichel_fully_2009}. 
Moreover, min- and max-like entropies are related by some interesting duality relations, which are summarized in the following lemma.

\begin{lemma} [Duality relations~\cite{tomamichel_fully_2009, tomamichel_relating_2014, muller-lennert_quantum_2013, beigi_sandwiched_2013,konig_operational_2009, berta_single-shot_2008}] \label{lem_duality}

Let $\rho_{ABC}$ be a pure state on $A \otimes B \otimes C$. Then
\begin{align}
\widebar{H}^{\downarrow}_{\alpha}(A|B)_{\rho}+\widebar{H}^{\downarrow}_{\beta}(A|C)_{\rho} =0 &\quad \text{when} \quad \alpha+\beta=2 \, \text{ for } \alpha, \beta \in [0,2] \, \quad \text{and} \\
\widetilde{H}^{\uparrow}_{\alpha}(A|B)_{\rho}+\widetilde{H}^{\uparrow}_{\beta}(A|C)_{\rho} =0 &\quad \text{when} \quad \frac{1}{\alpha}+\frac{1}{\beta}=2 \, \text{ for } \alpha, \beta \in [\frac{1}{2},\infty] \, \quad \text{and} \\
\widebar{H}^{\uparrow}_{\alpha}(A|B)_{\rho}+\widetilde{H}^{\downarrow}_{\beta}(A|C)_{\rho} =0 &\quad \text{when} \quad \alpha \beta =1 \, \text{ for } \alpha, \beta \in [0,\infty] \, ,
\end{align} 
where we use the convention that $\frac{1}{\infty}=0$ and $\infty \cdot 0=1\,$.
\end{lemma}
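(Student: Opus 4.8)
The plan is to reduce each of the three entropic dualities to an identity between divergences evaluated on the two reductions $\rho_{AB}$ and $\rho_{AC}$ of the pure state $\rho_{ABC}$, and then to exploit the defining feature of a purification: the reductions of $\rho_{ABC}$ on complementary subsystems share the same nonzero spectrum, and operators can be transported across a cut via the ricochet identity $(\id_A\otimes M_{BC})|\psi\rangle=(N_A\otimes\id_{BC})|\psi\rangle$, with $N_A$ determined by $M_{BC}$ (for $M_{BC}=\rho_{BC}^{\,t}$ one has $N_A=\rho_A^{\,t}$). Since $H_\alpha=-D_\alpha$ and the up-arrow entropies are suprema of $-D_\alpha$ over $\sigma_B\in\cD(B)$, all three claims become statements about the quantities $\widebar Q_\alpha$ and $\widetilde Q_\alpha$. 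Throughout I would first treat $\alpha$ in the open interval and recover the endpoints $\alpha,\beta\in\{0,\tfrac12,1,\infty\}$ at the end by continuity, using the stated conventions $\tfrac1\infty=0$ and $\infty\cdot0=1$.

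For the first (Petz, down-arrow) relation I would write, using $\tr\rho_{AB}=1$, $\widebar H^{\downarrow}_{\alpha}(A|B)_{\rho}=\frac{1}{1-\alpha}\log\tr[\rho_{AB}^{\alpha}(\id_A\otimes\rho_B^{1-\alpha})]$ and likewise for $(A|C)$. With $\beta=2-\alpha$ one has $1-\beta=-(1-\alpha)$, so the asserted vanishing of the sum is equivalent to the single purification identity
\begin{align}
\tr\big[\rho_{AB}^{\alpha}(\id_A\otimes\rho_B^{1-\alpha})\big]=\tr\big[\rho_{AC}^{2-\alpha}(\id_A\otimes\rho_C^{\alpha-1})\big].
\end{align}
I would prove this by Schmidt-decomposing $|\psi\rangle_{ABC}$ and using that complementary reductions carry the same eigenvalues in matched bases (equivalently, by repeatedly applying the ricochet identity to transport $\rho_B^{1-\alpha}$ onto $AC$): both traces then collapse to the same sum over Schmidt coefficients. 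The condition $\alpha+\beta=2$ is exactly what makes the two exponent patterns $(\alpha,1-\alpha)$ and $(2-\alpha,\alpha-1)$ match, so the conjugacy relation is a consequence of exponent bookkeeping rather than an extra hypothesis.

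For the second (sandwiched, up-arrow) and third (mixed) relations the same philosophy applies once the optimization over $\sigma_B$ is absorbed. Here I would rewrite $\widetilde Q_\alpha(\rho_{AB}\|\id_A\otimes\sigma_B)=\norm{\sigma_B^{\frac{1-\alpha}{2\alpha}}\rho_{AB}^{1/2}}_{2\alpha}^{2\alpha}$, where $\sigma_B^{\frac{1-\alpha}{2\alpha}}$ is read as $\id_A\otimes\sigma_B^{\frac{1-\alpha}{2\alpha}}$ and the identity $\norm{L}_{2p}^2=\norm{LL^{*}}_p$ is used. I would then establish that the supremum over $\sigma_B\in\cD(B)$ is attained and characterize the optimizer through its stationarity (fixed-point) equation. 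The purification relates the optimal $\sigma_B$ on the $AB$-side to the optimal $\sigma_C$ on the $AC$-side; matching the exponent $\frac{1-\alpha}{2\alpha}$ across the cut forces the Hölder-conjugate constraint $\frac1\alpha+\frac1\beta=2$ in the second relation, while pairing the optimized $\widebar Q$ on one side against the un-optimized $\widetilde Q$ on the other produces the product constraint $\alpha\beta=1$ in the third.

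The main obstacle is the up-arrow optimization. Unlike the down-arrow case, which is a clean algebraic identity, the dualities for $\widetilde H^{\uparrow}$ (and for $\widebar H^{\uparrow}$) require showing that the supremum over the reference state is attained, that the optimal reference states on the two sides of the cut are genuinely dual to one another, and that the optimal values coincide after the exponent accounting. This is precisely the content established in the cited works, and I would invoke those optimizer characterizations rather than rederive them; the residual work is then the Schmidt-decomposition identity above together with careful tracking of exponents so that $\alpha+\beta=2$, $\frac1\alpha+\frac1\beta=2$, and $\alpha\beta=1$ emerge in the three respective cases.
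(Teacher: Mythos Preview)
The paper does not supply a proof of this lemma: it is stated with citations to the literature and then used as a black box in later arguments (e.g., in the proof of Lemma~\ref{lem:bounds_cond_entr}). There is therefore no ``paper's own proof'' to compare against.

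That said, your sketch is sound and aligns with how these dualities are established in the cited references. Your reduction of the first (Petz, down-arrow) identity to
\[
\tr\big[\rho_{AB}^{\alpha}(\id_A\otimes\rho_B^{1-\alpha})\big]=\tr\big[\rho_{AC}^{2-\alpha}(\id_A\otimes\rho_C^{\alpha-1})\big]
\]
is correct, and the ricochet argument goes through cleanly: writing the left side as $\langle\psi|(\rho_{AB}^{\alpha-1}\otimes\id_C)(\id_A\otimes\rho_B^{1-\alpha}\otimes\id_C)|\psi\rangle$ and transporting $\rho_{AB}^{\alpha-1}$ across the $AB|C$ cut and $\rho_B^{1-\alpha}$ across the $B|AC$ cut yields exactly the right side. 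For the second and third relations you explicitly defer to the optimizer characterizations in the cited works, which is appropriate here since that is precisely the nontrivial content and the paper itself does not rederive it. One small caution: your ricochet statement ``for $M_{BC}=\rho_{BC}^{\,t}$ one has $N_A=\rho_A^{\,t}$'' is imprecise; the clean version is $(\rho_X^{s}\otimes\id_Y)|\psi\rangle=(\id_X\otimes\rho_Y^{s})|\psi\rangle$ for any bipartition $X|Y$ of the pure state, which is what you actually use.
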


\subsubsection{Relations between max-like entropies}
As a direct consequence of Corollary~\ref{cor:Upper_bound}, we find the following relation between conditional max-like entropies.

\begin{corollary} \label{cor:entropies}
For $\alpha \in [0,1]$ and $\rho_{AB}\in \cD(A\otimes B)\,$, we have that
\begin{align}
&\widebar{H}^{\downarrow}_{\alpha}(A|B)_{\rho} \leq \widetilde{H}^{\downarrow}_{\alpha}(A|B)_{\rho}\leq \alpha \widebar{H}^{\downarrow}_{\alpha}(A|B)_{\rho}+(1-\alpha)\log |A|  \label{eq:Bounds_entropies1} \qquad \text{and}  \\
&\widebar{H}^{\uparrow}_{\alpha} (A|B)_{\rho}\leq \widetilde{H}^{\uparrow}_{\alpha}(A|B)_{\rho}\leq \alpha \widebar{H}^{\uparrow}_{\alpha}(A|B)_{\rho}+(1-\alpha)\log |A| \label{eq:Bounds_entropies2} \, .
\end{align}
\end{corollary}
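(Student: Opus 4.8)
The plan is to obtain both chains of inequalities as immediate specializations of Corollary~\ref{cor:Upper_bound}, since the conditional entropies in~\eqref{eq:cond_entropies} are defined directly in terms of the divergences $D_\alpha$. The only work is to pick the right pair of operators, evaluate the two traces appearing in Corollary~\ref{cor:Upper_bound}, and then negate (which reverses the inequalities and converts $-D_\alpha$ into $H_\alpha$); for the up-arrow quantities one additionally has to push the bounds through the supremum over $\sigma_B$.

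First I would treat the down-arrow case~\eqref{eq:Bounds_entropies1}. Here I apply Corollary~\ref{cor:Upper_bound} with $\rho=\rho_{AB}$ and $\sigma=\id_A\otimes\rho_B$. Since $\rho_{AB}\in\cD(A\otimes B)$ we have $\tr\rho_{AB}=1$, hence $\log\tr\rho=0$, while $\tr(\id_A\otimes\rho_B)=|A|\,\tr\rho_B=|A|$, hence $\log\tr\sigma=\log|A|$. Substituting these values yields
\[
\alpha\,\widebar{D}_\alpha(\rho_{AB}\|\id_A\otimes\rho_B)-(1-\alpha)\log|A|\ \leq\ \widetilde{D}_\alpha(\rho_{AB}\|\id_A\otimes\rho_B)\ \leq\ \widebar{D}_\alpha(\rho_{AB}\|\id_A\otimes\rho_B).
\]
Multiplying by $-1$ reverses all inequalities, and recalling $H^\downarrow_\alpha(A|B)_\rho=-D_\alpha(\rho_{AB}\|\id_A\otimes\rho_B)$ turns this directly into~\eqref{eq:Bounds_entropies1}.

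For the up-arrow case~\eqref{eq:Bounds_entropies2} I would fix an arbitrary $\sigma_B\in\cD(B)$ and apply Corollary~\ref{cor:Upper_bound} with $\rho=\rho_{AB}$ and $\sigma=\id_A\otimes\sigma_B$. As $\tr\sigma_B=1$ the same two trace evaluations hold verbatim, so after negation I obtain, for every $\sigma_B$,
\[
-\widebar{D}_\alpha(\rho_{AB}\|\id_A\otimes\sigma_B)\ \leq\ -\widetilde{D}_\alpha(\rho_{AB}\|\id_A\otimes\sigma_B)\ \leq\ -\alpha\,\widebar{D}_\alpha(\rho_{AB}\|\id_A\otimes\sigma_B)+(1-\alpha)\log|A|.
\]
Taking the supremum over $\sigma_B$ of the leftmost and middle terms gives $\widebar{H}^\uparrow_\alpha(A|B)_\rho\leq\widetilde{H}^\uparrow_\alpha(A|B)_\rho$, since a pointwise inequality is preserved under suprema. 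For the remaining bound I would use that for $\alpha\geq0$ and a constant $c$ one has $\sup_{\sigma_B}[\alpha f(\sigma_B)+c]=\alpha\sup_{\sigma_B}f(\sigma_B)+c$; applied with $f(\sigma_B)=-\widebar{D}_\alpha(\rho_{AB}\|\id_A\otimes\sigma_B)$ and $c=(1-\alpha)\log|A|$, the supremum of the rightmost term equals $\alpha\,\widebar{H}^\uparrow_\alpha(A|B)_\rho+(1-\alpha)\log|A|$, which together with the supremum of the middle term completes~\eqref{eq:Bounds_entropies2}.

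The argument is essentially bookkeeping, so I do not anticipate a genuine obstacle; the points to handle carefully are the sign reversal under negation and the interchange of the affine bound with the supremum in the up-arrow case, which relies on $\alpha\geq0$. The boundary values $\alpha\in\{0,1\}$, where the entropies are defined as limits, I would handle by continuity, exactly as in the proof of Corollary~\ref{cor:Upper_bound}; alternatively one verifies that the pulling-out-of-$\sup$ step remains valid at $\alpha=0$ under the stated conventions.
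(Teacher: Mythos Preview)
Your proposal is correct and follows exactly the route the paper intends: the paper presents Corollary~\ref{cor:entropies} as ``a direct consequence of Corollary~\ref{cor:Upper_bound}'' without spelling out a proof, and your argument---specializing $\sigma=\id_A\otimes\rho_B$ (resp.\ $\id_A\otimes\sigma_B$), computing the traces, negating, and passing the affine bound through the supremum using $\alpha\geq0$---is precisely the unpacking that phrase calls for. Your remarks on the boundary cases $\alpha\in\{0,1\}$ via continuity are also in line with how the paper handles these endpoints elsewhere.
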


We can further improve the upper bounds in~\eqref{eq:Bounds_entropies1} and~\eqref{eq:Bounds_entropies2} by removing the second term if $\rho_{AB}$ has a special structure consisting of a quantum and a  classical part that is handled coherently.

\begin{prop} \label{ccq_states}
Let $\ket{\rho}_{XX'BB'}=\sum_x \sqrt{p_x}\ket{x}_X\ket{x}_{X'}\ket{\xi_x}_{BB'}$ be a pure state on $X\otimes X'\otimes B\otimes B'$, where $X'\simeq X$, $p_x \in [0,1]$ with $\sum_x p_x=1\,$, and the pure states $\ket{\xi_x}_{BB'}$ are arbitrary. 
Then
\begin{align}
&\widetilde{H}^{\downarrow}_{\alpha}(X|X'B)_{\rho}\leq \alpha \widebar{H}^{\downarrow}_{\alpha}(X|X'B)_{\rho} \quad \text{ for  $\alpha \in [0,1]$}  \label{eq:Bounds_ccq1}  \qquad \text{and}\\
& \widetilde{H}^{\uparrow}_{\alpha}(X|X'B)_{\rho}\leq \alpha \widebar{H}^{\uparrow}_{\alpha}(X|X'B)_{\rho}  \quad \text{ for  $\alpha \in [\tfrac{1}{2},1]$}  \label{eq:Bounds_ccq2} \, .
\end{align}
\end{prop}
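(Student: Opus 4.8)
The plan is to remove the dimension term $(1-\alpha)\log|X|$ from the upper bounds of Corollary~\ref{cor:entropies} by exploiting that $\rho_{XX'B}$ is supported on the ``diagonal'' subspace of $X\ot X'$. Writing $\hat\Pi:=\sum_x\ket{x}\bra{x}_X\ot\ket{x}\bra{x}_{X'}\ot\id_B$ for the projector onto that subspace, the form of $\ket{\rho}$ gives $\hat\Pi\,\rho_{XX'B}\,\hat\Pi=\rho_{XX'B}$. The common ingredient is this: for any $\sigma_{X'B}=\sum_x\ket{x}\bra{x}_{X'}\ot\sigma^x_B$ that is block-diagonal in the $\{\ket{x}_{X'}\}$ basis, a direct computation gives $(\id_X\ot\sigma_{X'B})^s\,\hat\Pi=\hat\Pi\,(\id_X\ot\sigma_{X'B})^s=\widehat\sigma^{\,s}$ for the relevant powers $s$, where $\widehat\sigma:=\sum_x\ket{x}\bra{x}_X\ot\ket{x}\bra{x}_{X'}\ot\sigma^x_B$. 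Since $\rho_{XX'B}=\hat\Pi\rho_{XX'B}\hat\Pi$ lets us insert $\hat\Pi$ freely, this yields $\widebar Q_\alpha(\rho_{XX'B}\|\id_X\ot\sigma_{X'B})=\widebar Q_\alpha(\rho_{XX'B}\|\widehat\sigma)$ and $\widetilde Q_\alpha(\rho_{XX'B}\|\id_X\ot\sigma_{X'B})=\widetilde Q_\alpha(\rho_{XX'B}\|\widehat\sigma)$. The key point is that $\widehat\sigma$ is a genuine density operator, $\tr\widehat\sigma=\sum_x\tr\sigma^x_B=\tr\sigma_{X'B}=1$. Applying the first inequality of Corollary~\ref{cor:Upper_bound} to the \emph{normalized} pair $\rho_{XX'B}$ and $\widehat\sigma$, the correction term vanishes and we obtain the dimension-free pointwise bound $\alpha\widebar D_\alpha(\rho_{XX'B}\|\id_X\ot\sigma_{X'B})\le\widetilde D_\alpha(\rho_{XX'B}\|\id_X\ot\sigma_{X'B})$.

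For the down-arrow inequality~\eqref{eq:Bounds_ccq1} this is immediate: the marginal $\rho_{X'B}=\sum_x p_x\ket{x}\bra{x}_{X'}\ot\omega_x$ (with $\omega_x=\tr_{B'}\ket{\xi_x}\bra{\xi_x}$) is already block-diagonal in $X'$, so taking $\sigma_{X'B}=\rho_{X'B}$ in the ingredient above and recalling $\widetilde H^\downarrow_\alpha=-\widetilde D_\alpha(\cdot\|\id_X\ot\rho_{X'B})$, $\widebar H^\downarrow_\alpha=-\widebar D_\alpha(\cdot\|\id_X\ot\rho_{X'B})$ gives $\widetilde H^\downarrow_\alpha(X|X'B)_\rho\le\alpha\widebar H^\downarrow_\alpha(X|X'B)_\rho$ for $\alpha\in[0,1]$, the endpoints following by continuity.

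For the up-arrow inequality~\eqref{eq:Bounds_ccq2} the obstacle is that the optimizing $\sigma_{X'B}$ need not be block-diagonal. I would first argue that it may be taken so. The state $\rho_{XX'B}$ is invariant under conjugation by $U_X(\theta)\ot U_{X'}(-\theta)\ot\id_B$, where $U_X(\theta)=\sum_x e^{\I\theta_x}\ket{x}\bra{x}$; since this fixes $\id_X$, unitary invariance of the divergence shows the objective $\sigma_{X'B}\mapsto-\widetilde D_\alpha(\rho_{XX'B}\|\id_X\ot\sigma_{X'B})$ is invariant under $\sigma_{X'B}\mapsto U_{X'}(\theta)\sigma_{X'B}U_{X'}(\theta)^{*}$. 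Averaging an optimizer $\sigma^{*}$ over $\theta$ produces its pinching in the $\{\ket{x}_{X'}\}$ basis, which is block-diagonal, and provided the objective is concave in $\sigma_{X'B}$ this average is again optimal. With $\sigma^{*}$ block-diagonal, the dimension-free pointwise bound then gives $\widetilde H^\uparrow_\alpha(X|X'B)_\rho=-\widetilde D_\alpha(\rho_{XX'B}\|\id_X\ot\sigma^{*})\le-\alpha\widebar D_\alpha(\rho_{XX'B}\|\id_X\ot\sigma^{*})\le\alpha\,\widebar H^\uparrow_\alpha(X|X'B)_\rho$, where the last step uses only that $\widebar H^\uparrow_\alpha$ is a supremum over reference states and that $\alpha>0$.

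The main obstacle is the concavity input in the symmetrization step: one needs $\sigma\mapsto\widetilde Q_\alpha(\rho_{XX'B}\|\id_X\ot\sigma)$ to be concave so that pinching does not decrease the objective. This holds precisely for $\alpha\in[\tfrac12,1]$, which is exactly the claimed range of~\eqref{eq:Bounds_ccq2}; for $\alpha<\tfrac12$ concavity fails and the symmetrization argument breaks down. The boundary cases $\alpha\in\{\tfrac12,1\}$ are recovered by continuity in $\alpha$.
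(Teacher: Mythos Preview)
Your proof is correct and arrives at the same conclusion via a somewhat different route than the paper.

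\textbf{Removing the dimension factor.} Both proofs observe that it suffices to establish the pointwise bound $\widetilde Q_\alpha\leq \widebar Q_\alpha^{\,\alpha}$ for cq reference states $\sigma_{X'B}$. To do this, the paper conjugates by the shift unitary $U_{XX'}=\sum_{x,x'}\ket{x-x'}\bra{x}_X\otimes\ketbra{x'}_{X'}$, which sends $\rho_{XX'B}$ to $\ketbra{0}_X\otimes G_{X'B}$ for a density operator $G_{X'B}$ while leaving $\id_X\otimes\sigma_{X'B}$ invariant; the $X$ register then factors out and Corollary~\ref{cor:Upper_bound} is applied to the normalized pair $(G_{X'B},\sigma_{X'B})$. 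You instead leave $\rho_{XX'B}$ alone and use the diagonal projector $\hat\Pi$ to replace $\id_X\otimes\sigma_{X'B}$ by the density operator $\widehat\sigma$ inside both $Q_\alpha$'s. The two manoeuvres are dual to each other: the paper compresses the first argument, you compress the second. The paper's version has the incidental advantage that $G_{X'B}$ is exactly the generalized Gram matrix that reappears in the optimality analysis of Section~3.3.

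\textbf{Reducing to cq reference states for~\eqref{eq:Bounds_ccq2}.} The paper proves this via the data processing inequality plus dominance (Lemma~\ref{lem:DPI_dephasing}): the pinching channel $\mathcal E(\cdot)=P(\cdot)P+(\id-P)(\cdot)(\id-P)$ fixes $\rho_{XX'B}$, and one gets the pointwise inequality $Q_\alpha(\rho\|\id_X\otimes\sigma)\leq Q_\alpha(\rho\|\id_X\otimes\sigma^{\cl})$ directly. Your phase-averaging argument uses invariance plus concavity of $\sigma\mapsto\widetilde Q_\alpha(\rho\|\id_X\otimes\sigma)$ to reach the same conclusion. Since joint concavity of $\widetilde Q_\alpha$ and the DPI for $\widetilde D_\alpha$ are equivalent in the range $\alpha\in[\tfrac12,1)$, the two arguments rest on the same underlying fact; the DPI formulation has the minor advantage that it yields a pointwise inequality without passing through an optimizer.
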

States $\rho_{XX'B}$ are sometimes called ``classically coherent'' as the classical information is treated coherently, i.e.\ fully quantum-mechanically. 
\begin{proof} [Proof of Proposition~\ref{ccq_states}]
It is known that $\widetilde{D}_1=\widebar{D}_1$ (see for example~\cite{tomamichel_quantum_2016}), and hence the claim is trivial in the case $\alpha=1$. Using~\eqref{eq:cond_entropies} as well as~\eqref{eq:petz_entropy} and~\eqref{eq:min_entropy} , one can see that it suffices to show that
\begin{align}
&\widetilde{Q}_{\alpha}(\rho_{XX'B}\| \id_X\otimes \rho_{X'B})\leq \widebar{Q}_{\alpha}(\rho_{XX'B}\| \id_X\otimes \rho_{X'B})^{\alpha} \quad \text{ for  $\alpha \in (0,1)$}  \label{eq:Q_formulation_cq1} \qquad \text{and}\\
&\widetilde{Q}_{\alpha}(\rho_{XX'B}\| \id_X\otimes \sigma_{X'B})\leq \widebar{Q}_{\alpha}(\rho_{XX'B}\| \id_X\otimes \sigma_{X'B})^{\alpha} \quad \text{ for  $\alpha \in [\tfrac{1}{2},1)$} \, , \label{eq:Q_formulation_cq2}
\end{align}
 for all density operators $\sigma_{X'B}$ (the case $\alpha=0$ then follows by continuity).

The marginal state $\rho_{X'B}$ appearing in \eqref{eq:Q_formulation_cq1} is a classical quantum (cq) state by assumption. Importantly, by the monotonicity of the R\'enyi divergence, we need only prove~\eqref{eq:Q_formulation_cq2} for cq states $\sigma_{X'B}$ in order to show~\eqref{eq:Bounds_ccq2}. Indeed, by Lemma~\ref{lem:DPI_dephasing} of Appendix~\ref{app:dephasing},  the supremum arising in equation~\eqref{eq:Bounds_ccq2} can be taken only over cq states.

Now define the unitary $U_{XX'}:=\sum_{x',x} \ket{x-x'}\bra{x}_X\otimes \ketbra{x'}_{X'}$, where arithmetic inside the ket is taken modulo $|X|$, and observe that $U_{XX'}\otimes \id_B$ leaves the state $\id_X \otimes \sigma_{X'B}$ invariant (here we use the assumption that  $\sigma_{X'B}$ is a cq state). 
Hence, by unitary invariance of $Q_{\alpha}$, we find
\begin{align}
Q_{\alpha}(  \rho_{XX'B}\|\id_X \otimes \sigma_{X'B})
&=Q_{\alpha}\big((U_{XX'}\otimes \id_B )\rho_{XX'B} (U_{XX'}^{*}\otimes \id_B) \| \id_X \otimes \sigma_{X'B}\big) \\
&= Q_{\alpha}\big(\ketbra{0}_X\otimes \sum_{x,x'}\sqrt{p_xp_{x'}}\ket{x}\bra{x'}_{X'}\otimes \tr_{B'}\ket{\xi_x}\bra{\xi_{x'}}_{BB'} \|\id_X \otimes \sigma_{X'B}\big)  \\
&=  Q_{\alpha}\Big(\sum_{x,x'}\sqrt{p_xp_{x'}}\ket{x}\bra{x'}_{X'}\otimes \tr_{B'}\ket{\xi_x}\bra{\xi_{x'}}_{BB'} \|\sigma_{X'B}\Big) \label{eq:Q_ccq_rewritten}\, ,
\end{align}
where we used the multiplicity of the trace under tensor products in the last equality. The claim now follows by a direct application of Corollary~\ref{cor:Upper_bound} (or more precisely of~\eqref{eq:reformulated_thm} applied to density operators):
\begin{align}
\widetilde{Q}_{\alpha}(  \rho_{XX'B}\|\id_X \otimes \sigma_{X'B})
&=  \widetilde{Q}_{\alpha}\Big(\sum_{x,x'}\sqrt{p_xp_{x'}}\ket{x}\bra{x'}_{X'} \otimes \tr_{B'}\ket{\xi_x}\bra{\xi_{x'}}_{BB'} \|\sigma_{X'B}\Big) \\
&\leq \widebar{Q}_{\alpha}\Big(\sum_{x,x'}\sqrt{p_xp_{x'}}\ket{x}\bra{x'}_{X'} \otimes \tr_{B'}\ket{\xi_x}\bra{\xi_{x'}}_{BB'} \|\sigma_{X'B}\Big)^{\alpha}\\
&=\widebar{Q}_{\alpha}(  \rho_{XX'B}\|\id_X \otimes \sigma_{X'B})^{\alpha} \,.
\end{align}
This shows inequality~\eqref{eq:Q_formulation_cq2} for cq states $\sigma_{X'B}$, and hence~\eqref{eq:Bounds_ccq2}. Moreover, we recover inequality~\eqref{eq:Q_formulation_cq1} by setting $\sigma_{X'B}=\rho_{X'B}$.
 \end{proof}

\subsubsection{Relations between min-like entropies}

We can use duality relations for conditional entropies (see Lemma~\ref{lem_duality}) and Corollary~\ref{cor:entropies} to derive new bounds for conditional min-like entropies.

\begin{lemma} \label{lem:bounds_cond_entr}
For $\alpha \in [1,2]$ and $\rho_{AB}\in \cD(A\otimes B)\,$, we have that\footnote{We use again the convention that $\frac{1}{0}=\infty\,$.}
\begin{align}
&\widetilde{H}^{\downarrow}_{\alpha}(A|B)_{\rho} \leq \alpha \widetilde{H}^{\uparrow}_{\frac{1}{2-\alpha}}(A|B)_{\rho}+(\alpha-1) \log |A| \label{eq:bounds_cond_entr1} \qquad \text{and}\\
&\widebar{H}^{\downarrow}_{\alpha}(A|B)_{\rho} \leq \frac{1}{2-\alpha} \left( \widebar{H}^{\uparrow}_{ \frac{1}{2-\alpha}}(A|B)_{\rho}+(\alpha-1) \log |A|\right). \label{eq:bounds_cond_entr2}
\end{align}
\end{lemma}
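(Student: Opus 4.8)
The plan is to reduce both inequalities to Corollary~\ref{cor:entropies} by passing to a purification and exploiting the duality relations collected in Lemma~\ref{lem_duality}. I would fix a purification $\rho_{ABC}$ of $\rho_{AB}$. The guiding idea is that a ``down'' min-like entropy at parameter $\alpha \in [1,2]$ on system $B$ becomes, after dualizing, an entropy at a parameter in $[0,1]$ on the complementary system $C$, where Corollary~\ref{cor:entropies} applies; one then dualizes the result back to system $B$. The constraint $\alpha\in[1,2]$ is exactly what guarantees that the dual parameters $2-\alpha$, $1/\alpha$, and $1/(2-\alpha)$ land in the admissible ranges $[0,1]$, $[\tfrac12,1]$, and $[1,\infty]$.

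For the first inequality~\eqref{eq:bounds_cond_entr1} I would proceed in three steps. First, using the mixed duality (relation with $\gamma\beta=1$), read in the form with the roles of $B$ and $C$ interchanged (legitimate since $\rho_{ABC}$ is pure), I write $\widetilde{H}^{\downarrow}_{\alpha}(A|B)_{\rho}=-\widebar{H}^{\uparrow}_{1/\alpha}(A|C)_{\rho}$, noting $1/\alpha\in[\tfrac12,1]$. Second, I apply the upper bound of~\eqref{eq:Bounds_entropies2} at parameter $1/\alpha$ on system $C$; rearranging this upper bound produces the lower bound $\widebar{H}^{\uparrow}_{1/\alpha}(A|C)_{\rho}\geq \alpha\,\widetilde{H}^{\uparrow}_{1/\alpha}(A|C)_{\rho}-(\alpha-1)\log|A|$. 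Third, I use the $\widetilde{H}^{\uparrow}$ duality ($\tfrac{1}{\mu}+\tfrac{1}{\nu}=2$) with $\nu=1/\alpha$, hence $\mu=\tfrac{1}{2-\alpha}$, to convert $\widetilde{H}^{\uparrow}_{1/\alpha}(A|C)_{\rho}=-\widetilde{H}^{\uparrow}_{1/(2-\alpha)}(A|B)_{\rho}$. Chaining the three steps yields exactly~\eqref{eq:bounds_cond_entr1}.

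For the second inequality~\eqref{eq:bounds_cond_entr2} the same template applies to the Petz quantities. I would use the $\widebar{H}^{\downarrow}$ duality ($\alpha+\beta=2$) to write $\widebar{H}^{\downarrow}_{\alpha}(A|B)_{\rho}=-\widebar{H}^{\downarrow}_{2-\alpha}(A|C)_{\rho}$ with $2-\alpha\in[0,1]$; then apply the upper bound of~\eqref{eq:Bounds_entropies1} at parameter $2-\alpha$ on $C$, which after rearranging gives $\widebar{H}^{\downarrow}_{2-\alpha}(A|C)_{\rho}\geq \tfrac{1}{2-\alpha}\big(\widetilde{H}^{\downarrow}_{2-\alpha}(A|C)_{\rho}-(\alpha-1)\log|A|\big)$; and finally invoke the mixed duality with $\delta=2-\alpha$, $\gamma=\tfrac{1}{2-\alpha}$ to get $\widetilde{H}^{\downarrow}_{2-\alpha}(A|C)_{\rho}=-\widebar{H}^{\uparrow}_{1/(2-\alpha)}(A|B)_{\rho}$. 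Substituting these into one another produces~\eqref{eq:bounds_cond_entr2}.

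The only place where care is genuinely required — and the one step I expect to be the main obstacle — is the sign bookkeeping together with the range checks. Because every dualization flips the sign, the estimate needed on the complementary-system entropy must be a \emph{lower} bound; this is why each chain invokes the \emph{second} (upper-bound) inequality of Corollary~\ref{cor:entropies} and rearranges it, rather than its first inequality. Verifying that $2-\alpha$, $1/\alpha$, and $1/(2-\alpha)$ remain in the ranges demanded by each duality relation and by Corollary~\ref{cor:entropies} is routine but must be done to justify every step; the hypothesis $\alpha\in[1,2]$ is precisely tailored so that all of these constraints are simultaneously met.
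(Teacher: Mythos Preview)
Your proposal is correct and follows essentially the same route as the paper's proof: purify, dualize to the complementary system $C$, apply the appropriate upper bound from Corollary~\ref{cor:entropies} at the dual parameter, and dualize back. The three-step chains you outline for each inequality, including the choice of which duality relation to use at each end, coincide with those in the paper.
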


\begin{proof}
Let $\tau_{ABC}$ be a purification of $\rho_{AB}$ on $A \otimes B\otimes C$, i.e., $\tau_{ABC}$ is a pure state with $\tr_C \tau_{ABC}=\rho_{AB}$. 
Then, we find
\begin{align}
\widetilde{H}^{\downarrow}_{\alpha}(A|B)_{\tau}
=-\widebar{H}^{\uparrow}_{\frac{1}{\alpha}}(A|C)_{\tau} 
\leq- \alpha \widetilde{H}^{\uparrow}_{\frac{1}{\alpha}}(A|C)_{\tau} +(\alpha-1) \log |A|
= \alpha \widetilde{H}^{\uparrow}_{\frac{1}{2-\alpha}}(A|B)_{\tau} +(\alpha-1) \log |A| \, ,
\end{align}
where we used  Corollary~\ref{cor:entropies} for the inequality and duality relations in the first and third equality. 
Similarly, we find
\begin{align}
\widebar{H}^{\downarrow}_{\alpha}(A|B)_{\tau}
&=-\widebar{H}^{\downarrow}_{2-\alpha}(A|C)_{\tau} \\
&\leq \frac{1}{2-\alpha} \left(-\widetilde{H}^{\downarrow}_{2-\alpha}(A|C)_{\tau} +(\alpha-1) \log |A|\right)\\
&=  \frac{1}{2-\alpha} \left(\widebar{H}^{\uparrow}_{\frac{1}{2-\alpha}}(A|B)_{\tau} +(\alpha-1) \log |A|\right) \, ,
\end{align}
where we again used Corollary~\ref{cor:entropies} for the inequality and duality relations in the first and third equality. 
\end{proof}

\begin{corollary} \label{cor:entropy_cq}
Let  $\alpha \in [1,2]$ and $\rho_{XB}$ be a cq state on $X \otimes B$, i.e., $\rho_{XB}=\sum_x p_x \ketbra{x}_X \otimes (\rho_x)_B$ where $(\rho_x)_B$ are density operators and $p_x \in [0,1]\,$, such that $\sum_x p_x=1\,$. Then
\begin{align}
&\widetilde{H}^{\downarrow}_{\alpha}(X|B)_{\rho} \leq \alpha \widetilde{H}^{\uparrow}_{\frac{1}{2-\alpha}}(X|B)_{\rho}  \label{eq:cond_entr_bound_cq1} \qquad \text{and} \\
&\widebar{H}^{\downarrow}_{\alpha}(X|B)_{\rho} \leq \frac{1}{2-\alpha}  \widebar{H}^{\uparrow}_{ \frac{1}{2-\alpha}}(X|B)_{\rho}.\label{eq:cond_entr_bound_cq2}
\end{align}
\end{corollary}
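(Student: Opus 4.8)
The plan is to reprise the argument for Lemma~\ref{lem:bounds_cond_entr}, but to replace the generic purification and the application of Corollary~\ref{cor:entropies} by a \emph{classically coherent} purification together with the sharper bound of Proposition~\ref{ccq_states}, which carries no $\log|A|$ penalty. Concretely, for each $x$ let $\ket{\phi_x}_{BB'}$ be a purification of $(\rho_x)_B$ on $B\otimes B'$, and set
\begin{align}
\ket{\tau}_{XX'BB'}=\sum_x \sqrt{p_x}\,\ket{x}_X\ket{x}_{X'}\ket{\phi_x}_{BB'} \, .
\end{align}
Then $\tau_{XB}=\rho_{XB}$, so the entropies in the statement are unchanged if computed on $\tau$, and $\ket{\tau}$ is exactly of the classically coherent form required by Proposition~\ref{ccq_states}. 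The purifying system is $C=X'B'$; since both $B$ and $B'$ enter $\ket{\tau}$ only as ``quantum'' registers, Proposition~\ref{ccq_states} applies verbatim after relabelling $B\leftrightarrow B'$, i.e.\ with conditioning system $X'B'$ (taking $B'\simeq B$ makes this transparent).

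For the first bound~\eqref{eq:cond_entr_bound_cq1} I would establish the chain
\begin{align}
\widetilde{H}^{\downarrow}_{\alpha}(X|B)_{\rho}
=-\widebar{H}^{\uparrow}_{\frac{1}{\alpha}}(X|X'B')_{\tau}
\leq -\alpha\,\widetilde{H}^{\uparrow}_{\frac{1}{\alpha}}(X|X'B')_{\tau}
=\alpha\,\widetilde{H}^{\uparrow}_{\frac{1}{2-\alpha}}(X|B)_{\rho} \, ,
\end{align}
using the duality relations of Lemma~\ref{lem_duality} in the two equalities (the third relation with $\tfrac1\alpha\cdot\alpha=1$, then the second with $\alpha+(2-\alpha)=2$), and the uparrow inequality~\eqref{eq:Bounds_ccq2} of Proposition~\ref{ccq_states} for the middle step. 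The latter is legitimate because $\tfrac1\alpha\in[\tfrac12,1]$ for $\alpha\in[1,2]$, precisely the range in~\eqref{eq:Bounds_ccq2}. For the second bound~\eqref{eq:cond_entr_bound_cq2} I would similarly compute
\begin{align}
\widebar{H}^{\downarrow}_{\alpha}(X|B)_{\rho}
=-\widebar{H}^{\downarrow}_{2-\alpha}(X|X'B')_{\tau}
\leq \frac{1}{2-\alpha}\Big(-\widetilde{H}^{\downarrow}_{2-\alpha}(X|X'B')_{\tau}\Big)
=\frac{1}{2-\alpha}\,\widebar{H}^{\uparrow}_{\frac{1}{2-\alpha}}(X|B)_{\rho} \, ,
\end{align}
now invoking the first duality relation ($\alpha+(2-\alpha)=2$) and the third ($(2-\alpha)\cdot\tfrac{1}{2-\alpha}=1$) for the equalities, and the downarrow inequality~\eqref{eq:Bounds_ccq1} of Proposition~\ref{ccq_states} with $2-\alpha\in[0,1]$ for the middle step. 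In both cases the $(\alpha-1)\log|X|$ term of Lemma~\ref{lem:bounds_cond_entr} is absent precisely because Proposition~\ref{ccq_states}, rather than Corollary~\ref{cor:entropies}, governs the middle inequality.

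The only genuinely delicate point is the bookkeeping of systems: one must check that the purifying register $C=X'B'$ is indeed the conditioning register to which Proposition~\ref{ccq_states} is applied, and that $\ket{\tau}$ retains its classically coherent structure under the $B\leftrightarrow B'$ relabelling. The endpoints $\alpha\in\{1,2\}$ should be handled by the limiting definitions of the entropies, consistent with the endpoint conventions already used in Lemma~\ref{lem_duality}; everything else is a mechanical substitution into the duality relations.
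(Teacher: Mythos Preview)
Your proposal is correct and follows essentially the same approach as the paper: construct the classically coherent purification $\ket{\tau}_{XX'BB'}$, identify $C=X'B'$ as the purifying system (with the $B\leftrightarrow B'$ relabelling), and then replay the duality chain from Lemma~\ref{lem:bounds_cond_entr} using Proposition~\ref{ccq_states} in place of Corollary~\ref{cor:entropies}. The paper's proof is terser, but the content is identical to what you wrote out.
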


\begin{proof}
The proof proceeds analogously to the proof of Lemma~\ref{lem:bounds_cond_entr}, but we can make use of the improved bounds given in Proposition~\ref{ccq_states}: Let $\ket{\tau}_{XX'BB'}=\sum_x \sqrt{p_x}\ket{x}_X\ket{x}_{X'}\ket{\xi_x}_{BB'}$ where $\ket{\xi_x}_{BB'}$ purifies $(\rho_x)_B$. 
The system $X' \otimes B'$ corresponds to the system $C$ in the proof of Lemma~\ref{lem:bounds_cond_entr} and the state on $X \otimes X' \otimes B'$, i.e., $\tau_{XX'B'}$, is a classical-coherent state as required for Proposition~\ref{ccq_states} (note that the role of $B$ and $B'$ are interchanged here and in the statement of Proposition~\ref{ccq_states}). 
\end{proof}

We note that the special case $\alpha=2$ of the inequalities~\eqref{eq:bounds_cond_entr1}  and~\eqref{eq:cond_entr_bound_cq1} was already shown in~\cite{dupuis_entanglement_2015}.

\subsubsection{Equality condition for max-like  entropies} \label{sec:equality_cond_entropy}
In this section, we give a necessary and sufficient condition on a density operator $\rho_{AB}$, such that the entropies  $\widebar{H}^{\uparrow}_{\alpha}(A|B)_{\rho}$ and $\widetilde{H}^{\uparrow}_{\alpha}(A|B)_{\rho}$ are equal for $\alpha \in [\tfrac{1}{2},1)$. To derive the necessary condition, let $\alpha \in (0,1)$. In the proof of Lemma~1 of~\cite{tomamichel_relating_2014}, it is shown that the optimizer $\sigma^\star_B$ of $\widebar{H}^{\uparrow}_{\alpha}(A|B)_{\rho}=\sup_{\sigma_B \in \cD(B) }-\widebar{D}_{\alpha}(\rho_{AB}\| \id_A \otimes \sigma_B)$ is given by
 \begin{align} \label{def:tilde_sigma}
 \sigma^\star_B=\frac{\left(\tr_A \, \rho_{AB}^{\alpha}\right)^{\frac{1}{\alpha}}}{\tr\, \left(\tr_A{\rho_{AB}^{\alpha}}\right)^{\frac{1}{\alpha}}}\, .
 \end{align} 
 By the ALT inequality~\cite{lieb_inequalities_1976,araki_inequality_1990}, we then find that
  \begin{align}  \label{eq:ALT_for_entropies}
  \widebar{H}^{\uparrow}_{\alpha}(A|B)_{\rho}=-\widebar{D}_{\alpha}(\rho_{AB}||\id_A \otimes \sigma^\star_B) \leq \sup \limits_{\sigma_B \in \cD(B) }-\widetilde{D}_{\alpha}(\rho_{AB}\| \id_A \otimes \sigma_B)
 \, .
 \end{align} 

According to Lemma~\ref{lem:eq_cond_ALT}, a necessary condition for equality in~\eqref{eq:ALT_for_entropies} is that  $[\rho_{AB},\id_A \otimes \sigma^\star_B]=0$. Assume now that $\alpha \in   [\tfrac{1}{2},1)$. To show that this condition is also sufficient for equality in~\eqref{eq:ALT_for_entropies}, it suffices to show that the function $\sigma_B \mapsto  -\widetilde{D}_{\alpha}(\rho_{AB}||\id_A \otimes \sigma_B)$ or equivalently $\sigma_B \mapsto  \widetilde{Q}_{\alpha}(\rho_{AB}||\id_A \otimes \sigma_B)$ attains its global maximum at $\sigma_B=\sigma^\star_B$ if $[\rho_{AB},\id_A \otimes \sigma^\star_B]=0$. The proof of this fact is based on standard derivative techniques, albeit for matrices, and is given in Appendix~\ref{app:equality_condition_extrema}. The results are summarized in the following Lemma.
\begin{lemma}[Equality condition for entropies] \label{lem:equal_cond_entropies}
Let $\alpha \in [\tfrac{1}{2},1)\,$, $\rho_{AB}$ be a density operator and $\hat{\sigma}^\star_B:=\tr_A \, \rho_{AB}^{\alpha}$. Then, the following are equivalent
\begin{enumerate}
\item $\widebar{H}^{\uparrow}_{\alpha}(A|B)_{\rho}=\widetilde{H}^{\uparrow}_{\alpha}(A|B)_{\rho} $
\item $[\rho_{AB},\id_A \otimes \hat{\sigma}^\star_{B}]=0$ .
\end{enumerate}
\end{lemma}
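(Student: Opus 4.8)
The plan is to read off the equivalence from the two-step chain~\eqref{eq:ALT_for_entropies} and to locate exactly where each inequality is tight. First I would record that, by~\eqref{def:tilde_sigma}, the Petz-optimizer is a normalized power of $\hat{\sigma}^\star_B=\tr_A\rho_{AB}^{\alpha}$, namely $\sigma^\star_B=(\hat{\sigma}^\star_B)^{1/\alpha}/\tr(\hat{\sigma}^\star_B)^{1/\alpha}$. Since $\sigma^\star_B$ and $\hat{\sigma}^\star_B$ are positive operators that are monotone functions of one another, they share the same eigenprojections, whence $[\rho_{AB},\id_A\otimes\sigma^\star_B]=0$ if and only if $[\rho_{AB},\id_A\otimes\hat{\sigma}^\star_B]=0$; thus it suffices to work with $\sigma^\star_B$. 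The chain to exploit is $\widebar{H}^{\uparrow}_{\alpha}(A|B)_\rho=-\widebar{D}_{\alpha}(\rho_{AB}\|\id_A\otimes\sigma^\star_B)\le -\widetilde{D}_{\alpha}(\rho_{AB}\|\id_A\otimes\sigma^\star_B)\le\widetilde{H}^{\uparrow}_{\alpha}(A|B)_\rho$, where the first step is the ALT inequality evaluated at $\sigma^\star_B$ and the second is just the definition of the supremum. Equality of the two ends is therefore equivalent to both inequalities being tight simultaneously.

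For the necessity direction $(1)\Rightarrow(2)$, if $\widebar{H}^{\uparrow}_\alpha=\widetilde{H}^{\uparrow}_\alpha$ then in particular the first inequality above is tight, i.e.\ $\widebar{D}_{\alpha}(\rho_{AB}\|\id_A\otimes\sigma^\star_B)=\widetilde{D}_{\alpha}(\rho_{AB}\|\id_A\otimes\sigma^\star_B)$, and Lemma~\ref{lem:eq_cond_ALT} forces $[\rho_{AB},\id_A\otimes\sigma^\star_B]=0$, which is condition~(2). This half is immediate and requires no derivative estimates.

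For the sufficiency direction $(2)\Rightarrow(1)$, condition~(2) together with Lemma~\ref{lem:eq_cond_ALT} already makes the first inequality tight, so the only remaining task is to make the second inequality tight, i.e.\ to show that the Petz-optimizer $\sigma^\star_B$ also maximizes $\sigma_B\mapsto-\widetilde{D}_{\alpha}(\rho_{AB}\|\id_A\otimes\sigma_B)$ over $\cD(B)$. For $\alpha\in(0,1)$ and normalized $\rho_{AB}$ this is equivalent to maximizing $f(\sigma_B):=\widetilde{Q}_{\alpha}(\rho_{AB}\|\id_A\otimes\sigma_B)$. Here I would use a concavity-plus-stationarity argument: (i) for $\alpha\in[\tfrac12,1)$ the function $f$ is concave on the convex set $\cD(B)$ (see e.g.~\cite{frank_monotonicity_2013}), so any point satisfying the first-order Lagrange condition for the constraint $\tr\sigma_B=1$ is automatically a global maximizer; (ii) compute the directional derivative $\tfrac{d}{dt}f(\sigma^\star_B+t\Delta)\big|_{t=0}$ for traceless Hermitian $\Delta$ and show it vanishes once $[\rho_{AB},\id_A\otimes\sigma^\star_B]=0$. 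Under this commutation hypothesis $\rho_{AB}$ and $\id_A\otimes\sigma^\star_B$ are simultaneously diagonalizable, so the Fréchet derivatives of the matrix powers appearing in $\widetilde{Q}_\alpha$ collapse to ordinary scalar derivatives on the joint eigenbasis, and one checks that the resulting stationarity condition (that the gradient of $f$ at $\sigma^\star_B$ be proportional to the identity) is precisely the one that $\sigma^\star_B$ already satisfies by virtue of being the Petz-optimizer. These matrix-derivative computations are routine but lengthy, and I would relegate them to Appendix~\ref{app:equality_condition_extrema}.

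The main obstacle is exactly step~(ii): differentiating $\sigma\mapsto\tr\big(\sigma^{\frac{1-\alpha}{2\alpha}}\rho_{AB}\,\sigma^{\frac{1-\alpha}{2\alpha}}\big)^{\alpha}$ is delicate because matrix powers do not commute with their own derivatives in general, so the Daleckii--Krein / Sebastiani-type formulas for derivatives of matrix powers~\cite{sebastiani_derivatives_1996-1} must be handled with care. It is precisely the commutation hypothesis~(2) that removes this difficulty, letting all operators be diagonalized at once and reducing the first-order condition to a scalar identity matching the Petz case. The restriction $\alpha\in[\tfrac12,1)$ enters only through the concavity in step~(i), which can fail for $\alpha<\tfrac12$. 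A minor technical point to address is the support of $\sigma^\star_B$: one should restrict attention to operators $\sigma_B$ whose support contains that of $\rho_B$ and impose the Lagrange condition on the corresponding face (or argue by continuity), so that the stationarity condition remains meaningful even when $\sigma^\star_B$ is not of full rank.
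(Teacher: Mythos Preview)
Your proposal is correct and follows essentially the same route as the paper: the necessity direction is exactly the ALT equality condition (Lemma~\ref{lem:eq_cond_ALT}) applied at the Petz optimizer, and the sufficiency direction is the concavity-plus-stationarity argument carried out in Appendix~\ref{app:equality_condition_extrema}, where the derivative of $\sigma_B\mapsto\widetilde Q_\alpha(\rho_{AB}\|\id_A\otimes\sigma_B)$ at $\sigma^\star_B$ is computed via Sebastiani-type formulas (Lemmas~\ref{lem:derivative} and~\ref{lem:derivative_tilde_Q}), the commutation hypothesis collapses it to $(1-\alpha)\,\Re\,\tr\big((\tr_A\rho_{AB}^\alpha)(\sigma^\star_B)^{-\alpha}\sigma_B'(0)\big)$, and the choice of $\sigma^\star_B$ makes this proportional to $\tr\,\sigma_B'(0)=0$. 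The paper also handles the rank-deficient case exactly as you suggest, by restricting to the face of $\cD(B)$ supported on $\mathrm{supp}\,\rho_B$ (Lemma~\ref{lem:alphasufficiency}).
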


\section{Pretty good fidelity and the quality of pretty good measures}
Our main results yield a unified framework relating pretty good measures often used  in quantum information to their optimal counterparts.

\subsection{Pretty good fidelity} \label{sec:pgf}
Let   $\rho$ and $\sigma$ be two density operators throughout this subsection.  We define the pretty good fidelity of $\rho$ and $\sigma$ by
\begin{align}
\fpg(\rho,\sigma):=\widebar{Q}_{\frac{1}{2}}(\rho,\sigma)=\tr\,\sqrt\rho\sqrt\sigma \, .
\end{align}
This quantity was called the ``quantum affinity'' in \cite{luo_informational_2004} and is nothing but the fidelity of the ``pretty good purification'' introduced in~\cite{winter_extrinsic_2004}: Letting $\ket{\Omega}_{AA'}=\sum_k \ket{k}_A \ket{k}_{A'}$,  the canonical purification with respect to $\ket{\Omega}_{AA'}$ of $\rho$ is $\ket{\Psi_\rho}_{AA'}=({\sqrt{\rho}_A \otimes \id_{A'}}) \ket{\Omega}_{AA'}$, and thus 
\begin{align}
\fpg(\rho,\sigma)=\braket{\Psi_\rho|\Psi_\sigma}_{AA'}.
\end{align}

Recall that the usual fidelity is given by
\begin{align}
F(\rho,\sigma):=\widetilde{Q}_{\frac{1}{2}}(\rho,\sigma)
=\norm{\sqrt{\rho} \sqrt{\sigma}}_1
=\max_{V_{A'}}\bra{\Psi_\rho}(\id_{A}\otimes V_{A'})\ket{\Psi_\sigma}_{AA'}, 
\end{align}
where the maximum is taken over all unitary operators $V_{A'}$ and the final equality follows from Uhlmann's theorem~\cite{uhlmann_transition_1976}. Therefore, it is clear that $\fpg(\rho,\sigma)\leq F(\rho,\sigma)$. This can also be seen from the ALT inequality directly (cf. Corollary~\ref{cor:Upper_bound} for $\alpha=\tfrac 12$), and therefore, by Lemma~\ref{lem:eq_cond_ALT}, we have that $\fpg(\rho,\sigma)=F(\rho,\sigma)$ if and only if $[\rho, \sigma]=0$. 
The reverse ALT inequality implies a bound in the opposite direction; a similar approach using the H\"older inequality is given in \cite{audenaert_comparisons_2014}. 
By choosing $\alpha=\nicefrac{1}{2}$, it follows from Corollary~\ref{cor:Upper_bound} that the fidelity is also upper bounded by the square root of the pretty good fidelity, i.e.,
\begin{align} \label{ineq_pgf}
\fpg(\rho,\sigma)\leq F(\rho,\sigma)\leq \sqrt{\fpg(\rho,\sigma)}\, .
\end{align}
Hence the pretty good fidelity is indeed pretty good. 

Recall that the \emph{trace distance} between two density operators $\rho$ and $\sigma$ is defined by $\delta(\rho,\sigma):=\frac{1}{2}\|\rho-\sigma\|_1$.
An important property of the fidelity is its relation to the trace distance~\cite{fuchs_cryptographic_1999}:
\begin{align} \label{eq:fid_tr_dist}
1-F(\rho,\sigma) \leq \delta(\rho,\sigma) \leq \sqrt{1-F(\rho,\sigma)^2} \, .
\end{align}
Indeed the pretty good fidelity satisfies the same relation, 
\begin{align} 
\label{eq:fvdg}
1-\fpg(\rho,\sigma) \leq \delta(\rho,\sigma) \leq \sqrt{1-\fpg(\rho,\sigma)^2} \, .
\end{align}
The upper bound follows immediately by combining the upper bound in \eqref{eq:fid_tr_dist} with the lower bound in~\eqref{ineq_pgf}. 
The lower bound was first shown in \cite{powers_free_1970} (see also \cite{audenaert_comparisons_2014}). 



\subsection{Relation to bounds for the pretty good measurement and singlet fraction} \label{sec:bounds_pg_measures}
In this section we show that together with entropy duality, the relation between fidelity and pretty good fidelity in \eqref{ineq_pgf} implies the known optimality bounds of the pretty good measurement and the pretty good singlet fraction. 
Let us first consider the optimal and pretty good singlet fraction. 
Define $R(A|B)_\rho$ to be the largest achievable overlap with the maximally entangled state one can obtain from $\rho_{AB}$ by applying a quantum channel on $B$. 
Formally, 
\begin{align} 
R(A|B)_{\rho}:=\text{max}_{\mathcal{E}_{B \rightarrow A'}} F(\ket{\Phi}\!\bra{\Phi}_{AA'}, (\id_A \otimes \mathcal{E}_{B \rightarrow A'})\rho_{AB})^2 \, ,
\end{align}
where $\ket{\Phi}_{AA'}=\frac{1}{\sqrt{|A|}}\sum_k \ket{k}_A \ket{k}_{A'}$ and the maximization is over all completely positive, trace-preserving maps ${\mathcal{E}_{B \rightarrow A'}}$. 
In~\cite{konig_operational_2009} it was shown that 
\begin{align} 
\widetilde{H}^{\uparrow}_{\infty}(A|B)_{\rho}=- \log |A|\, R(A|B)_{\rho} \, .
\end{align}
A ``pretty good'' map $\cE_{\text{pg}}$ was considered in \cite{berta_entanglement-assisted_2014}, and it was shown that  
\begin{align} 
\widetilde{H}^{\downarrow}_{2}(A|B)_{\rho}= -\log |A|\, R_{\text{pg}}(A|B)_{\rho} \, , 
\end{align}
where $R_{\text{pg}}(A|B)_\rho$ is the overlap obtained by using $\cE_{\text{pg}}$. 
Clearly $R_{\text{pg}}(A|B)_{\rho}\leq R(A|B)_{\rho}$, but the case $\alpha=2$ in~\eqref{eq:bounds_cond_entr1}, which comes from \eqref{ineq_pgf} via entropy duality, implies that we also have 
\begin{align} 
R_{\text{pg}}(A|B)_{\rho} \leq R(A|B)_{\rho}\leq \sqrt{R_{\text{pg}}(A|B)_{\rho}} \, .
\end{align}
This was also shown in~\cite{dupuis_entanglement_2015}. 
Note that in the special case where $\rho_{AB}$ has the form of a Choi state, i.e., $\tr_B \, \rho_{AB}=\frac{1}{|A|} \id_A$, this statement also follows from~\cite{barnum_reversing_2002}. 

Now let $\rho_{XB}=\sum_x p_x \ketbra{x}_X \otimes (\rho_x)_B$ be a cq state, and consider an observer with access to the system $B$ who would like to guess the variable $X$. 
Denote by $p_{\text{guess}}(X|B)$ the optimal guessing probability which can be achieved by performing a POVM on the system $B$. 
It was shown in~\cite{konig_operational_2009} that 
\begin{align} 
\widetilde{H}^{\uparrow}_{\infty}(X|B)_{\rho} = -\log p_{\text{guess}}(X|B) \,  . 
\end{align}
On the other hand, it is also known that~\cite{buhrman_possibility_2008}
\begin{align} 
\widetilde{H}^{\downarrow}_{2}(X|B)_{\rho} = -\log p^{\text{pg}}_{\text{guess}}(X|B) \, ,
\end{align}
where $p_{\text{guess}}^{\text{pg}}(X|B)$ denotes the guessing probability of the pretty good measurement introduced in~\cite{belavkin_optimal_1975, hausladen_`pretty_1994}. 
Clearly $p^{\text{pg}}_{\text{guess}} (X|B) \leq p_{\text{guess}}(X|B)$, but the case $\alpha=2$ in~\eqref{eq:cond_entr_bound_cq1}, which again comes from \eqref{ineq_pgf} via entropy duality, also implies that  
\begin{align} 
p^{\text{pg}}_{\text{guess}} (X|B)  \leq p_{\text{guess}} (X|B)  \leq   \sqrt{p^{\text{pg}}_{\text{guess}} (X|B)}\, .  
\end{align}
This was originally shown in~\cite{barnum_reversing_2002}.

\subsection{Optimality conditions for pretty good measures}
Our framework also yields a novel optimality condition for the pretty good measures. 
Supposing $\tau_{ABC}$ is a purification of $\rho_{AB}$, the duality relations for R\'enyi entropies (cf. Lemma~\ref{lem_duality}) imply
\begin{align} \label{eq:aquival_dual_picture}
\widetilde{H}_{2}^{\downarrow}(A|B)_{\tau}=\widetilde{H}_{\infty}^{\uparrow}(A|B)_{\tau} \quad \iff \quad \widebar{H}^{\uparrow}_{\nicefrac{1}{2}}(A|C)_{\tau}= \widetilde{H}^{\uparrow}_{\nicefrac{1}{2}}(A|C)_{\tau}\, .
\end{align}
Applying the equality condition for max-like conditional entropies, using Lemma~\ref{lem:equal_cond_entropies}, we find that the pretty good singlet fraction and pretty good measurement are optimal if and only if ${[\tau_{AC}, \id_A \otimes \hat{\sigma}^\star_{C}]}=0$, where $\hat{\sigma}^\star_{C}:=\tr_A \, \sqrt{ \tau_{AC}}$. 
Alternately, this specific equality condition ($\alpha=\nicefrac12$) can be established via weak duality of semidefinite programs, as described in Appendix~\ref{app:sdp}.

As a simple example of optimality of the pretty good singlet fraction, consider the case of a pure bipartite $\rho_{AB}$. 
Then every purification $\tau_{ABC}=\rho_{AB}\otimes \xi_C$ for some pure $\xi_C$.
Thus, $\tau_{AC}=\rho_A\otimes \xi_C$, and it follows immediately that the optimality condition is satisfied. 
Optimality also holds for arbitrary mixtures of pure states, i.e., for states of the form $\rho_{ABY}=\sum_y q_y \ketbra{\psi_y}_{AB}\otimes \ketbra {y}_Y$ with some arbitrary distribution $q_y$, provided both $B$ and $Y$ are used in the entanglement recovery operation. 
Here any purification takes the form $\ket{\tau}_{ABYY'}=\sum_y \sqrt{q_y}\ket{\psi_y}_{AB}\ket{y}_Y\ket{y}_{Y'}$. Hence, we have that $\tau_{AY'}=\sum_y q_y \tr_B \,\ketbra{\psi_y}_{AB} \otimes \ketbra{y}_{Y'}$, a state in which $Y'$ is classical, for which it is easy to see that the optimality condition holds. 

The optimality condition for the pretty good measurement can be simplified using the classical coherent nature of the state $\tau_{AC}$, which results in a condition formulated in terms of the Gram matrix. 
Suppose $\rho_{XB}=\sum_x p_x \ketbra{x}_X \otimes (\rho_x)_B$ describes the ensemble of mixed states $(\rho_x)_B$, for which a natural purification is given by 
\begin{align} \label{eq:purification_of_rho}
\ket{\tau}_{XX'BB'}=\sum_x \sqrt{p_x}\ket{x}_X\ket{x}_{X'}\ket{\xi_x}_{BB'}\, ,
\end{align}
where $\ket{\xi_x}_{BB'}$ denotes a purification of $(\rho_x)_B$.
Then we define the (generalized) Gram matrix $G$ 
\begin{align} \label{eq:G_def}
G_{X'B'}:=\sum_{x,x'}   \sqrt{p_xp_{x'}} \ket{x}\bra{x'}_{X'}  \otimes \tr_{B} \, \ket{\xi_{x}}\bra{\xi_{x'}}_{BB'} \, .
\end{align}
This definition reverts to the usual Gram matrix when the states $(\rho_x)_B$ are pure and system $B'$ is trivial. 
Observe that we are in the setting of Proposition~\ref{ccq_states}; using the unitary  $U_{XX'}$ introduced in its proof, we find that $\left(U_{XX'}\otimes \id_{B'}\right)\tau_{XX'B'}\left(U_{XX'}^{*}\otimes \id_{B'}\right)=\ketbra{0}_X\otimes G_{X'B'}$. 
Hence, $\sqrt{\tau_{XX'B'}}=\left(U_{XX'}^{*}\otimes \id_{B'}\right)(\ketbra{0}_X\otimes \sqrt{G_{X'B'}})\left(U_{XX'}\otimes \id_{B'}\right)$ and a further calculation shows that $\tr_X\sqrt{\tau_{XX'B'}}=\hat{\sigma}^\star_{X'B'}$, with 
\begin{align}
\hat{\sigma}^\star_{X'B'}:=\sum_x \ketbra{x}_{X'} \otimes \bra{x} \sqrt{G_{X'B'}}\ket{x}_{X'}.
\end{align}
Note that $[M,N]=0$ is equivalent to $[UMU^{*},UNU^{*}]=0$ for any square matrices $M,N$ and any unitary $U$. Therefore, we find that the equality condition $[\tau_{XX'B'}, \id_X \otimes \hat{\sigma}^\star_{X'B'}]=0$ is equivalent to $[\ketbra{0}_X \otimes G_{X'B'}, \id_X \otimes\hat{\sigma}^\star_{X'B'}]=[G_{X'B'},\hat{\sigma}^\star_{X'B'}]=0$. Thus we have shown the following result:
\begin{lemma}[Optimality condition for the pretty good measurement] \label{lem:opt_pgm}
The pretty good measurement is optimal for distinguishing states in the ensemble $\{p_x,\rho_x\}$ if and only if $[G_{X'B'},\hat{\sigma}^\star_{X'B'}]=0$.
\end{lemma}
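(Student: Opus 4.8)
The plan is to translate the operational optimality statement into the algebraic commutator condition by chaining together the identities assembled earlier in this section. First I would use the operational identifications recalled in Section~\ref{sec:bounds_pg_measures}, namely $\widetilde{H}^{\uparrow}_{\infty}(X|B)_{\rho}=-\log p_{\text{guess}}(X|B)$ and $\widetilde{H}^{\downarrow}_{2}(X|B)_{\rho}=-\log p_{\text{guess}}^{\text{pg}}(X|B)$. Since the pretty good measurement is one particular POVM, $p_{\text{guess}}^{\text{pg}}(X|B)\leq p_{\text{guess}}(X|B)$, and its optimality is therefore equivalent to the entropy equality $\widetilde{H}^{\downarrow}_{2}(X|B)_{\rho}=\widetilde{H}^{\uparrow}_{\infty}(X|B)_{\rho}$.

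Next I would pass to the purification $\ket{\tau}_{XX'BB'}$ of~\eqref{eq:purification_of_rho}, so that (with $A=X$ and $B=B$) the purifying system is $C=X'B'$, and invoke the duality equivalence~\eqref{eq:aquival_dual_picture}, itself a consequence of Lemma~\ref{lem_duality}, to rewrite the equality above as $\widebar{H}^{\uparrow}_{\nicefrac{1}{2}}(X|C)_{\tau}=\widetilde{H}^{\uparrow}_{\nicefrac{1}{2}}(X|C)_{\tau}$. This is exactly the max-like equality governed by Lemma~\ref{lem:equal_cond_entropies} at $\alpha=\tfrac12$, which shows that optimality is equivalent to the commutator condition $[\tau_{XX'B'},\id_X\otimes\hat{\sigma}^\star_{X'B'}]=0$ with $\hat{\sigma}^\star_{X'B'}=\tr_X\sqrt{\tau_{XX'B'}}$.

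The remaining work is to recast this condition in terms of the Gram matrix. Using the unitary $U_{XX'}$ from the proof of Proposition~\ref{ccq_states}, a direct computation gives $(U_{XX'}\otimes\id_{B'})\,\tau_{XX'B'}\,(U_{XX'}^{*}\otimes\id_{B'})=\ketbra{0}_X\otimes G_{X'B'}$, so that $\sqrt{\tau_{XX'B'}}=(U_{XX'}^{*}\otimes\id_{B'})(\ketbra{0}_X\otimes\sqrt{G_{X'B'}})(U_{XX'}\otimes\id_{B'})$; tracing out $X$ then yields the explicit block-diagonal form of $\hat{\sigma}^\star_{X'B'}$ as the diagonal blocks of $\sqrt{G_{X'B'}}$.

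Finally I would conjugate the commutator condition by $U_{XX'}\otimes\id_{B'}$. Because $\hat{\sigma}^\star_{X'B'}$ is block-diagonal in the $X'$ register, the operator $\id_X\otimes\hat{\sigma}^\star_{X'B'}$ is left invariant by $U_{XX'}$ exactly as the cq states in Proposition~\ref{ccq_states} are, so by unitary invariance of the commutator the condition becomes $[\ketbra{0}_X\otimes G_{X'B'},\id_X\otimes\hat{\sigma}^\star_{X'B'}]=\ketbra{0}_X\otimes[G_{X'B'},\hat{\sigma}^\star_{X'B'}]=0$, which holds if and only if $[G_{X'B'},\hat{\sigma}^\star_{X'B'}]=0$. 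I expect the main obstacle to be the algebraic bookkeeping of this third step: verifying the square-root identity and that $\tr_X\sqrt{\tau_{XX'B'}}$ really collapses onto the diagonal blocks of $\sqrt{G_{X'B'}}$, together with confirming the invariance of $\id_X\otimes\hat{\sigma}^\star_{X'B'}$ under $U_{XX'}$, since everything else is a mechanical passage through the duality and the already-proven equality condition.
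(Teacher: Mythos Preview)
Your proposal is correct and follows essentially the same route as the paper: translate optimality into $\widetilde H^{\downarrow}_{2}=\widetilde H^{\uparrow}_{\infty}$, dualize via~\eqref{eq:aquival_dual_picture} to the $\alpha=\tfrac12$ max-like equality, apply Lemma~\ref{lem:equal_cond_entropies} to obtain $[\tau_{XX'B'},\id_X\otimes\hat\sigma^\star_{X'B'}]=0$, and then conjugate by $U_{XX'}$ to reduce to $[G_{X'B'},\hat\sigma^\star_{X'B'}]=0$. The paper carries out exactly these steps in the paragraphs preceding the lemma, including the invariance of $\id_X\otimes\hat\sigma^\star_{X'B'}$ under $U_{XX'}$ (which follows because $\hat\sigma^\star_{X'B'}$ is block-diagonal in $X'$), so there is no substantive difference.
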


In the case of distinguishing pure states, we recover Theorem~2 of~\cite{dalla_pozza_optimality_2015} (which was first shown in~\cite{helstrom_bayes-cost_1982}). 
To see this, observe that $B'$ is now trivial and $G_{X'}$ is the usual Gram matrix. 
Moreover, $\hat{\sigma}^\star_{X'}$ is now the diagonal of the square root of $G_{X'}$, and the commutation condition of Lemma~\ref{lem:opt_pgm} becomes $[G_{X'},\hat{\sigma}^\star_{X'}]=0$, which is equivalent to the condition in equation~(11) of~\cite{dalla_pozza_optimality_2015} (in the case of the pretty good measurement). 
Reformulating what it means for the Gram matrix $G_{X'}$ to commute with the diagonal matrix $\hat{\sigma}^\star_{X'}$ then leads to Theorem~3 of~\cite{dalla_pozza_optimality_2015}.

\section{Conclusions}

We have given a novel reverse ALT inequality (see Theorem~\ref{thm:Inv_ALT}) that answers the question of how much bigger the Petz quantum R\'enyi divergence can be compared to the minimal quantum R\'enyi divergence for $\alpha \leq 1$. More precisely, together with the standard ALT inequality it implies that $\alpha \widebar D_{\alpha}(\rho \| \sigma) \leq \widetilde D_{\alpha}(\rho \| \sigma) \leq \widebar D_{\alpha}(\rho \| \sigma)$ for $\alpha \leq 1$ and any density operators $\rho$ and $\sigma$.
This bound leads to an elegant unified framework of pretty good constructions in quantum information theory, and the ALT equality condition leads to a simple necessary and sufficient condition for their optimality. 
Previously it was observed that the min entropy $\widetilde H^{\uparrow}_\infty$ characterizes optimal measurement and singlet fraction, while $\widetilde H_2^\downarrow$ is the ``pretty good min entropy'' since it characterizes pretty good measurement and singlet fraction. 
On the other hand, we can think of $\widebar H_{\nicefrac{1}{2}}^\uparrow$ as the ``pretty good max entropy'' since it is based on the pretty good fidelity instead of the (usual) fidelity itself as in the max entropy $\widetilde H_{\nicefrac{1}{2}}^{\uparrow}$. 
Entropy duality then beautifully links the two, as the (pretty good) max entropy is dual to the (pretty good) min entropy, and the known optimality bounds can be seen to stem from the lower bound on the pretty good fidelity in \eqref{ineq_pgf}. 
Indeed, that such a unified picture might be possible was the original inspriation to look for a reverse ALT inequality of the form given in Theorem~\ref{thm:Inv_ALT}. 
It is also interesting to note that both the pretty good min and max entropies appear in achievability proofs of information processing tasks, the former in randomness extraction against quantum adversaries~\cite{tomamichel_leftover_2011} and the latter in the data compression with quantum side information~\cite{renes_one-shot_2012}.

For future work, it would be interesting to elaborate more on the novel reverse ALT inequality (see Theorem~\ref{thm:Inv_ALT}). It is know that the ALT inequality implies the Golden-Thompson (GT) inequality~\cite{golden65,thompson65} via the Lie-Trotter product formula. Reverse versions of the GT inequality are well-studied~\cite{HP93}. It would be thus interesting to see if Theorem~\ref{thm:Inv_ALT} can be related to the reverse GT inequality. Recent progress on proving multivariate trace inequalities~\cite{SBT15} (see also~\cite{Lieb73}) suggests the possibility of an $n$-matrix extension of the reversed ALT inequality.

\vspace{5mm}

\noindent \textbf{Acknowledgements.}
We thank Volkher Scholz and Marco Tomamichel for helpful conversations.
JMR and DS acknowledge support by the Swiss National Science Foundation (SNSF) via the National Centre of Competence in Research ``QSIT'' and by the European Commission via the project ``RAQUEL''.

\begin{appendices}

\section{Optimal marginals for classically coherent states} \label{app:dephasing}
This appendix details the argument that cq states are optimal in the conditional entropy expressions for classically coherent states. 
First we recall the data processing inequality (DPI), which states that for all completely positive, trace-preserving maps $\mathcal{E}$ and for all non-negative operators $\rho$ and $\sigma$, we have
\begin{align}
D(\rho \| \sigma)\geq D\big(\mathcal{E}(\rho) \|\mathcal{E}( \sigma)\big) \, .
\end{align} 
It was shown that $\widebar{D}_{\alpha}$ satisfies the DPI for $\alpha \in (0,1) \cup (1,2]$ in~\cite{petz_quasi-entropies_1986}, while \cite{frank_monotonicity_2013} (see also~\cite{beigi_sandwiched_2013}) shows that $\widetilde{D}_{\alpha}$  satisfies the DPI for $\alpha \in [\frac{1}{2}, \infty]$.
Following the approach taken in \cite[Lemma A.1]{dupuis_decoupling_2014} to establish a similar result for the smooth min entropy, we can show 
\begin{lemma} \label{lem:DPI_dephasing}
Let   $ \ket{\rho}_{XX'BB'}=\sum_x \sqrt{p_x}\ket{x}_X\ket{x}_{X'}\ket{\xi_x}_{BB'}$ be a pure state on  $X\otimes X' \otimes B \otimes B'$, where $p_x \in [0,1]$ with $\sum_x p_x=1\,$, and $X'\simeq X$. Then, for any density operator $\sigma_{X'B}\,$, we have that 
 \begin{align}
 \label{eq:classicalcoherentQ}
&Q_{\alpha}(\rho_{XX'B}\| \id_X \otimes \sigma_{X'B})\leq Q_{\alpha}(\rho_{XX'B}\| \id_X \otimes \sigma^{\cl}_{X'B}) \quad \text{for }  \alpha \in [\tfrac{1}{2},1)   \, ,
 \end{align}
 where $\sigma^{\cl}_{X'B}:=\sum_{x}\ketbra x_{X'}\otimes \bra{x}\sigma_{X'B}\ket{x}_{X'}$. 
\end{lemma}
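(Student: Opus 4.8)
The plan is to exhibit a single completely positive trace-preserving map $\cE$ on $X\ot X'\ot B$ that simultaneously fixes $\rho_{XX'B}$ and dephases the second operator, i.e.\ $\cE(\rho_{XX'B})=\rho_{XX'B}$ while $\cE(\id_X\ot\sigma_{X'B})=\id_X\ot\sigma^{\cl}_{X'B}$. Once such a map is available, the lemma follows directly from the data processing inequality: since $\widebar D_\alpha$ obeys the DPI on $(0,1)$ and $\widetilde D_\alpha$ obeys it on $[\tfrac12,\infty]$, we get $D_\alpha(\rho_{XX'B}\|\id_X\ot\sigma_{X'B})\geq D_\alpha(\rho_{XX'B}\|\id_X\ot\sigma^{\cl}_{X'B})$ for $\alpha\in[\tfrac12,1)$. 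Because the prefactor $\tfrac{1}{\alpha-1}$ in the definitions~\eqref{eq:petz_entropy} and~\eqref{eq:min_entropy} is negative for $\alpha<1$, and $\tr\rho_{XX'B}$ is the same on both sides, this inequality for $D_\alpha$ reverses when rewritten in terms of $Q_\alpha$, yielding precisely $Q_\alpha(\rho_{XX'B}\|\id_X\ot\sigma_{X'B})\leq Q_\alpha(\rho_{XX'B}\|\id_X\ot\sigma^{\cl}_{X'B})$.

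The map will be an average of correlated phase unitaries, mirroring the construction of~\cite[Lemma~A.1]{dupuis_decoupling_2014}. Concretely, for $\omega=e^{2\pi\I/|X|}$ I set $V_j:=\sum_y \omega^{jy}\ketbra{y}$ on a single copy of the $X$-system, define $W_j:=(V_j^{*})_X\ot (V_j)_{X'}\ot\id_B$, and let $\cE(\cdot):=\tfrac{1}{|X|}\sum_{j}W_j(\cdot)W_j^{*}$, which is manifestly a valid channel. Verifying the two required properties is then a short computation. Acting with $W_j$ on a matrix unit $\ket{x}\bra{x'}_X\ot\ket{x}\bra{x'}_{X'}$ produces a phase $\omega^{-j(x-x')}$ from the $X$-factor and $\omega^{+j(x-x')}$ from the $X'$-factor, and these cancel; since $\rho_{XX'B}=\sum_{x,x'}\sqrt{p_xp_{x'}}\,\ket{x}\bra{x'}_X\ot\ket{x}\bra{x'}_{X'}\ot\tr_{B'}\ket{\xi_x}\bra{\xi_{x'}}_{BB'}$ is a sum of exactly such classically coherent terms, every term is left invariant and hence $\cE(\rho_{XX'B})=\rho_{XX'B}$. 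For the second operator, $V_j^{*}\id_X V_j=\id_X$ leaves the $X$-factor untouched, while $\tfrac1{|X|}\sum_j V_j\,\sigma_{X'B}\,V_j^{*}$ is the standard pinching that annihilates the off-diagonal blocks of $\sigma$ in the $X'$-basis, so that $\cE(\id_X\ot\sigma_{X'B})=\id_X\ot\sigma^{\cl}_{X'B}$.

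The main point to get right is the invariance $\cE(\rho_{XX'B})=\rho_{XX'B}$, which is exactly where the classically coherent structure is essential: it is the matching labels $\ket{x}_X\ket{x}_{X'}$ that make the conjugate phase on $X$ compensate the phase on $X'$, and this would fail for a generic state on $X\ot X'\ot B$. The remaining steps are routine, so I do not anticipate any serious obstacle; one only needs to note that $\cE$ is completely positive and trace-preserving so that the DPI applies on the stated $\alpha$-range, and to track the sign of $\tfrac{1}{\alpha-1}$ when passing from the inequality for $D_\alpha$ to the one for $Q_\alpha$.
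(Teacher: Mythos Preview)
Your argument is correct. Both your proof and the paper's rely on the DPI applied to a channel that fixes the classically coherent state $\rho_{XX'B}$, but the channels and the subsequent bookkeeping differ. The paper uses the binary projective channel $\cE(\cdot)=P_{XX'}(\cdot)P_{XX'}+(\id-P_{XX'})(\cdot)(\id-P_{XX'})$ with $P_{XX'}=\sum_x\ketbra{x}_X\ot\ketbra{x}_{X'}$; this does \emph{not} send $\id_X\ot\sigma_{X'B}$ directly to $\id_X\ot\sigma^{\cl}_{X'B}$, so the paper needs two extra ingredients---indifference of $Q_\alpha$ to the part of the second argument outside the support of the first (to drop the $(\id-P_{XX'})$ block), and then the dominance property $\sigma\leq\sigma'\Rightarrow Q_\alpha(\rho\|\sigma)\leq Q_\alpha(\rho\|\sigma')$ (to pass from $\sum_x\ketbra{x}_X\ot\ketbra{x}_{X'}\sigma_{X'B}\ketbra{x}_{X'}$ up to $\id_X\ot\sigma^{\cl}_{X'B}$). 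Your correlated-phase averaging channel achieves the dephasing on $X'$ in one shot while leaving $\id_X$ untouched, so the DPI immediately gives the claim without invoking support truncation or dominance; in that sense your route is a bit more direct. The trade-off is that the paper's projective channel makes the geometric picture (project onto the ``diagonal'' subspace where $\rho_{XX'B}$ lives) very transparent, whereas your pinching argument is the one that generalizes most cleanly when one wants an explicit Kraus/unitary description.
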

\begin{proof} 
Let $P_{XX'}=\sum_x \ketbra{x}_{X} \otimes \ketbra{x}_{X'}$ and define the quantum channel $\mathcal E$ from $X \otimes X'$ to itself by $\mathcal{E}(\cdot):= P_{XX'}(\cdot)P_{XX'}+(\id_{XX'}-P_{XX'})(\cdot)(\id_{XX'}-P_{XX'})$. 
Since $P_{XX'}\ket{\Psi}_{XX'BB'}=\ket{\Psi}_{XX'BB'}$, $\mathcal{E}_{XX'}\otimes \mathcal I_B$ leaves the density operator $\rho_{XX'B}$ invariant.
By the DPI we then have, for  $\alpha \in [\tfrac{1}{2},1)$,
 \begin{align}
Q_{\alpha}(\rho_{XX'B}\| \id_X \otimes \sigma_{X'B})
&\leq Q_{\alpha}\big(\rho_{XX'B}\| \mathcal{E}_{XX'} \otimes \mathcal I_B(\id_X \otimes \sigma_{X'B})\big) \\
&=Q_{\alpha}\big(\rho_{XX'B}\| (P_{XX'} \otimes \id_B)(\id_X \otimes \sigma_{X'B})(P_{XX'} \otimes \id_B)\big)\,.
 \end{align}
In the second line we use the fact that $Q_\alpha$ is indifferent to parts of its second argument which are not contained in the support of its first argument. 
Observe that $(P_{XX'} \otimes \id_B)(\id_X \otimes \sigma_{X'B})({P_{XX'} \otimes \id_B})=\sum_x \ketbra x_{X} \otimes \ketbra{x}_{X'}\sigma_{X'B}\ketbra{x}_{X'}\leq \id_{X}  \otimes \sigma^{\cl}_{X'B}$. Inequality~\eqref{eq:classicalcoherentQ} now follows directly from the dominance property of $D_{\alpha}$ (see e.g.,~\cite{tomamichel_quantum_2016}), which states (in terms of $Q_{\alpha}$) that $Q_{\alpha}(\rho\|  \sigma) \leq Q_{\alpha}(\rho\|  \sigma')$ for any non-negative operators $\rho, \sigma, \sigma'$ with $\sigma \leq \sigma'$ .
\end{proof}

\section{Sufficient condition for equality of max-like entropies} \label{app:equality_condition_extrema}

In this Appendix, we show that, for $\alpha \in [\tfrac{1}{2},1)$, the function  $f_{\alpha}: \cD(B) \ni \sigma_B \mapsto  \widetilde{Q}_{\alpha}(\rho_{AB}||\id_A \otimes \sigma_B)$  attains its global maximum at $\sigma_B=\sigma^\star_B$  if $[\rho_{AB},\id_A \otimes \sigma^\star_B]=0$. We use the notation of Section~\ref{sec:equality_cond_entropy}. The following lemma is similar to Lemma~5.1 of~\cite{sebastiani_derivatives_1996-1}. 

\begin{lemma} \label{lem:derivative}
Let $I \subset \mathbb{R}$ be open and  $t_0 \in I$. Let $A(t)$ be a matrix whose entries are smooth functions of $t \in I$ and $A(t)>0$ for all $t \in I$. 
Further, let $B$ be a matrix such that $[B,A(t_0)]=0$. Then,
\begin{align} \label{eq:derivative_of_matrix_power}
\frac{d}{dt}\Bigr|_{\substack{t=t_0}} \tr \, BA(t)^r=r \, \tr \, B A(t_0)^{r-1} A'(t_0)  \quad \text{for} \quad r \in \mathbb{R} \, ,
\end{align}
where $A'(t_0):=\frac{d}{dt}\Bigr|_{\substack{t=t_0}} A(t)$\, .
\end{lemma}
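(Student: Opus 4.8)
The plan is to reduce the differentiation of the matrix power to the differentiation of resolvents, where the Leibniz rule applies verbatim, and then to exploit the hypothesis $[B,A(t_0)]=0$ to collapse the result. The conceptual point to keep in mind is that in general $\frac{d}{dt}A(t)^r\neq r\,A(t)^{r-1}A'(t)$, precisely because $A(t)$ and $A'(t)$ need not commute; the content of the lemma is that, after tracing against a $B$ that commutes with $A(t_0)$, the naive scalar chain rule is nonetheless restored at $t=t_0$.

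First I would treat the range $r\in(0,1)$ via the integral representation valid for $A>0$,
\begin{align}
A^r=\frac{\sin(\pi r)}{\pi}\int_0^\infty \lambda^{r-1}\,A\,(A+\lambda\id)^{-1}\,d\lambda\,,
\end{align}
which follows by applying the scalar identity $x^r=\frac{\sin(\pi r)}{\pi}\int_0^\infty \lambda^{r-1}x/(x+\lambda)\,d\lambda$ in the eigenbasis of $A$. Writing $A(A+\lambda)^{-1}=\id-\lambda(A+\lambda)^{-1}$, differentiating under the integral sign (justified since, with $A(t)>0$, the integrands and their $t$-derivatives are uniformly dominated near $t_0$), and using the elementary resolvent derivative $\frac{d}{dt}(A(t)+\lambda)^{-1}=-(A(t)+\lambda)^{-1}A'(t)(A(t)+\lambda)^{-1}$, I obtain
\begin{align}
\frac{d}{dt}A(t)^r=\frac{\sin(\pi r)}{\pi}\int_0^\infty \lambda^{r}\,(A+\lambda)^{-1}A'(t)(A+\lambda)^{-1}\,d\lambda\,.
\end{align}

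Next I would multiply by $B$, take the trace, and set $t=t_0$. Since $[B,A(t_0)]=0$ gives $[B,(A(t_0)+\lambda)^{-1}]=0$, cyclicity of the trace collapses each integrand to $\tr\,B\,(A(t_0)+\lambda)^{-2}A'(t_0)$, and pulling the $\lambda$-independent factors $B$ and $A'(t_0)$ out of the integral reduces the whole computation to the scalar integral $\frac{\sin(\pi r)}{\pi}\int_0^\infty \lambda^{r}(x+\lambda)^{-2}\,d\lambda$. This equals $r\,x^{r-1}$ by the Beta-function evaluation combined with Euler's reflection formula $\Gamma(r)\Gamma(1-r)=\pi/\sin(\pi r)$, so that $\frac{\sin(\pi r)}{\pi}\int_0^\infty \lambda^{r}(A_0+\lambda)^{-2}\,d\lambda=r\,A_0^{r-1}$, yielding the claimed identity $\frac{d}{dt}|_{t_0}\tr\,BA(t)^r=r\,\tr\,BA_0^{r-1}A'(t_0)$.

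Finally, to reach all $r\in\mathbb{R}$, I would bootstrap from this base case: for nonnegative integers the Leibniz rule $\frac{d}{dt}A^n=\sum_{k=0}^{n-1}A^kA'A^{n-1-k}$ gives the identity at once, since cyclicity together with $[B,A_0]=0$ makes each of the $n$ summands equal to $\tr\,BA_0^{n-1}A'$; negative powers follow by applying the same resolvent reasoning to $A^{-1}$; and general real $r$ is obtained by factoring $A^r=A^mA^s$ with $m\in\mathbb{Z}$, $s\in(0,1)$ and applying the product rule once more, the commutation with $B$ again forcing the cross terms to combine correctly. A clean alternative covering all $r$ uniformly is the first-order perturbation (Daleckii–Krein) formula: in the eigenbasis of $A_0$ the derivative of $f(A(t))$ has entries $f^{[1]}(\mu_i,\mu_j)(A'_0)_{ij}$ with $f^{[1]}$ the divided difference, and $[B,A_0]=0$ restricts $B$ to entries with $\mu_i=\mu_j$, where $f^{[1]}$ degenerates to $f'(\mu_i)=r\mu_i^{r-1}$, again collapsing the trace to $r\,\tr\,BA_0^{r-1}A'_0$. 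I expect the main obstacle to be purely technical—justifying differentiation under the integral and the endpoint convergence of the scalar integral (which forces the restriction $0<r<1$ in the base case)—rather than conceptual, since the commutation hypothesis is exactly what reduces the matrix chain rule to its scalar form.
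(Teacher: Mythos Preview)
Your argument is correct. The integral-representation route for $r\in(0,1)$, the Leibniz-rule argument for integer powers, and the bootstrap $r=m+s$ via the product rule all go through exactly as you describe; the Daleckii--Krein alternative is also a clean way to handle all $r$ at once. The only minor points worth tightening in a final write-up are the justification of differentiation under the integral near $\lambda=0$ and $\lambda=\infty$ (which you flag), and spelling out the negative-integer case explicitly rather than by allusion.

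The paper takes a different and much shorter route: it simply invokes Theorem~3.5 of Sebastiani (\emph{SIAM J.\ Matrix Anal.\ Appl.}\ 17 (1996), 640--648), which provides a closed-form expression for $\frac{d}{dt}\tr\,BA(t)^r$ valid for all real $r$, of the shape
\[
\frac{d}{dt}\Big|_{t_0}\tr\,BA(t)^r
= r\,\tr\,B A'(t_0)A(t_0)^{r-1} + r\,\tr\,B\,H_{0,r}\,A(t_0)^{r-1},
\]
with an explicit correction term $H_{0,r}$, and then checks in one line that $[B,A(t_0)]=0$ forces $\tr\,B\,H_{0,r}\,A(t_0)^{r-1}=0$. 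So the paper outsources the analytic work to a black-box derivative formula and only has to verify that the commutation hypothesis kills the extra term. Your approach, by contrast, is self-contained: it reproves the needed special case of the matrix-power derivative from scratch, which is longer but has the advantage of not requiring the reader to consult an external reference (and, via the Daleckii--Krein version, makes transparent \emph{why} the commutation hypothesis is exactly what collapses the divided difference to the ordinary derivative).
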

\begin{proof}
Note that it is straightforward to adapt Theorem 3.5 of~\cite{sebastiani_derivatives_1996-1} to the complex case. Therefore, by setting $\alpha=0$ in the equation~(26) of~\cite{sebastiani_derivatives_1996-1}, we find that
\begin{align}
\frac{d}{dt}\Bigr|_{\substack{t=t_0}} \tr \,  BA(t)^r =r \,  \tr   \, B  A'(t_0) A(t_0)^{r-1} + r \,  \tr \,  B H_{0,r} A(t_0)^{r-1}  \, ,
\end{align}
where  $H_{0,r}$ is defined in equation~(27) of~\cite{sebastiani_derivatives_1996-1}. Since $[A(t_0),B]=0$, a short calculation shows that $\tr \, B H_{0,r} A(t_0)^{r-1}=0$.
\end{proof}
\begin{lemma} \label{lem:derivative_tilde_Q}
Set $I=(-\delta,\delta) \subset \mathbb{R}$ for some $\delta>0$ and let $A(t)$ be a matrix whose entries are smooth functions of $t \in I$ and $A(t)>0$ for all $t \in I$.  For $B$ a density operator such that $[B,A(0)]=0$, 
\begin{align}
\frac{d}{dt}\Bigr|_{\substack{t=0}} \widetilde{Q}_{\alpha}(B||A(t))=(1-\alpha) \, \Re \, \tr \, B^{\alpha} A(0)^{-\alpha} A'(0)  \quad \text{for} \quad \alpha \in (0,1) \, ,
\end{align}
where $A'(t_0):=\frac{d}{dt}\Bigr|_{\substack{t=t_0}} A(t)$ for $t_0 \in I$.
\end{lemma}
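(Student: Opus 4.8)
The plan is to reduce everything to two applications of Lemma~\ref{lem:derivative}. Write $s:=\frac{1-\alpha}{2\alpha}$, put $P(t):=A(t)^s$ and $M(t):=P(t)\,B\,P(t)=A(t)^s B A(t)^s$, so that by the definition of $\widetilde{Q}_\alpha$ we have $\widetilde{Q}_\alpha(B\|A(t))=\tr M(t)^\alpha$. Since $A(t)>0$ is smooth in $t$, so is $P(t)$ by smooth functional calculus, and hence $M(t)$ is a Hermitian matrix with smooth entries; assuming first $B>0$ we also get $M(t)>0$, the general case following by restricting to the support of $B$. The key observation is that Lemma~\ref{lem:derivative} applies to powers of any smooth positive matrix family, not just to $A(t)$; applying it to the family $M(t)$, with the fixed matrix there chosen to be $\id$ (so that the commutation hypothesis $[\id,M(0)]=0$ is trivial) and exponent $\alpha$, gives
\begin{align} \label{eq:outer}
\frac{d}{dt}\Big|_{t=0}\widetilde{Q}_\alpha(B\|A(t))=\alpha\,\tr\big[M(0)^{\alpha-1}M'(0)\big]\,.
\end{align}

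Next I would expand $M'(0)=P'(0)\,B\,P(0)+P(0)\,B\,P'(0)$. Because $A(t)$ is positive, $P'(0)=\frac{d}{dt}|_{t=0}A(t)^s$ is Hermitian, so the two summands are Hermitian conjugates of one another, while $M(0)^{\alpha-1}$ is itself Hermitian; by cyclicity of the trace this collapses \eqref{eq:outer} into a real part,
\begin{align} \label{eq:realpart}
\frac{d}{dt}\Big|_{t=0}\widetilde{Q}_\alpha(B\|A(t))=2\alpha\,\Re\,\tr\big[M(0)^{\alpha-1}P'(0)\,B\,P(0)\big]\,.
\end{align}
This is where the $\Re$ in the statement originates. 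Now I would invoke the hypothesis $[B,A(0)]=0$, which forces $B$ to commute with every function of $A(0)$, in particular with $P(0)=A(0)^s$ and with $M(0)=A(0)^{2s}B$. Using cyclicity together with this commutativity, the operator $K:=B\,P(0)\,M(0)^{\alpha-1}$ fuses into the single $A(0)$-commuting factor $K=A(0)^{s(2\alpha-1)}B^{\alpha}$, so that \eqref{eq:realpart} reads $2\alpha\,\Re\,\tr[K\,P'(0)]$ with $[K,A(0)]=0$.

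Finally, since $K$ is constant I may pull the derivative through the trace, $\tr[K\,P'(0)]=\frac{d}{dt}|_{t=0}\tr[K\,A(t)^s]$, and a second application of Lemma~\ref{lem:derivative} — now to the family $A(t)$, with the fixed matrix chosen to be $K$ (legitimate precisely because $[K,A(0)]=0$) and exponent $r=s$ — yields $\tr[K\,P'(0)]=s\,\tr[K\,A(0)^{s-1}A'(0)]$. The proof then closes by the exponent bookkeeping $2\alpha s=1-\alpha$ and $K\,A(0)^{s-1}=A(0)^{s(2\alpha-1)+s-1}B^{\alpha}=A(0)^{-\alpha}B^{\alpha}=B^{\alpha}A(0)^{-\alpha}$, giving exactly $(1-\alpha)\,\Re\,\tr[B^{\alpha}A(0)^{-\alpha}A'(0)]$ (the argument of $\Re$ being in fact real under the commutation hypothesis, so that the real part is harmless). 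I expect the main obstacle to be making the double use of Lemma~\ref{lem:derivative} airtight: checking that $M(t)$ is genuinely a smooth positive matrix family so the lemma applies to it with fixed matrix $\id$, and verifying that the commutation assumption is exactly what is needed to fuse $M(0)^{\alpha-1}$, $P(0)$ and $B$ into a single $A(0)$-commuting matrix $K$ enabling the second application; the accompanying exponent arithmetic, though routine, must land on $-\alpha$ and $1-\alpha$ for the clean final form.
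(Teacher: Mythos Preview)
Your argument for strictly positive $B$ is correct and is essentially the paper's proof: a first application of Lemma~\ref{lem:derivative} to the outer power (with fixed matrix $\id$), the Hermitian-conjugate trick to extract $2\alpha\,\Re$, fusion of the commuting factors into $K=B^{\alpha}A(0)^{s(2\alpha-1)}$, and a second application of Lemma~\ref{lem:derivative} to $A(t)^s$ with fixed matrix $K$. The exponent bookkeeping lands exactly where it should.

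The gap is the line ``the general case following by restricting to the support of $B$.'' This does not work for your $M(t)=A(t)^{s}BA(t)^{s}$: since $A(t)^{s}$ is invertible, $\ker M(t)=A(t)^{-s}\ker B$, so the support of $M(t)$ moves with $t$ and there is no fixed subspace on which $M(t)>0$; Lemma~\ref{lem:derivative} therefore does not apply to $M(t)$ when $B$ is rank-deficient. The paper handles this by regularizing to $B_\varepsilon=B+\varepsilon\id>0$, running your computation at each $\varepsilon>0$ (and at general $t_0$, not only $t_0=0$), and then devoting the bulk of the proof to justifying the interchange of $\lim_{\varepsilon\to 0}$ with $\tfrac{d}{dt}$ via uniform convergence in $t_0$, using operator monotonicity of $t\mapsto t^{\alpha}$ together with Dini's theorem. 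A shorter fix in the spirit of your remark is to replace $M(t)$ by $N(t):=B^{1/2}A(t)^{2s}B^{1/2}$, which has the same nonzero spectrum (hence the same $\tr(\cdot)^{\alpha}$) but support equal to $\operatorname{supp}(B)$ for every $t$; restricted to that fixed subspace $N(t)$ is strictly positive and smooth, so Lemma~\ref{lem:derivative} applies and the same commutation and exponent manipulations close the argument.
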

\begin{proof}
To simplify the notation, let us define $\beta:=\tfrac{1-\alpha}{2\alpha}$. We set $B_{\varepsilon}:=B+\varepsilon \id>0$ for some $\varepsilon>0$. Using Lemma~\ref{lem:derivative} (with $A=A(t)^{\beta}B_{\varepsilon}A(t)^{\beta}$ and $B=\id$), we find  
\begin{align}
\frac{d}{dt}\Bigr|_{\substack{t=t_0}}& \tr \left(A(t)^{\beta}B_{\varepsilon}A(t)^{\beta}\right)^{\alpha} \nonumber \\
&= \alpha \, \tr \, \left(A(t_0)^{\beta}B_{\varepsilon}A(t_0)^{\beta}\right)^{\alpha-1} \frac{d}{dt}\Bigr|_{\substack{t=t_0}} \left(A(t)^{\beta}B_{\varepsilon}A(t)^{\beta} \right) \\
\begin{split}
&= \alpha \,  \tr \, \left(A(t_0)^{\beta}B_{\varepsilon}A(t_0)^{\beta}\right)^{\alpha-1} \left( \frac{d}{dt}\Bigr|_{\substack{t=t_0}} A(t)^{\beta}B_{\varepsilon}A(t_0)^{\beta} +A(t_0)^{\beta} B_{\varepsilon} \frac{d}{dt}\Bigr|_{\substack{t=t_0}}A(t)^{\beta}  \right) \, .
\end{split}
\end{align} 
This can be simplified by noting that for any Hermitian matrix $H$ and any matrix $C$, 
\begin{align}
\tr  \, H(C+C^{*})
=\tr \, HC+\tr \, H C^{*} 
= \tr \, HC+\tr \, H^{*}  C^{*}
=\tr \, HC+\left(\tr \, HC \right)^{\ast} 
=2 \, \text{Re} \, \tr \, HC .
\end{align} 
Using this we obtain
\begin{align} \label{eq:derivative_Q_calculation}
\frac{d}{dt}\Bigr|_{\substack{t=t_0}} \widetilde{Q}_{\alpha}\big(B_{\varepsilon}||A(t)\big)
&=2 \alpha \, \Re \, \tr  \, \left(A(t_0)^{\beta}B_{\varepsilon}A(t_0)^{\beta}\right)^{\alpha-1} \frac{d}{dt}\Bigr|_{\substack{t=t_0}} A(t)^{\beta}B_{\varepsilon}A(t_0)^{\beta} \\
&=2 \alpha \, \Re \, \tr \, A(t_0)^{-\beta} \left(A(t_0)^{\beta}B_{\varepsilon} A(t_0)^{\beta}\right)^{\alpha} \frac{d}{dt}\Bigr|_{\substack{t=t_0}} A(t)^{\beta}\, .
\end{align} 
Taking the limit $\varepsilon \rightarrow 0$ yields
\begin{align}
\lim_{\varepsilon\to 0}\frac{d}{dt}\Bigr|_{\substack{t=t_0}} \widetilde{Q}_{\alpha}\big(B_{\varepsilon}||A(t)\big)
&=2 \alpha \, \Re \, \tr \, A(t_0)^{-\beta} \left(A(t_0)^{\beta}B A(t_0)^{\beta}\right)^{\alpha} \frac{d}{dt}\Bigr|_{\substack{t=t_0}} A(t)^{\beta}\, .
\end{align}
At $t_0=0$ the righthand side can be simplified by again making use of Lemma~\ref{lem:derivative} as well as $[A(0),B]=0$:
\begin{align}
\lim_{\varepsilon\to 0}\frac{d}{dt}\Bigr|_{\substack{t=0}} \widetilde{Q}_{\alpha}\big(B_{\varepsilon}||A(t)\big)
&=2 \alpha \, \Re \, \tr  \, B^{\alpha} A(0)^{\beta (2\alpha-1)}\frac{d}{dt}\Bigr|_{\substack{t=0}} A(t)^{\beta} \\
&=(1-\alpha) \, \Re  \,  \tr \,  B^{\alpha} A(0)^{-\alpha} A'(0) \, .
\end{align}

It remains to be shown that the limit can be interchanged with the derivative. 
This follows if we ensure that $\frac{d}{dt}\bigr|_{\substack{t=t_0}} \widetilde{Q}_{\alpha}(B_{\varepsilon}||A(t))$ converges uniformly in $t_0 \in [-\nicefrac{\delta}{2},\nicefrac{\delta}{2}]$ for $\varepsilon \rightarrow 0$. 
To show uniform convergence, it suffices to show  
\begin{align}
\lim_{\varepsilon \rightarrow 0} \sup_{t_0 \in  [-\delta/2,\delta/2]} \norm{A(t_0)^{-\beta} \left[ \left(A(t_0)^{\beta}B_{\varepsilon} A(t_0)^{\beta}\right)^{\alpha}-\left(A(t_0)^{\beta}B A(t_0)^{\beta}\right)^{\alpha} \right] \frac{d}{dt}\Bigr|_{\substack{t=t_0}} A(t)^{\beta}}_1 =0\, ,
\end{align}
where we used that $|\tr (M)|\leqslant \norm{M}_1$ for any square matrix $M$ (see, e.g.,~\cite[Exercise~IV~2.12]{bhatia_matrix_1997}). By the generalized H\"older inequality for matrices (see~\eqref{eq:hoelder}), we find that it is enough to show that
\begin{align}
\lim_{\varepsilon \rightarrow 0} \sup_{t_0 \in  [-\delta/2,\delta/2]} \norm{A(t_0)^{-\beta}}_{\infty} \norm{ \left(A(t_0)^{\beta}B_{\varepsilon} A(t_0)^{\beta}\right)^{\alpha}-\left(A(t_0)^{\beta}B A(t_0)^{\beta}\right)^{\alpha} }_1\norm{ \frac{d}{dt}\Bigr|_{\substack{t=t_0}} A(t)^{\beta}}_{\infty} =0\, .
\end{align}
Note that the infinity-norm terms are bounded on the compact interval $t_0 \in  [-\nicefrac{\delta}{2},\nicefrac{\delta}{2}]$, as $A(t)^{\beta}$ is continuously differentiable for $A(t)>0$.
Thus, we need only show  that
\begin{align}\label{eq:unif_conv_matrix_powers}
\lim_{\varepsilon \rightarrow 0} \sup_{t_0 \in  [-\nicefrac{\delta}{2},\nicefrac{\delta}{2}]}  \norm{ \left(A(t_0)^{\beta}B_{\varepsilon} A(t_0)^{\beta}\right)^{\alpha}-\left(A(t_0)^{\beta}B A(t_0)^{\beta}\right)^{\alpha} }_1=0\, .
\end{align}
Since $t \rightarrow t^{\alpha}$ is operator monotone for $\alpha \in [0,1]$ (L\"owner's theorem~\cite{Lowner1934}), the matrix inside the trace norm is positive, and hence~\eqref{eq:unif_conv_matrix_powers} is equivalent to
\begin{align}\label{eq:unif_conv_matrix_powers2}
\lim_{\varepsilon \rightarrow 0} \sup_{t_0 \in  [-\nicefrac{\delta}{2},\nicefrac{\delta}{2}]} \norm{ \left(A(t_0)^{\beta}B_{\varepsilon} A(t_0)^{\beta}\right)^{\alpha}}_1- \norm{\left(A(t_0)^{\beta}B A(t_0)^{\beta}\right)^{\alpha} }_1=0\, .
\end{align}
Note that $\varepsilon \mapsto \|(A(t_0)^{\beta}B_{\varepsilon} A(t_0)^{\beta})^{\alpha}\|_1$ is monotonically decreasing (again by  L\"owner's theorem). Then, by Dini's theorem,  it converges uniformly to  $\|(A(t_0)^{\beta}B A(t_0)^{\beta})^{\alpha}\|_1$, which proves~\eqref{eq:unif_conv_matrix_powers2}, and hence the desired uniformity of the convergence. 
\end{proof}
We are now ready to calculate the derivative of the function  $f_{\alpha}$ at $\sigma_B=\sigma^\star_B$.
\begin{lemma}
\label{lem:alphasufficiency}
Let $\alpha \in [\tfrac{1}{2},1)$ and $\rho_{AB} \in \cD(A\otimes B)$ be such that $[\rho_{AB}, \id_A \otimes \sigma^\star_B]=0$. Then the  function $f_{\alpha}:\cD(B) \ni \sigma_B\mapsto  \widetilde{Q}_{\alpha}(\rho_{AB}||\id_A \otimes \sigma_B)$ attains its global maximum at $\sigma^\star_B$ as defined in \eqref{def:tilde_sigma}. 
\end{lemma}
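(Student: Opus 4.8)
The plan is to exploit that $f_\alpha$ is a \emph{concave} function on the convex set $\cD(B)$, so that verifying stationarity at $\sigma^\star_B$ is enough: a concave function whose first-order variations vanish at a point attains its global maximum there. Concretely, I would fix an arbitrary $\sigma_B\in\cD(B)$, consider the segment $\sigma_B(t):=(1-t)\sigma^\star_B+t\,\sigma_B$ for $t\in[0,1]$, and set $g(t):=f_\alpha(\sigma_B(t))$. If $f_\alpha$ is concave in $\sigma_B$ for $\alpha\in[\tfrac12,1)$, then $g$ is concave, hence $g'$ is nonincreasing; so once I show $g'(0)=0$, it follows that $g'(t)\le 0$ on $[0,1]$, giving $f_\alpha(\sigma_B)=g(1)\le g(0)=f_\alpha(\sigma^\star_B)$, as claimed.

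The value $g'(0)$ is exactly what Lemma~\ref{lem:derivative_tilde_Q} computes, applied with $B=\rho_{AB}$ and $A(t)=\id_A\otimes\sigma_B(t)$; the hypothesis $[\rho_{AB},\id_A\otimes\sigma^\star_B]=0$ supplies the required commutation at $t=0$. It yields
\begin{align}
g'(0)&=(1-\alpha)\,\Re\,\tr\,\rho_{AB}^{\alpha}\,\big(\id_A\otimes(\sigma^\star_B)^{-\alpha}\big)\big(\id_A\otimes(\sigma_B-\sigma^\star_B)\big) \nonumber\\
&=(1-\alpha)\,\Re\,\tr_B\,\hat{\sigma}^\star_B\,(\sigma^\star_B)^{-\alpha}(\sigma_B-\sigma^\star_B)\, ,
\end{align}
where $\hat{\sigma}^\star_B=\tr_A\rho_{AB}^{\alpha}$. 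The decisive point is the explicit form~\eqref{def:tilde_sigma} of the optimizer: writing $Z:=\tr(\hat{\sigma}^\star_B)^{1/\alpha}$ we have $\sigma^\star_B=(\hat{\sigma}^\star_B)^{1/\alpha}/Z$, hence $(\sigma^\star_B)^{-\alpha}=Z^{\alpha}(\hat{\sigma}^\star_B)^{-1}$ and therefore $\hat{\sigma}^\star_B\,(\sigma^\star_B)^{-\alpha}=Z^{\alpha}\,\id_B$ on the support of $\sigma^\star_B$. Substituting gives $g'(0)=(1-\alpha)Z^{\alpha}\,\tr(\sigma_B-\sigma^\star_B)=0$, since both operators are normalized; this is precisely the stationarity needed.

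For the concavity input I would invoke the known joint concavity of $(\rho,\sigma)\mapsto\widetilde{Q}_\alpha(\rho\|\sigma)$ for $\alpha\in[\tfrac12,1)$ (hence concavity of $\sigma_B\mapsto f_\alpha(\sigma_B)$), which is exactly the property underlying the data-processing inequality for $\widetilde{D}_\alpha$ in this range~\cite{frank_monotonicity_2013,beigi_sandwiched_2013}. The main obstacle is not the algebra but the regularity and support bookkeeping: Lemma~\ref{lem:derivative_tilde_Q} requires $A(t)>0$ strictly, so the derivative identity is immediate only when $\sigma^\star_B$ (and the whole segment) is positive definite, and $\hat{\sigma}^\star_B\,(\sigma^\star_B)^{-\alpha}=Z^{\alpha}\id_B$ holds only on $\mathrm{supp}(\sigma^\star_B)$. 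I would dispose of this by first restricting to $\mathrm{supp}(\sigma^\star_B)$ (noting that $\widetilde{Q}_\alpha(\rho_{AB}\|\id_A\otimes\sigma_B)$ is insensitive to components of $\sigma_B$ outside the relevant marginal) and then removing the strict-positivity assumption by the limiting argument $\sigma^\star_B+\epsilon\,\id_B$, $\epsilon\to0$, relying on the uniform-convergence estimate already established inside the proof of Lemma~\ref{lem:derivative_tilde_Q}. Concavity then upgrades this interior stationarity to a genuine global maximum over all of $\cD(B)$.
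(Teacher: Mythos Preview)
Your approach is essentially the same as the paper's: invoke joint concavity of $\widetilde Q_\alpha$ for $\alpha\in[\tfrac12,1)$ to reduce to checking first-order stationarity at $\sigma^\star_B$, then use Lemma~\ref{lem:derivative_tilde_Q} together with the explicit form \eqref{def:tilde_sigma} to see that the directional derivative vanishes (the paper writes the key identity as $\tr_A(\rho_{AB}^\alpha)(\sigma^\star_B)^{-\alpha}\propto \id_B$ on the support, exactly your $\hat\sigma^\star_B(\sigma^\star_B)^{-\alpha}=Z^\alpha\id_B$). The only notable difference is in the support bookkeeping: the paper first treats $\rho_{AB}>0$ (so $\sigma^\star_B>0$ automatically and paths through it stay strictly positive), and in the rank-deficient case argues via DPI that the optimizer satisfies $\sigma_B\ll\rho_B$ and that $\ker(\id_A\otimes\sigma^\star_B)\subseteq\ker(\rho_{AB})$, which lets one restrict everything to the support of $\id_A\otimes\sigma^\star_B$ and replay the full-rank argument---no additional $\epsilon$-perturbation of $\sigma^\star_B$ is needed once that restriction is made.
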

\begin{proof} 
First consider the case $\rho_{AB}>0$ for simplicity; we return to the rank-deficient case below. 
Since $(\rho,\sigma) \mapsto \widetilde{Q}_{\alpha}(\rho\| \sigma)$ is jointly concave~\cite{frank_monotonicity_2013, beigi_sandwiched_2013}, the function $ f_{\alpha} :  \mathcal{D}(B)\ni \sigma_B \mapsto {\widetilde Q_\alpha(\rho_{AB} \| \id_A \otimes \sigma_B)}$ is concave. As $ \mathcal{D}(B)$ is a convex set, it suffices to show that $f_\alpha$ has an extreme point at $\sigma^\star_B$ (which is then also a global maximum). 
Observe that $\sigma^\star_B>0$ by definition, and therefore all states $\sigma_B(t)$ along arbitrary paths of states through $\sigma_B(0)=\sigma^\star_B$ have full rank for all $t$ sufficiently close to zero.  
Thus, we may use Lemma~\ref{lem:derivative_tilde_Q} to compute the derivative along any such path and find  
\begin{align}
\frac{d}{dt}\Bigr|_{\substack{t=0}}  \widetilde{Q}_{\alpha}(\rho_{AB}||\id_A \otimes \sigma_B(t)) 
&= (1-\alpha) \, \Re \,  \tr \, \rho_{AB}^{\alpha} \left(\id_A \otimes \sigma^\star_B\right)^{-\alpha}  \left(\id_A \otimes \frac{d}{dt}\Bigr|_{\substack{t=0}} \sigma_{B}(t) \right)\\
&= (1-\alpha) \, \Re \, \tr  \left( \tr_A \left(\rho_{AB}^{\alpha}\right) (\sigma^\star_B)^{-\alpha}  \frac{d}{dt}\Bigr|_{\substack{t=0}} \sigma_{B}(t)  \right)  
=0 \, .
\end{align} 
Therefore $\sigma^\star_B$ is the optimizer in this case. 

For $\rho_{AB}$ not strictly positive, we can restrict the set of marginal states $\sigma_B$ to the support of $\sigma^\star_B$ and replay the above argument. 
To see this, first observe that the support of $\sigma^\star_B$ is the same as that of $\rho_{B}$. 
Furthermore, as noted in~\cite{muller-lennert_quantum_2013}, the DPI for $\widetilde{D}_{\alpha}$ implies that the maximum of  $f_{\alpha}$ is always attained at a density matrix $\sigma^{\star}_B$ satisfying $\sigma^{\star}_B\ll \rho_B$.
Therefore, we can restrict the domain of the function $f_{\alpha}$ to the set $\mathcal{P}(B):=\{\sigma_B \in \cD(B): \sigma_B \ll \sigma^\star_B\}$. 
Now observe that $\ker (\id_A\otimes\sigma^\star_B)\subseteq \ker (\rho_{AB})$.
For any $\ket{\psi}_B$ we have $\bra{\psi}\rho_B\ket{\psi}_B=\sum_k \bra{k}_A \bra{\psi} _B \rho_{AB}\ket{k}_A \ket{\psi}_B$. 
By positivity of $\rho_{AB}\geq 0$, each $\ket{\psi}_B\in \ker(\sigma^\star_B)=\ker(\rho_B)$ leads to a set of states $\ket{k}_A\otimes \ket{\psi}_B\in \ker(\rho_{AB})$.
This implies that projecting  $\rho_{AB}$ to the support of $\id_A\otimes \sigma^\star_B$ has no effect on $\widetilde Q_{\alpha}$. 
Hence, we can restrict all operators in the problem to this subspace, where again all states in $\mathcal P(B)$ sufficiently close to $\sigma^\star_B$ have full rank.
\end{proof}

\section{Optimality condition for pretty good measures via semidefinite programming}
\label{app:sdp}

Here we derive the optimality condition for pretty good measures via weak duality of semidefinite programs. 
In terms of fidelity and pretty good fidelity, the optimality condition in \eqref{eq:aquival_dual_picture} reads 
\begin{align}
\label{eq:fpgopt}
\fpg(\tau_{AC},\id_A\otimes \sigma^\star_C)=\sup_{\sigma\in \cD(C)} F(\tau_{AC},\id_A\otimes \sigma_C),
\end{align}
where $ \sigma^\star_C$ is as in \eqref{def:tilde_sigma} with $\alpha=\nicefrac 12$. 
Lemma~\ref{lem:eq_cond_ALT} implies that $[\tau_{AC},\id_A\otimes \sigma^\star_C]=0$ is necessary for \eqref{eq:fpgopt} to hold. 
Sufficiency, meanwhile, is the statement that $\sigma^\star_C$ is the optimizer on the righthand side. 
We can show this by formulating the optimization as a semidefinite program and finding a matching upper bound using the dual program. 

In particular, following~\cite{watrous_semidefinite_2009}, the optimal value of the (primal) semidefinite program
\begin{align}
\begin{array}{r@{\,\,}rl}
\gamma=&\sup & \tr \, W_{ACA'C'}\tau_{ACA'C'}\\
&\text{s.t.} & \tr_{A'C'} W_{ACA'C'}\leq \id_A\otimes \sigma_C\\
&& \tr \,\sigma_C\leq 1\\
&& W_{ACA'C'},\sigma_C \geq 0 \, ,
\end{array}
\end{align}
satisfies $\gamma=\sup_{\sigma\in \cD(C)} F(\tau_{AC},\id_A\otimes \sigma_C)^2$.
Here $A'\simeq A$, $C'\simeq C$, and we take $\tau_{ACA'C'}$ to be the canonical purification of $\tau_{AC}$ as in Section~\ref{sec:pgf}. 
Using Watrous's general form for semidefinite programs we can easily derive the dual, which turns out to be  
\begin{align}
\begin{array}{r@{\,\,}rl}
\beta=&\inf & \mu \\
&\text{s.t.} & Z_{AC}\otimes \id_{A'C'}\geq \tau_{ACA'C'}\\
&&\mu\id_C\geq\tr_A Z_{AC}\\
&& \mu,Z_{AC}\geq 0 \, .
\end{array}
\end{align}


By weak duality $\gamma\leq \beta$, but the following choice of $\mu$ and $Z_{AC}$ gives $\beta=\fpg(\tau_{AC},\id_A\otimes \sigma^\star_C)^2$ and therefore \eqref{eq:fpgopt}:
\begin{align}
\mu^\star = \left(\tr\, \sqrt{\tau_{AC}}\sqrt{\id_A\otimes \sigma^\star_C}\right)^2\quad \text{and} \quad
Z_{AC}^\star=\tr\left(\sqrt{\tau_{AC}}\sqrt{\id_A\otimes \sigma^\star_C}\right){\tau_{AC}^{\nicefrac12}}\left(\id_A\otimes \sigma^\star_C\right)^{-\nicefrac12}.
\end{align}
Here the inverse of $\id_A\otimes\sigma^\star_C$ is taken on its support.  To see that the first feasibility constraint is satisfied, start with the operator inequality 
\begin{align}
\id_{ACA'C'} \tr \, \sqrt{\tau_{AC}}\sqrt{\id_A\otimes \sigma^\star_C}\geq \left(\tau_{AC}^{\nicefrac14}\left(\id_A\otimes \sigma^\star_C\right)^{\nicefrac14} \otimes \id_{A'C'} \right)\Omega_{ACA'C'}\left(\tau_{AC}^{\nicefrac14}(\id_A\otimes \sigma^\star_C)^{\nicefrac14} \otimes \id_{A'C'} \right),
\end{align}
which holds because the righthand side is the canonical purification of the positive operator $\sqrt{\tau_{AC}}\sqrt{\id_A\otimes \sigma^\star_C}$ and the trace factor on the left is its normalization. 
Conjugating both sides by $\tau_{AC}^{\nicefrac14}(\id_A\otimes \sigma^\star_C)^{-\nicefrac14} \otimes \id_{A'C'}$ preserves the positivity ordering and gives 
\begin{align}\label{eq:feasibility_constraint}
\tr\left(\sqrt{\tau_{AC}}\sqrt{\id_A\otimes \sigma^\star_C}\right) \tau_{AC}^{\nicefrac12}(\id_A\otimes \sigma^\star_C)^{-\nicefrac12}\otimes \id_{A'C'} \geq \left( \tau_{AC}^{\nicefrac12} \otimes \id_{A'C'} \right)\Omega_{ACA'C'} \left( \tau_{AC}^{\nicefrac12}\otimes \id_{A'C'} \right) \, ,
\end{align}
where we used that $\ker \left(\id_A\otimes\sigma^\star_C \right) \subseteq \ker (\tau_{AC})$ (just as in the proof of Lemma~\ref{lem:alphasufficiency}), ensuring that $\tau_{AC}(\id_A\otimes \sigma^\star_C)^{-1} (\id_A\otimes \sigma^\star_C)=\tau_{AC}$. Note that inequality~\eqref{eq:feasibility_constraint} shows that $Z^\star_{AC}\otimes \id_{A'C'}\geq \tau_{ACA'C'}$. Meanwhile, the second constraint is satisfied (with equality in the case where  $\sigma^\star_C$ has full rank) because direct calculation shows that $\tr_A {\tau_{AC}^{\nicefrac12}}(\id_A\otimes \sigma^\star_C)^{-\nicefrac12}\leq\id_C \, \tr\, \sqrt{\tau_{AC}}\sqrt{\id_A\otimes \sigma^\star_C}\,$.

\end{appendices}


 \printbibliography[heading=bibintoc,title=References]

\end{document}